\newtheorem{remark}{Remark}
\DeclarePairedDelimiter\floor{\lfloor}{\rfloor}
\newtheorem{proposition}{Proposition}
\newtheorem{corollary}{Corollary}
\newtheorem{theorem}{Theorem}
\newcolumntype{d}[1]{D{.}{\cdot}{#1} }
\newcommand{\sir}{\mathrm{SIR}}
\newcommand{\Pb}{\mathbb{P}}
\newcommand{\Eb}{\mathbb{E}}
\newcommand{\Lc}{\mathcal{L}}
\newcommand{\black}{\textcolor{black}}
\DeclareMathOperator*{\R}{\mathbb{R}}
\DeclareMathOperator*{\U}{\mathcal{U}}
\newcommand*{\acro}[3][]{\newacronym[#1]{#2}{#2}{#3}}
\begin{document} 
\title{Mobility in the Sky: Performance  and Mobility Analysis for Cellular-Connected UAVs}

\author{\IEEEauthorblockN{Ramy Amer\IEEEauthorrefmark{1},			
Walid Saad\IEEEauthorrefmark{2},
Nicola Marchetti\IEEEauthorrefmark{1}}\\
\IEEEauthorblockA{\IEEEauthorrefmark{1}CONNECT, Trinity College, University of Dublin, Ireland}\\
\IEEEauthorblockA{\IEEEauthorrefmark{2}Wireless@VT, Bradley Department of Electrical and Computer Engineering, Virginia Tech, Blacksburg, VA, USA}\\
\IEEEauthorblockA{email:\{ramyr, nicola.marchetti\}@tcd.ie, walids@vt.edu}}

 \author{
 Ramy Amer,~\IEEEmembership{Student~Member,~IEEE,} Walid~Saad,~\IEEEmembership{Fellow,~IEEE,} 
~and~Nicola~Marchetti,~\IEEEmembership{Senior~Member,~IEEE}
 \thanks{The material in this paper is presented in part to IEEE WCNC 2019 \cite{amer2020caching}.}
\thanks{Ramy Amer and and Nicola~Marchetti are with CONNECT Centre for Future Networks, Trinity College Dublin, Ireland. Emails:\{ramyr, nicola.marchetti\}@tcd.ie.}
\thanks{Walid Saad is with Wireless@VT, Bradley Department of Electrical and Computer Engineering, Virginia Tech, Blacksburg, VA, USA. Email: walids@vt.edu.}
\thanks{This publication has emanated from research conducted with the financial support of Science Foundation Ireland (SFI) and is co-funded under the European Regional Development Fund under Grant Number 13/RC/2077, and the U.S. National Science Foundation under Grants CNS-1836802 and IIS-1633363.}
}
\maketitle
\begin{abstract}
Providing connectivity to unmanned aerial vehicle-user equipments (UAV-UEs), such as drones or flying taxis, is a major challenge for tomorrow's cellular systems. In this paper, the use of coordinated multi-point (CoMP) transmission for providing seamless connectivity to UAV-UEs is investigated. In particular, a network of clustered ground base stations (BSs) that cooperatively serve a number of UAV-UEs is considered. Two scenarios are studied: scenarios with static, hovering UAV-UEs and scenarios with mobile UAV-UEs. Under a maximum ratio transmission, a novel framework is developed and leveraged to derive upper and lower bounds on the UAV-UE coverage probability for both scenarios. Using the derived results, the effects of various system parameters such as collaboration distance, UAV-UE altitude, and UAV-UE velocity on the achievable performance are studied. Results reveal that, for both static and mobile UAV-UEs, when the BS antennas are tilted downwards, the coverage probability of a high-altitude UAV-UE is upper bounded by that of ground users regardless of the transmission scheme. Moreover, for low signal-to-interference-ratio thresholds, it is shown that CoMP transmission can improve the coverage probability of UAV-UEs, e.g., from $28\%$ under the nearest association scheme to $60\%$ for a collaboration distance of \SI{200}{m}. Meanwhile, key results on mobile UAV-UEs unveil that not only the spatial displacements of UAV-UEs but also their vertical motions affect their handover rate and coverage probability. In particular, UAV-UEs that have frequent vertical movements and high direction switch rates are expected to have low handover probability and handover rate. Finally, the effect of the UAV-UE vertical movements on its coverage probability is marginal if the UAV-UE retains the same mean altitude. 
%

\end{abstract}
\vspace{-0.6 cm}
\begin{IEEEkeywords}
Cellular-connected UAVs, drones, CoMP transmission, 3D mobility, handover rate. 
\end{IEEEkeywords}
\vspace{-0.6 cm}
\section{Introduction}	
The past few years have witnessed a tremendous increase in the use of \acp{UAV}, popularly called drones, in many applications, such as aerial surveillance, package delivery, and even flying taxis \cite{8473483,saad2019vision,kishk2019capacity,kishk20193,eldosouky2019drones}. Enabling such \ac{UAV}-centric applications requires ubiquitous wireless connectivity that can be potentially provided by the pervasive wireless cellular network \cite{8660516} and \cite{8533634}. However, in order to operate  cellular-connected \acp{UAV} using existing wireless systems, one must address a broad range of challenges that include interference mitigation, reliable communications, resource allocation, and mobility support \cite{8470897}. Next, we review some of the works relevant to the cellular-connected \ac{UAV}-enabled networks.  
\vspace{-0.4 cm}
\subsection{State of the Art and Prior Works}
\vspace{-0.2 cm}
Recently, cellular-connected \acp{UAV} have received significant attention, whereby \acp{UAV} as new \acp{UE} are integrated into the cellular network in order to ensure reliable and secure connectivity for the operations of UAV systems. However, it has been established that the dominance of \ac{LoS} links makes inter-cell interference a critical issue for cellular systems with hybrid terrestrial and aerial \acp{UE}. In this regard, extensive real-world simulations and fields trials in \cite{8470897,qualcomm2017unmanned,lin2018sky,van2016lte} have shown that a UAV-UE, in general, has poorer performance than a \ac{GUE}. Due to the down-tilted \ac{BS} antennas, it is found that \acp{UAV} at \SI{40}{m} and higher, will be eventually served by the side-lobes of the \ac{BS} antennas, which have reduced antenna gain compared to the corresponding main-lobes. However, UAV-UEs at \SI{40}{m} and above experience favorable free-space propagation conditions. Interestingly, the work in \cite{lin2018sky} showed that the   free-space propagation can make up for the \ac{BS} antenna side-lobe gain reduction. However, this merit of such a favorable \ac{LoS} channel that UAV-UEs enjoy vanishes at high altitudes and turns to be one of their key limiting factors. This is because the free-space propagation also leads to stronger \ac{LoS} interfering signals. Eventually, UAV-UEs at high altitudes potentially exhibit poorer communication and coverage compared to \acp{GUE} \cite{8470897,qualcomm2017unmanned,lin2018sky,van2016lte,azari2017coexistence}. 

While the works in \cite{8470897,qualcomm2017unmanned,lin2018sky,van2016lte,azari2017coexistence}  explored the feasibility of providing cellular connectivity for \acp{UAV}, they did not envision new techniques to improve their performance. In particular, \acp{UAV}, at high altitudes, have limited coverage and connectivity  due to the encountered \ac{LoS} interference and reduced antenna gains. Moreover, their cell association will be essentially driven by the side-lobes of \ac{BS} antennas, which will lead to more handovers and handover failures for mobile UAV-UEs \cite{lin2018sky}. This necessitates the need to have sky-aware cellular networks that can seamlessly cover high altitudes UAV-UEs  and support their inevitable mobility. Next, we review some recently-adopted techniques that aimed to provide reliable connectivity to the UAV-UEs.


%


Recently, various approaches have been proposed in \cite{8756296,d2019cell,rahmati2019energy,cherif2019downlink,liu2018multi} in order to improve the cellular connectivity for UAVs using, e.g., massive \ac{MIMO}, \ac{mmWave}, and beamforming. For instance, in \cite{8756296}, we proposed a MIMO conjugate beamforming scheme that can improve the cellular connectivity for UAV-UEs and enhance the system spectral efficiency. Moreover, the authors in \cite{cherif2019downlink} incorporated directional beamforming at  aerial \acp{BS} to alleviate the strong \ac{LoS} interference seen by their served UAV-UEs. However, while interesting, the works in \cite{8756296,d2019cell,rahmati2019energy,cherif2019downlink,liu2018multi} only considered scenarios of static UAV-UEs. Moreover, they did not consider the use of \ac{CoMP} transmission for UAV-UEs, which is a prominent interference mitigation tool that can diminish the effect of \ac{LoS} interference. 


Unlike the static \ac{UAV} assumptions in \cite{azari2017coexistence,8756296,d2019cell,rahmati2019energy,cherif2019downlink,liu2018multi}, the study of mobile \acp{UAV} has been conducted in  \cite{85317111,zhang2019trajectory,8654727,8421028,8671460,8681266,8692749,abs-1804-04523,HCC:3325421.3329770}.  Prior works in the literature followed two main directions pertaining to trajectory design for mobile \acp{UAV}. The first line of work focuses on deterministic trajectories, whereby a \ac{UAV} is assumed to travel between two deterministic, possibly known, locations \cite{85317111,zhang2019trajectory,8654727}. This type of trajectories can be used   for path planning and mission-related metrics' optimization, e.g., mission time and achievable rates. For instance, the authors in \cite{85317111} studied the problem of trajectory optimization for a cellular-connected \ac{UAV} flying from an initial location to a final destination. 
Moreover, the work in \cite{8654727} proposed an interference-aware path planning scheme for a network of cellular-connected \acp{UAV} based on deep reinforcement learning.

The second line of work in \cite{8421028,8671460,8681266} considers stochastic trajectories in which the movements of \acp{UAV} are characterized by means of stochastic processes. This type of trajectories is usually adopted in the study of communication and mobility-related metrics such as coverage probability and handover rate.
For example, in \cite{8421028}, the authors proposed a mixed random mobility model that characterizes the movement process of \acp{UAV} in a finite \ac{3D} cylindrical region. The authors characterized the \ac{GUE} coverage probability in a network of one static serving aerial \ac{BS} and multiple mobile interfering aerial \acp{BS}. The authors extended their work in \cite{8671460} such that both serving and interfering aerial \acp{BS} are mobile. Meanwhile, the authors in \cite{8681266} showed that an acceptable \ac{GUE} coverage can be sustained if the aerial \acp{BS} move according to certain stochastic trajectory processes. However, while interesting, these mobility models can only describe the motions of aerial \acp{BS} deployed in a bounded cylindrical region in space. In contrast, cellular-connected UAV-UEs such as flying taxis and delivery drones would have very long trajectories that cross multiple areas served by different \acp{BS}. 
%
%

Ensuring reliable connectivity for such mobile UAV-UEs is of paramount importance for the control and operations of UAV systems. In this regard, the mobility performance of cellular-connected UAVs has been studied in recent works  \cite{8692749,abs-1804-04523,HCC:3325421.3329770}. In \cite{8692749}, the authors quantified the impact
of handover on the UAV-UE throughput, assuming that no payload data is received during the handover procedures. Meanwhile, in \cite{abs-1804-04523}, based on system-level simulations, it is revealed that high handover rate is encountered when the UAV-UE moves through the nulls between side-lobes of the \ac{BS} antennas. Moreover, based on experimental trials, in \cite{HCC:3325421.3329770}, the authors showed that under the strongest received power association, drones are subject to frequent handovers once the typical flying altitude is reached. However, the results presented in these works are based on simulations and measurements. 

While there exist some approaches in the literature to improve the cellular connectivity for UAV-UEs \cite{8756296,d2019cell,rahmati2019energy,cherif2019downlink,liu2018multi}, none of these works studied the role of \ac{CoMP} transmission as an effective interference mitigation tool to support the UAV-UEs. Moreover, these works only considered scenarios of static UAV-UEs. Furthermore, while the authors in \cite{8692749,abs-1804-04523,HCC:3325421.3329770} studied the performance of mobile UAV-UEs, their results were based on system simulations and measurement trials. Particularly, a rigorous analysis for mobile UAV-UEs to quantify important performance metrics such as coverage probability and handover rate is still lacking in the current state-of-the-art. \emph{To our best knowledge, this paper provides the first rigorous analysis of CoMP transmission for both static and \ac{3D} mobile UAV-UEs, where a novel \ac{3D} mobility model is also provided.} 

\vspace{-0.5 cm}
\subsection{Contributions}			
\vspace{-0.2 cm}
The main contribution of this paper is a novel framework that leverages \ac{CoMP} transmissions for serving cellular-connected \acp{UAV}, and develops a novel mobility model that effectively captures the \ac{3D} movements of UAV-UEs. We propose a \ac{MRT} scheme aiming to maximize the desired signal at the intended UAV-UE, and, hence, the performance of cellular communication links for the UAV-UEs can be improved. In particular, we consider a network of disjoint clusters in which \acp{BS} in one cluster collaboratively serve one UAV-UE within the same cluster via coherent \ac{CoMP} transmission. For this network, we consider two key scenarios, namely, static and mobile UAV-UEs. Using Cauchy's inequality and Gamma approximations, we develop a novel framework that is then leveraged to derive tight \ac{UB} and \ac{LB} on the UAV-UE coverage probability for both scenarios. Moreover, for mobile UAV-UEs, we analytically characterize the handover rate, and handover probability based on a novel \ac{3D} mobility model. We further quantify the negative impact of the UAV-UEs' mobility on their achievable performance. 

Our results reveal that the achievable performance of UAV-UEs heavily depends on the UAV-UE altitude, UAV-UE velocity, and the collaboration distance, i.e., the distance within which the UAV-UE is cooperatively served from ground \acp{BS}. Moreover, while allowing \ac{CoMP} transmission substantially improves the UAV-UEs' performance, it is shown that their performance is still upper bounded by that of \acp{GUE} due to the down-tilt of the \ac{BS} antennas and the encountered \ac{LoS} interference. Additionally, results on mobile UAV-UEs unveil that the spatial displacements of UAV-UEs jointly with their vertical motions affect their handover rate and handover probability. Moreover, while the UAV-UE spatial movements considerably impact its coverage probability (due to handover), the effect of the UAV-UE vertical displacements is marginal if the UAV-UE fluctuates around the same mean altitude. 
\emph{Overall, cooperative transmission is shown to be particularly effective for high altitude UAV-UEs that are   susceptible to adverse interference conditions, which is the case in a variety of drone applications.}

%
%

The rest of this paper is organized as follows. Section II and Section III present, respectively, the system model and the coverage probability analysis for static UAV-UEs. Section IV develops a novel \ac{3D} mobility model and  Section V studies the performance of \ac{3D} mobile UAV-UEs. Numerical results are presented in Section VI and conclusions are drawn in Section VII.

\begin{figure}[!t]
\vspace{-0.9 cm}	
    \centering
    \subfigure[Illustration of cooperative transmission]
    {
        \includegraphics[width=2.3in]{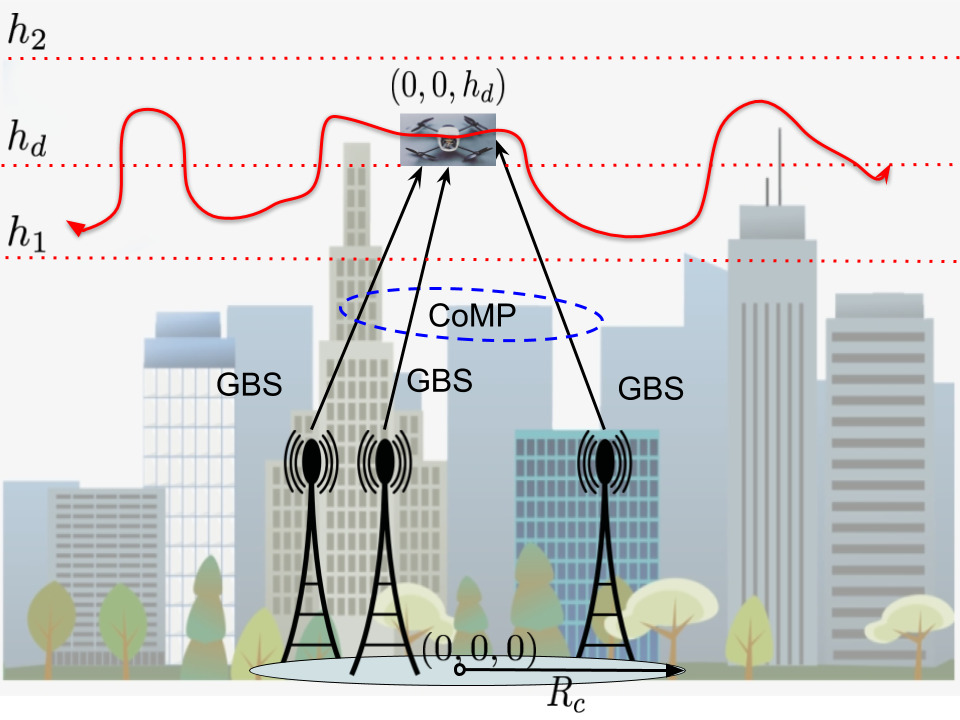}		
        \label{cov_prob_vs_theta11}
    }
    \subfigure[Snapshot of a cluster-centric UAV-UE topology]	
     {\hspace*{0.1 in}
    {
        \includegraphics[width=2.5in]{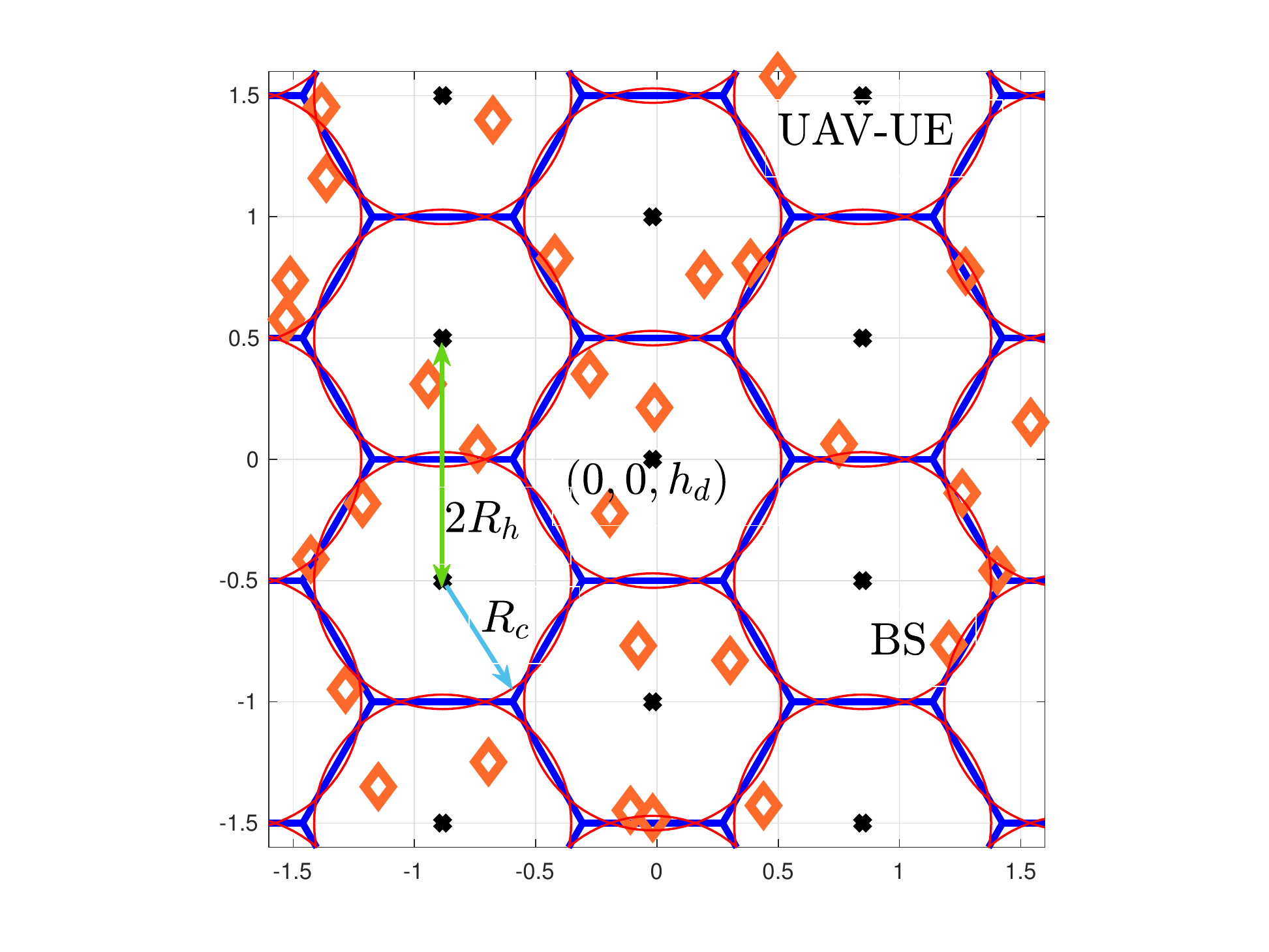}		
        \label{cov_prob_vs_theta12}
    }}
    \caption{Illustration of the proposed system model where \acp{BS} cooperatively serve high-altitude UAV-UEs via CoMP transmission. UAV-UEs can be either hovering at a fixed altitude $h_d$ or flying within minimum and maximum altitudes $h_1$ and $h_2$, respectively. In (b), the clusters are defined by a hexagonal grid, wherein BSs (orange diamonds) are distributed according to a homogeneous PPP and the UAV-UEs (black stars) are hovering above the centers of disjoint clusters.}		
    \label{system-model-comp}
    \vspace{-0.7 cm}
\end{figure}


\vspace{-0.5 cm}
\section{System Model}	
\vspace{-0.2 cm}
\label{system-model}
We consider a terrestrial cellular network in which \acp{BS} are distributed according to a \ac{2D} homogeneous \ac{PPP}  $\Phi_b=\{ b_i \in \mathbb{R}^2, \forall i \in \mathbb{N}^+ \}$ with intensity $\lambda_b$. All \acp{BS} have the same transmit power $P_t$, and are deployed at the same height $h_{\rm BS}$. We consider a number of high altitude cellular-connected UAV-UEs that can be either static or mobile based on the application. Particularly, the UAV-UEs are either hovering or moving at altitudes higher than the BS heights. We consider a cluster-centric UAV-UE model in which \acp{BS} are grouped into disjoint clusters modeled using a hexagonal grid with an inter-cluster center distance equal to $2R_h$, see Fig.~\ref{system-model-comp}. The area of each cluster is hence given by $A=2\sqrt{3}R_h^2$. For analytical convenience, we approximate the cluster area to a circle with the same area, i.e., with collaboration distance $R_c$ where $\pi R_c^2=2\sqrt{3}R_h^2$, and $R_c = \sqrt{\frac{2\sqrt{3}}{\pi}}R_h$. \emph{\acp{BS} belonging to the same cluster can cooperate
to serve one UAV-UE within their cluster to mitigate the effect of \ac{LoS} interference and, hence, enhance the UAV-UE cellular connectivity.}


\vspace{-0.3 cm}
\subsection{Channel Model}
\vspace{-0.1 cm}
We consider a wireless channel that is characterized by both large-scale and small-scale fading. For the large-scale fading, the channel between \ac{BS} $i$ and an arbitrary UAV-UE is described by the \ac{LoS} and \ac{NLoS} components, which are considered separately along with their probabilities of occurrence \cite{ding2016performance}. This assumption is apropos for such \ac{GTA} channels that  often exhibit \ac{LoS} communication (e.g., see \cite{azari2017coexistence} and \cite{8713514}). 

For small-scale fading, we adopt a Nakagami-$m_v$ model as done in \cite{azari2017coexistence} for the channel gain, whose \ac{PDF} is given by:
$f(\omega) = \frac{2m_v^{m_v} \omega^{2m_v-1}}{\Gamma(m_v)} e^{-m_v \omega^2}$,  
where $m_v$, $v\in \{l,n \}$, is the fading parameter which is assumed to be an integer for analytical tractability, with $m_l>m_n$. In the special case when $m_n = 1$, Rayleigh fading is recovered with an exponentially distributed instantaneous power, which can be used for the performance evaluation of ground users. 
Given that $\omega \sim $ Nakagami$(m_v)$, it directly follows that the channel gain power  $\chi=\omega^2 \sim \Gamma(m_v,1/m_v)$, where $\Gamma(K,\theta)$ is a Gamma \ac{RV} with $K$ and $\theta$  denoting the shape and scale parameters, respectively. Hence, the \ac{PDF} of channel power gain distribution will be:  
$f(\chi) = \frac{m_v^{m_v} \chi^{m-1}}{\Gamma(m_v)} {\rm exp}\big(-m_v\chi\big)$.  

\ac{3D} blockage is characterized by the fraction $a$ of the total land area occupied by buildings, the mean number of buildings $\eta$ per \SI{}{km}$^2$, and the height of buildings modeled by a Rayleigh \ac{PDF} with a scale parameter $c$. Hence, the probability of having a \ac{LoS} communication from a \ac{BS} at horizontal-distance $r_i$ from an arbitrary UAV-UE  is given, similar to  \cite{8692749} and \cite{8713514}, as: 
\begin{align}
\label{prob-los}
\Pb_{l}(r_i) = \prod_{n=0}^{m}\Bigg[1 - {\rm exp}\Big(- \frac{\big(h_{\rm BS} + \frac{h(n+0.5)}{m+1}\big)^2}{2c^2}\Big) \Bigg], 
\end{align}		
where $h$ represents the difference between the UAV-UE altitude and \ac{BS} height, which depends on whether the UAV-UE is static or mobile, and $m=\floor{\frac{r_i\sqrt{a\eta}}{1000}-1}$. Different terrain structures and environments can be considered by varying the tuple $(a,\eta,c)$. As previously discussed, the performance of relatively high-altitude UAV-UEs is limited by the LoS interference they encounter and reduced serving antenna gain (from the antennas' side-lobes). We hence propose a multi-\ac{BS}  cooperative transmission scheme that mitigates  inter-cell interference and, thus, improves the performance of high-altitude UAV-UEs. 
%
Hence, the antenna gain plus path loss for each component, i.e., \ac{LoS} and \ac{NLoS}, will be		
\begin{align}
\label{zeta-fading} 
\zeta_v(r_i) &= A_v G_s d_i^{-\alpha_v}  = A_v G_s \big(r_i^2 + h^2\big)^{-\alpha_v/2},
\end{align}
where $d_i$ is the communication link distance, $v\in\{l,n\}$, $\alpha_{l}$ and $\alpha_{n}$ are the path loss exponents for the  \ac{LoS} and NLoS links, respectively, with $\alpha_{l}<\alpha_{n}$, and $A_{l}$ and  $A_{n}$ are the path loss constants at the reference distance $d_i = \SI{1}{m}$ for the \ac{LoS} and \ac{NLoS}, respectively. $G_s$ is the antenna directivity gain of  side-lobes between \ac{BS} $i$ and an arbitrary  UAV-UE since, at such high altitudes,  UAV-UEs are served by the side-lobes of \ac{BS} antennas \cite{lin2018sky}. The \ac{BS} vertical antenna pattern is directional and typically down-tilted to account for \acp{GUE}. Given this setup, it is reasonable to assume that UAV-UEs are always served from the antennas' side-lobes while the \acp{GUE} are served from the antennas' main-lobes with antenna gains $G_s$ and $G_m$, respectively, where $G_s \ll G_m$. 

Having defined our system model, next, we will consider two scenarios: Static UAV-UEs and mobile UAV-UEs. For each scenario, we will characterize the coverage probability of high altitude UAV-UEs that are collaboratively served from \acp{BS} within their cluster. The performance of collaboratively-served UAV-UEs is then compared to their terrestrial counterparts and to UAV-UEs under the nearest association scheme. Moreover, we will characterize the handover rate  for mobile UAV-UEs and quantify the negative impact of mobility on their achievable performance.

\vspace{-0.1 cm}
\section{Coverage Probability of Static UAV-UEs}		
\label{cov-prob-static}
Hovering drones can provide appealing solutions for a wide range of applications such as traffic control and surveillance \cite{austin2011unmanned}. We here assume static UAV-UEs that hover at a fixed altitude $h_d$, where $h_d>h_{\rm BS}$. Hence, we set $h=h_d-h_{\rm BS}$ in (\ref{prob-los}) and (\ref{zeta-fading}). We also assume that  \acp{BS} within one cluster cooperatively serve one UAV-UE whose projection on the ground is at the cluster center. Note that assuming such a cluster-center UAV-UE is mainly done for tractability, but its performance can be seen as an \ac{UB} on the performance of a randomly located UAV-UE inside the cluster \cite{7880694}. Given that a \ac{PPP} is translation invariant with respect to the origin, for simplicity, we conduct the coverage analysis for a UAV-UE located at the origin in $\R^2$, referred to as the \emph{typical UAV-UE} \cite{haenggi2012stochastic}. Next, we first characterize the serving distance distribution, and then, we employ it to derive upper and lower bounds on the coverage probability of static UAV-UEs.

\vspace{-0.6 cm}
\subsection{Serving Distance Distributions}
Under the condition of having $\kappa$ serving \acp{BS} in the cluster of interest, the distribution of in-cluster  \acp{BS} will follow a \ac{BPP} \cite{haenggi2012stochastic}. This \ac{BPP} consists of $\kappa$ uniformly and independently distributed \acp{BS} in the cluster. The set of cooperative \acp{BS} is defined as $\Phi_{c} = \{b_i \in \Phi_{b} \cap \mathcal{B}(0, R_c)\}$, where $\mathcal{B}(0, R_c)$ denotes the ball centered at the origin $(0,0)\in\R^2$ with radius $R_c$. Recall that the typical UAV-UE is located at the origin in $\mathbb{R}^2$, i.e., $(0,0,h_d) \in \mathbb{R}^3$.  The \ac{2D} distances from the cooperative \acp{BS} to the typical UAV-UE are represented by $\boldsymbol{R}_{\kappa}= [R_1, \dots, R_{\kappa}]$. Then, conditioning on $\boldsymbol{R}_{\kappa} = \boldsymbol{r}_{\kappa}$, where $\boldsymbol{r}_{\kappa}= [r_1, \dots, r_{\kappa}]$, the conditional joint \ac{PDF} of the serving distances is $f_{\boldsymbol{R}_{\kappa}}(\boldsymbol{r}_{\kappa})$. 
The $\kappa$ cooperative \acp{BS} can be seen as the $\kappa$-closest \acp{BS} to the cluster center from the PPP $\Phi_{b}$. Since the $\kappa$ \acp{BS} are independently and uniformly distributed in the cluster approximated by $\mathcal{B}(0, R_c)$, the \ac{PDF} of the horizontal distance from the origin to \ac{BS} $i$ will be: $ f_{R_i}(r_i)=\frac{2r_i}{R_c^2}$, $0\leq r_i\leq R_c$, 
for any $i \in \mathcal{K}_f = \{1, \dots, \kappa\}$, where $\mathcal{K}_f$ is the set of collaborative \acp{BS} within the ball $\mathcal{B}(0, R_c)$. From the \ac{i.i.d.} property of \ac{BPP}, the conditional joint \ac{PDF} of the serving distances $\boldsymbol{R}_{\kappa}= [R_1, \dots, R_{\kappa}]$ is expressed as
$f_{\boldsymbol{R}_{\kappa}}(\boldsymbol{r}_{\kappa})=\prod_{i=0}^{\kappa} \frac{2r_i}{R_c^2}$. 
 
\vspace{-0.5 cm}
\subsection{\black{Performance of UAV-UEs}}
\label{static-uav-comp} 
Under the condition of having $\kappa$ serving \acp{BS}, the received signal at the UAV-UE will be:	
\begin{align}
\label{rec-pwr}
P &= \underbrace{\sum_{i=1}^{\kappa}  P_v(r_i) \omega_i w_i Y_0}_{\text{desired signal}}+ \underbrace{\sum_{k\in\Phi_{b} \setminus \mathcal{B}(0, R_c)} P_v(u_k) \omega_k w_k Y_k}_{\text{interference}} + 
\quad  Z,
\end{align}
where the first term represents the desired signal from $\kappa$ collaborative \acp{BS} with $P_v^2(r_i) =P_t  \zeta_v(r_i)$, $v \in \{l,n\}$, $\omega_i$ being the Nakagami-$m_v$ fading variable of the channel from \ac{BS} $i$ to the UAV-UE, $w_i$ is the precoder used by \ac{BS} $i$, and $Y_0$ is the channel input symbol that is sent by the cooperating \acp{BS}. The second term represents the inter-cluster interference, whose power is denoted as $I_{\rm out}$, where $Y_j$ is the transmitted symbol from interfering \ac{BS} $j$ and $u_j$ is the horizontal distance between interfering \ac{BS} $j$ and the UAV-UE; $Z$ is a circular-symmetric zero-mean complex Gaussian \ac{RV} that models the background thermal noise.  
%
%
%

As discussed earlier, UAV-UEs exhibit a \ac{LoS} component which becomes dominant at relatively high altitudes. The \ac{LoS} probability in (\ref{prob-los}) represents a delta function that goes from one to zero as $r_i$ increases. This implies that the probability of \ac{LoS} communication from close \acp{BS} is higher than that of remote \acp{BS}. Hence, we consider that the desired signal is dominated by its \ac{LoS} component where $v=l$, $m_v=m_l$, and $P_v(r_i) =\sqrt{P_t}  \zeta_l(r_i)^{0.5}$. However, for the interfering signal, both \ac{LoS} and \ac{NLoS} components exist and, thus, we have: $P_v(u_j) =\sqrt{P_t}  \zeta_v(u_j)^{0.5}$, $v\in\{l,n\}$. This is due to the fact that, as the \ac{LoS} probability decreases with the interfering distance $u_j$, the \ac{LoS} assumption becomes less practical for far but interfering \acp{BS}. 

We assume that the \ac{CSI} is available at the serving \acp{BS}. Hence, \ac{MRT} can be adopted by \acp{BS} to maximize the received power at the typical UAV-UE. For the \ac{MRT}, we have the precoder $w_i$ set as $w_i=\frac{\omega_i^*}{|\omega_i|}$, where $\omega_i^*$ is the complex conjugate of $\omega_i$. Assuming that $Y_0$ and $Y_k$ in (\ref{rec-pwr}) are independent zero-mean \acp{RV} of unit variance, and neglecting the thermal noise, the conditional $\sir$ at the typical UAV-UE will then be:
\begin{align}
\label{nakagami}
\Upsilon_{|\boldsymbol{r}_{\kappa}} &= \frac{ P_t \Big|\sum_{i=1}^{\kappa} \zeta_{l}^{1/2}(r_i) w_i \omega_i \Big|^2}{\sum_{k\in\Phi_{b} \setminus \mathcal{B}(0, R_c)} \big|P_v(u_k) \omega_k w_k\big|^2},
\end{align}
where $\Upsilon_{|\boldsymbol{r}_{\kappa}}$ is conditioned on the number of collaborative \acp{BS} $\kappa$, and on $\boldsymbol{R}_{\kappa} = \boldsymbol{r}_{\kappa}$. 
In (\ref{nakagami}), we have $\Big|\sum_{i=1}^{\kappa} \zeta_{l}^{1/2}(r_i)w_i\omega_i \Big|^2$ representing the square of a weighted sum of $\kappa$ Nakagami-$m_l$ \acp{RV}. Since there is no known closed-form expression for a weighted sum of Nakagami-$m_l$ \acp{RV}, we use the Cauchy-Schwarz's inequality to get an \ac{UB} on a square of weighted sum as follows:

\begin{align}
 \label{cauchy}
\Bigg|\sum_{i=1}^{\kappa} \zeta_{l}^{1/2}(r_i) w_i \omega_i\Bigg|^2=
\Bigg|\sum_{i=1}^{\kappa} \zeta_{l}^{1/2}(r_i) \frac{\omega_i^* \omega_i}{|\omega_i|}\Bigg|^2 &=
 \Bigg(\sum_{i=1}^{\kappa}Q_i\Bigg)^2 
\leq \kappa \Bigg(\sum_{i=1}^{\kappa}Q_i^2\Bigg),
\end{align}
where $Q_i=\zeta_{l}^{1/2}(r_i)\frac{\omega_i^* \omega_i}{|\omega_i|}=\zeta_{l}^{1/2}(r_i)\omega_i$ is a scaled Nakagami-$m_l$ \ac{RV}, and $i \in \mathcal{K}_f$. Since $\omega_i \sim$ Nakagami$(m_l)$, from the scaling property of the Gamma \ac{PDF}, $Q_i^2 \sim \Gamma\big(K_i=m_l,\theta_i=\zeta_{l}(r_i)/m_l\big)$. To get a tractable statistical equivalence of a sum of $\kappa$ Gamma \acp{RV} with different scale parameters $\theta_i$, we adopt the method of sum of Gammas second-order moment match proposed in \cite[Proposition 8]{heath2011multiuser}. It is shown that the equivalent Gamma distribution, denoted as $J \sim \Gamma(K,\theta)$, with the same first and second-order moments has the following parameters: 
\begin{align}
\label{equiv-gamma}
K = \frac{\Big(\sum_i{K_i\theta_i}\Big)^2}{\sum_i{K_i\theta_i^2}}	= \frac{m_l\Big(\sum_i{\zeta_{l}(r_i)}\Big)^2}{\sum_i{\Big(\zeta_{l}(r_i)\Big)^2}} 
\quad \quad \text{and} \quad \quad
\theta = \frac{\sum_i{K_i\theta_i^2}}{\sum_i K_i\theta_i}=\frac{\sum_i{\zeta_{l}(r_i)^2}}{m_l \sum_i  \zeta_{l}(r_i)}.
\end{align}
The accuracy of the Gamma approximation can be easily verified via numerical simulations that are omitted due to space limitations. For tractability, we further upper bound the shape parameter $K$ using the Cauchy-Schwarz's inequality as:
$K   \leq \frac{m_l\kappa\sum_{i} \big(\zeta_l(r_i)\big)^2}{\sum_{i} \big(\zeta_l(r_i)\big)^2} = m_l\kappa$, 
where, by definition, $m_l\kappa$ is integer. We shall also see shortly the tightness of this \ac{UB}. 

Next, we derive \ac{UB} and \ac{LB}  expressions on the UAV-UE coverage probability. Our developed approach is novel in the sense that it adopts the Cauchy-Schwarz's inequality and moment match of Gamma \acp{RV} to characterize an \ac{UB} on the coverage probability, which is difficult to obtain exactly. The UAV-UE coverage probability conditioned on $\boldsymbol{R}_{\kappa} = \boldsymbol{r}_{\kappa}$ is given by  
\begin{align}
\label{pc-rk}
\Pb_{{\rm c}|\boldsymbol{r}_{\kappa}}& \overset{(a)}{\leq}   
\Pb\Big(\frac{\kappa P_t \big(\sum_{i=1}^{\kappa}Q_i\big)^2 }{I_{\rm out}}>\vartheta\Big)
 \overset{(b)}{\approx} \Pb\Big(\frac{\kappa P_tJ}{I_{\rm out}}>\vartheta\Big),
\end{align}
where (a) follows from the Cauchy-Schwarz's inequality, (b) follows from the Gamma approximation and rounding the shape parameter $K=m_l\kappa$, and $\vartheta$ is the $\sir$ threshold. The coverage probability can be obtained as a function of the system parameters, particularly, the Nakagami fading parameter and collaboration distance, as stated formally in the following theorem.
\begin{theorem}
\label{ch5:cov-prob} An \ac{UB} on the coverage probability of UAV-UEs cooperatively served from \acp{BS} within a collaboration distance $R_c$ can be derived as follows:
\begin{align}
\label{cov-prob-theory}		
\Pb_{{\rm c}} = \sum_{\kappa=1}^{\infty} \Pb(n=\kappa) \int_{\boldsymbol{r_\kappa}=\boldsymbol{R_c}}^{\boldsymbol{\infty}} \Pb_{{\rm c}|\boldsymbol{r}_{\kappa}}^l \prod_{i=0}^{\kappa} \frac{2r_i}{R_c^2} \dd{ \boldsymbol{r}_{\kappa}},
\end{align}
where the conditional coverage probability $\Pb_{{\rm c}|\boldsymbol{r}}^l = \lVert e^{\boldsymbol{T}_{K}}\rVert_1$,     $\lVert.\rVert_1$ represents the induced $\ell_1$ norm, and $\boldsymbol{T}_{K}$ is the lower triangular Toeplitz matrix:

\[ \boldsymbol{T}_{K} =
\begin{bmatrix}
    t_{0}       \\
    t_{1}       & t_{0}  \\
    \vdots &  \vdots & \ddots  \\
    t_{K-1}       & \dots & t_{1}  & t_{0} 
\end{bmatrix};
\]
$K=m_l\kappa$, $t_{i} =\frac{(-\varpi)^{i}}{(i)!} \Omega^{(i)}(\varpi)$, $\Omega^{(i)}(\varpi)= \frac{d^{i}}{d \varpi^{i}} \Omega(\varpi)_{|\boldsymbol{r}_{\kappa}}$, $ \Omega(\varpi)_{|\boldsymbol{r}_{\kappa}} =-2\pi \lambda_b\int_{\nu=R_c}^{\infty}\Big(1 -\delta_l\Pb_{l}(\nu) - \delta_n\Pb_{n}(\nu) \Big)\nu\dd{\nu}$, $\delta_l=\Big(1 + \frac{\varpi P_l(\nu)^2 }{m_l} \Big)^{-m_l}$, $\delta_n=\Big(1 + \frac{\varpi P_n(\nu)^2 }{m_n} \Big)^{-m_n}$, and $\varpi = \vartheta/\kappa P_t\theta$. 
%
\end{theorem}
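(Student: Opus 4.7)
The plan is to start from equation~(\ref{pc-rk}) in which the coverage probability has already been upper-bounded, via Cauchy–Schwarz and the second-order moment match of Gamma random variables, by $\Pb(\kappa P_t J / I_{\rm out} > \vartheta)$ with $J \sim \Gamma(K,\theta)$ and $K = m_l \kappa \in \mathbb{N}$. Because $K$ is a positive integer, I would invoke the closed-form CCDF of a Gamma random variable to write
\begin{align*}
\Pb_{{\rm c}|\boldsymbol{r}_\kappa}^l \;=\; \Eb_{I_{\rm out}}\!\left[\,e^{-\varpi I_{\rm out}} \sum_{i=0}^{K-1} \frac{(\varpi I_{\rm out})^{i}}{i!}\,\right],
\end{align*}
with $\varpi = \vartheta/(\kappa P_t \theta)$. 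Interchanging the finite sum with the expectation and applying the identity $\Eb\!\left[I_{\rm out}^{i}\, e^{-\varpi I_{\rm out}}\right] = (-1)^{i} \Lc_{I_{\rm out}}^{(i)}(\varpi)$ converts the problem into evaluating the first $K$ derivatives of the Laplace transform of the out-of-cluster interference at the point $\varpi$.

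The second step is to compute $\Lc_{I_{\rm out}}(\varpi)$ from the PGFL of the homogeneous PPP $\Phi_b$ restricted to $\Rs \setminus \mathcal{B}(0,R_c)$. Splitting each interferer into a LoS contribution (with probability $\Pb_{l}(\nu)$) and an NLoS contribution (with probability $\Pb_{n}(\nu)=1-\Pb_{l}(\nu)$), and averaging the Nakagami-$m_v$ channel power using the closed-form MGF $(1+\varpi P_v^{2}(\nu)/m_v)^{-m_v}$, the standard PPP calculation yields $\Lc_{I_{\rm out}}(\varpi) = \exp\!\big(\Omega(\varpi)_{|\boldsymbol{r}_\kappa}\big)$ with $\Omega(\varpi)$ precisely as in the theorem statement; the factors $\delta_l$ and $\delta_n$ are the two MGFs.

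The crux of the argument is to repackage $\sum_{i=0}^{K-1} \frac{(-\varpi)^{i}}{i!}\,\Lc_{I_{\rm out}}^{(i)}(\varpi)$ into the compact Toeplitz form $\lVert e^{\boldsymbol{T}_K}\rVert_1$. For this I would exploit the standard isomorphism between $K\times K$ lower triangular Toeplitz matrices and truncated power series modulo $z^{K}$: under this correspondence, the first column of $\boldsymbol{T}_K$ encodes the generating polynomial $\sum_{i=0}^{K-1}\frac{(-\varpi z)^{i}}{i!}\Omega^{(i)}(\varpi)$, which is the order-$(K-1)$ Taylor expansion of $\Omega(\varpi(1-z))$ around $z=0$. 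Exponentiating in this truncated algebra therefore produces the power series of $\exp\!\big(\Omega(\varpi(1-z))\big)=\Lc_{I_{\rm out}}(\varpi(1-z))$, whose coefficient of $z^{k}$ equals $\frac{(-\varpi)^{k}}{k!}\Lc_{I_{\rm out}}^{(k)}(\varpi)$. Since all these coefficients are non-negative (they are expectations of non-negative quantities), the induced $\ell_{1}$ norm of $e^{\boldsymbol{T}_K}$ equals the sum of the entries of its first column, which is exactly the series above and therefore equal to $\Pb_{{\rm c}|\boldsymbol{r}_\kappa}^l$. Deconditioning over $f_{\boldsymbol{R}_\kappa}(\boldsymbol{r}_\kappa) = \prod_{i=1}^{\kappa} 2r_i/R_c^2$ and then over the discrete law of $\kappa$ yields (\ref{cov-prob-theory}).

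I expect the third step to be the principal obstacle: the combinatorial identification of the higher-order Faà di Bruno / Bell-polynomial expansion of $(e^{\Omega})^{(k)}$ with the convolution powers that appear when exponentiating a lower-triangular Toeplitz matrix is non-obvious and has to be verified carefully, e.g.\ by induction on $K$ via the Toeplitz–power-series isomorphism. A secondary concern is justifying the interchange of differentiation and integration in $\Omega^{(i)}(\varpi)$, which is legitimate because $\Pb_{l}(\nu)$ vanishes fast and the NLoS exponent satisfies $\alpha_n > 2$, so the tail integral in $\Omega$ is absolutely convergent together with all of its $\varpi$-derivatives of order up to $K-1$.
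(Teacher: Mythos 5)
Your proposal is correct and follows essentially the same route as the paper's proof in Appendix~\ref{ch5:theorem1}: the integer-shape Gamma CCDF converts the conditional coverage into the first $K$ derivatives of the interference Laplace transform via $\Eb[I_{\rm out}^{i}e^{-\varpi I_{\rm out}}]=(-1)^{i}\Lc_{I_{\rm out}}^{(i)}(\varpi)$, the Laplace transform is computed by the PGFL of the \ac{PPP} outside $\mathcal{B}(0,R_c)$ with the LoS/NLoS split and the Nakagami MGFs $\delta_l,\delta_n$, and the result is deconditioned over $f_{\boldsymbol{R}_\kappa}(\boldsymbol{r}_\kappa)$ and the law of $\kappa$. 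The only departure is cosmetic: where the paper cites \cite{8490204} and its recursion $p_i=\sum_{l=0}^{i-1}\frac{i-l}{i}p_l t_{i-l}$ to reach the compact form $\lVert e^{\boldsymbol{T}_K}\rVert_1$, you verify that step directly through the lower-triangular-Toeplitz/truncated-power-series isomorphism together with the non-negativity of the coefficients $\frac{(-\varpi)^{k}}{k!}\Lc_{I_{\rm out}}^{(k)}(\varpi)$, which is a correct, self-contained substitute (indeed the cited recursion is exactly the exponential of a power series in that algebra), so the step you flagged as the principal obstacle is in fact already resolved by your own argument.
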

\begin{proof}
Please see Appendix \ref{ch5:theorem1}.\footnote{Although there exists an infinite sum in (\ref{cov-prob-theory}), this sum vanishes for a small number of serving \acp{BS} that is determined by the collaboration distance $R_c$ and the BSs' density $\lambda_b$.}		
\end{proof}
%
The main steps towards tractable coverage are summarized as follows \cite{8490204}: We first derive the conditional log-Laplace transform $\Omega(\varpi)_{|\boldsymbol{r}_{\kappa}}$ of the aggregate interference. Then, we calculate the $i$-th derivative of $\Omega(\varpi)_{|\boldsymbol{r}_{\kappa}}$ to populate the entries $t_i$ of the lower triangular Toeplitz matrix $\boldsymbol{T}_{K}$. The conditional coverage probability can be then computed from $\Pb_{{\rm c}|\boldsymbol{r}}^l = \lVert e^{\boldsymbol{T}_{k}}\rVert_1$.		
%

Important insights on the coverage probability can be obtained from (\ref{cov-prob-theory}). First, if the collaboration distance $R_c$ increases, both the probability $\Pb(n=\kappa)$ and the integrand value in (\ref{cov-prob-theory}) increase, and, thus,  the coverage probability grows accordingly. Furthermore, the effect of the \ac{BS} density $\lambda_b$ on the coverage probability is two-fold. On the  one hand, the average number of \acp{BS} increases with $\lambda_b$ as characterized by $\Pb(n=\kappa)$, which results in a higher desired signal power. On the other hand, this advantage is counter-balanced by the increase in (LoS) interference power when $\lambda_b$ increases, as captured in the decaying exponential functions in (\ref{LT_c1}). Additionally, this compact representation, i.e., $\Pb_{{\rm c}|\boldsymbol{r}}^l = \lVert e^{\boldsymbol{T}_{K}}\rVert_1$, leads to valuable system insights. For instance, the impact of the shape parameter $K=\kappa m_l$ on the intended channel gain $\Gamma(K,\theta)$ is rigorously captured by the finite sum representation in (\ref{prob-y2}) of Appendix \ref{ch5:theorem1}, which is typically related to the collaboration distance $R_c$ and the Nakagami fading parameter $m_l$.

Next, we derive an \ac{LB} on the coverage probability, which will lead to closed-form expressions for $t_{k}$, i.e., the entries populating $\boldsymbol{T}_{K}$ in (\ref{cov-prob-theory}). Given the high-altitude assumption of UAV-UEs, we will consider a special case when interfering \acp{BS} have dominant \ac{LoS} communications to the typical UAV-UE, i.e,  $\Pb_{l}(\nu)=1$ and $\Pb_{n}(\nu)=0$ in (\ref{cov-prob-theory}). Since this case results in higher interference power, this yields the derived coverage probability \ac{LB}. 
\begin{corollary}
\label{ch5:cov-prob-lb}
An \ac{LB} on the coverage probability of the UAV-UEs can be computed from (\ref{cov-prob-theory}), where 
$\Pb_{{\rm c}|\boldsymbol{r}}^l = \lVert e^{\boldsymbol{T}_{K}}\rVert_1$, and the entries of $\boldsymbol{T}_{K}$ are given in closed-form expressions as
\begin{align}
\label{closed-form1}
 t_k& \overset{}{=} \pi  \lambda_b R_{ch}^2 \Big(  \textbf{1}\{k=0\} - c_k 
  {}_2 F_1(k+m_l,k-\delta_l;k+1-\delta_l;-\varpi L R_{ch}^{-\alpha_l/2} m_l)  \Big),
 \end{align}
 where $c_k =\frac{\delta_l a_k \Gamma(k+m_l) m_l^{-k} }{(\delta_l -k)\Gamma(k+1) \Gamma(m_l)}$, $a_k = (\varpi L R_{ch}^{-\alpha_l/2})^{k}$, $\delta_l = \frac{2}{\alpha_l}$, $R_{ch}^2 = R_c^2+h^2$, $\textbf{1}\{.\}$ is the indicator function, and ${}_2 F_1(\cdot,\cdot;\cdot;\cdot)$ is the ordinary hypergeometric function. 
\end{corollary}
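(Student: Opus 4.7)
The plan is to specialize the general expression in Theorem~\ref{ch5:cov-prob} to the worst-case interference scenario in which every out-of-cluster BS has a LoS link, i.e., $\Pb_l(\nu)=1$ and $\Pb_n(\nu)=0$. Because $\delta_l<\delta_n$ (LoS paths carry more interference power at the same horizontal distance), this choice pointwise maximizes the integrand $1-\delta_l\Pb_l-\delta_n\Pb_n$, drives $\Omega(\varpi)$ to its most negative value, and through the monotonicity of the Laplace-transform machinery of Theorem~\ref{ch5:cov-prob} delivers an LB on $\Pb_{{\rm c}}$. The conditional log-Laplace exponent then reduces to
\[
\Omega(\varpi)=-2\pi\lambda_b\int_{R_c}^{\infty}\Big(1-\big(1+\tfrac{\varpi L(\nu^2+h^2)^{-\alpha_l/2}}{m_l}\big)^{-m_l}\Big)\nu\,d\nu,
\]
and the change of variables $u=\nu^2+h^2$ collapses the $2\pi\lambda_b\nu\,d\nu$ factor into $\pi\lambda_b\,du$ and shifts the lower limit to $R_{ch}^2=R_c^2+h^2$.

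Next, I would build the Toeplitz entries $t_k=\frac{(-\varpi)^k}{k!}\Omega^{(k)}(\varpi)$ by differentiating under the integral sign, which is justified by the polynomial decay of the integrand in $u$. Writing $y=L u^{-\alpha_l/2}/m_l$, one has $\frac{d^k}{d\varpi^k}(1+\varpi y)^{-m_l}=(-1)^k (m_l)_k\,y^k(1+\varpi y)^{-m_l-k}$, so the Pochhammer symbol $(m_l)_k=\Gamma(m_l+k)/\Gamma(m_l)$ appears naturally and its sign cancels against the $(-1)^k$ in $t_k$. A second substitution $w=\varpi L u^{-\alpha_l/2}/m_l$ maps $u\in[R_{ch}^{2},\infty)$ bijectively onto an interval of the form $w\in(0,W_0]$, producing the factor $\delta_l=2/\alpha_l$ from the Jacobian and reducing the remaining integral to the canonical form $\int_0^{W_0}w^{k-\delta_l-1}(1+w)^{-m_l-k}\,dw$.

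This is an incomplete-beta-type integral with the closed-form evaluation $\int_0^{x}w^{a-1}(1+w)^{-b}dw=\frac{x^a}{a}\,{}_2F_1(a,b;a+1;-x)$, derived by expanding $(1+w)^{-b}$ as a binomial series and integrating termwise. Setting $a=k-\delta_l$ and $b=k+m_l$ reproduces the hypergeometric factor ${}_2F_1(k-\delta_l,\,k+m_l;\,k+1-\delta_l;\,\cdot)$ asserted in the corollary. Gathering all the prefactors---the outer $\pi\lambda_b R_{ch}^2$ from the change of variables, $\Gamma(k+m_l)/\Gamma(k+1)\Gamma(m_l)$ from the derivative combined with the $k!$ in $t_k$, the $\delta_l/(k-\delta_l)$ from the Jacobian and the integral evaluation, and the power $a_k=(\varpi L R_{ch}^{-\alpha_l/2})^k$---one recovers the stated form of $c_k$ and hence of $t_k$. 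The $k=0$ case is treated without differentiation: the constant "$1$" inside $\Omega(\varpi)$ supplies the $\mathbf{1}\{k=0\}$ contribution, while the $\delta_l$-piece supplies the $c_0\,{}_2F_1(\cdot)$ correction (noting $a_0=c_0=1$).

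The main obstacle I anticipate is the bookkeeping of signs and constants---the $(-1)^{k-1}$ produced when differentiating $\Omega$, the $(-\varpi)^k/k!$ prefactor in $t_k$, the Jacobians of both substitutions, and the precise way the powers of $\varpi$, $L$, $m_l$, and $R_{ch}$ must recombine into $a_k$ and into the argument of ${}_2F_1$. Once these accounting issues are pinned down, every analytical step is routine, and the only non-elementary ingredient is the hypergeometric identity that closes the incomplete-beta integral.
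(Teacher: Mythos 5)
Your proposal is correct and lands on the same closed form, but it factorizes the computation genuinely differently from the paper. The paper works from (\ref{before-chi}) with the fading expectation $\mathbb{E}_\chi$ still pending: it first evaluates the radial integral in terms of the lower incomplete gamma function, rewrites it as a confluent hypergeometric ${}_1 F_1(-\delta_l;1-\delta_l;-\varpi L \chi R_{ch}^{-\alpha_l/2})$, generates the Toeplitz entries via the contiguous-derivative rule $\frac{d^k}{dz^k}{}_1 F_1(a;b;z)=\frac{\prod_{p}(a+p)}{\prod_{p}(b+p)}{}_1 F_1(a+k;b+k;z)$, and only at the end averages over $\chi\sim\Gamma(m_l,1/m_l)$, invoking the Laplace transform of ${}_1 F_1$ \cite[Eq.~7.525]{gradshteyn2014table} to produce the ${}_2 F_1$. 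You instead average over the fading first (working from the $(1+\varpi y)^{-m_l}$ form already present in (\ref{LT_c1})), differentiate $k$ times under the integral sign --- which correctly produces the Pochhammer factor $\Gamma(k+m_l)/\Gamma(m_l)$ --- and then close a \emph{single} incomplete-beta integral with the Euler-type identity $\int_0^{x} w^{a-1}(1+w)^{-b}\dd{w}=\frac{x^a}{a}\,{}_2 F_1(a,b;a+1;-x)$, valid here since $a=k-\delta_l>0$ for $k\geq 1$. Your route is the more elementary one (no ${}_1 F_1$, no Laplace-transform table entry, one hypergeometric identity in total), whereas the paper's route retains the intermediate form $\pi\lambda_b R_{ch}^2\big(1-\mathbb{E}_\chi[{}_1 F_1(\cdot)]\big)$, which stays valid for fading laws other than Gamma before the final averaging step; your initial justification of the lower bound ($\Pb_l(\nu)=1$ pointwise maximizes the interference) is at the same level of rigor as the paper's own one-line assertion.

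Two bookkeeping points deserve care when you execute the plan. First, for $k=0$ your split of the integrand into the constant ``$1$'' and the $(1+w)^{-m_l}$ piece is only formal: after the substitution $w=\varpi L u^{-\alpha_l/2}/m_l$ each piece alone produces a divergent $\int_0 w^{-\delta_l-1}\dd{w}$, and only the combination $1-(1+w)^{-m_l}\sim m_l w$ is integrable at $w=0$; an integration by parts (or evaluating the combined Euler integral directly, as in \cite{6042301}) repairs this and indeed yields $c_0=1$, consistent with your remark. Second, carrying your substitutions to completion gives the ${}_2 F_1$ argument $-\varpi L R_{ch}^{-\alpha_l/2}/m_l$; this is also what the paper's own application of \cite[Eq.~7.525]{gradshteyn2014table} produces (the argument is $k/s$ with $s=m_l$), so the multiplicative $m_l$ printed in (\ref{closed-form1}) appears to be a typo for $m_l^{-1}$ --- do not be alarmed if your carefully tracked constants disagree with the printed argument on exactly this factor.
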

\begin{proof}
Please see Appendix \ref{ch5:theorem2}.
\end{proof}

For comparison purposes, next, we derive the UAV-UE coverage probability under the nearest association scheme. 		
\begin{corollary}
\label{corr-near}
The coverage probability of the UAV-UEs under the nearest association scheme is: 	
\begin{align}
\label{cov-prob-theory0}		
\Pb_{{\rm c}} = \int_{0}^{\infty} \Pb_{{\rm c}|r_0}^l f_{R_0}(r_0) \dd{r_0},
\end{align}									
where $\Pb_{{\rm c}|r_0}^l = \lVert e^{\boldsymbol{T}_{m_l}}\rVert_1$, $\boldsymbol{T}_{m_l}$ is defined as $\boldsymbol{T}_{K}$ in (\ref{cov-prob-theory}), with $\Omega(\varpi)=-2\pi \lambda_b\int_{\nu=r_0}^{\infty}\Big(1 -\delta_{l}\Pb_{l}(\nu) - \delta_{n}\Pb_{n}(\nu) \Big)\nu\dd{\nu}$, $\varpi = \frac{\vartheta m_l}{P_t \zeta_l(r_0)}$, and $f_{R_0}(r_0)=2\pi\lambda_b r_0 e^{-\pi\lambda_br_0^2}$ is the \ac{2D} serving distance \ac{PDF}. 
\end{corollary}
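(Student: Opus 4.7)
The plan is to specialize the machinery of Theorem~\ref{ch5:cov-prob} to the single-server case and then average over the contact distance. Under the nearest-association scheme there is one serving BS at horizontal distance $R_0$, and all other BSs of $\Phi_b$ contribute interference. Since $\Phi_b$ is a homogeneous PPP with intensity $\lambda_b$, the standard void-probability argument ($\Pb(R_0 > r_0) = \exp(-\pi\lambda_b r_0^2)$) immediately yields $f_{R_0}(r_0) = 2\pi\lambda_b r_0 e^{-\pi\lambda_b r_0^2}$, the stated 2D contact-distance PDF.

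Next I would write the conditional SIR given $R_0 = r_0$. The desired power is $P_t \zeta_l(r_0)\chi_0$ with $\chi_0 = |\omega_0|^2 \sim \Gamma(m_l, 1/m_l)$, assuming (as in Section~\ref{cov-prob-static}) that the serving link is LoS-dominated. Writing $I_{\rm out} = \sum_{k\in\Phi_b\setminus \mathcal{B}(0,r_0)} P_t\,\zeta_v(u_k)\chi_k$ with $v\in\{l,n\}$ and $\chi_k$ gamma-distributed, the conditional coverage probability is
\begin{align*}
\Pb_{c|r_0}^l \;=\; \Pb\!\left(\chi_0 > \frac{\vartheta\, I_{\rm out}}{P_t\,\zeta_l(r_0)}\,\Big|\,r_0\right).
\end{align*}
Because $\chi_0$ is a Gamma RV with integer shape parameter $m_l$, its complementary CDF admits the standard finite-sum representation $\Pb(\chi_0 > x) = e^{-m_l x}\sum_{i=0}^{m_l-1}\frac{(m_l x)^i}{i!}$. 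Taking expectation over $I_{\rm out}$ converts each power $(m_l x)^i$ into an $i$-th derivative of the Laplace transform of $I_{\rm out}$, which is precisely the mechanism that produces the lower-triangular Toeplitz matrix $\boldsymbol{T}_{m_l}$ in Theorem~\ref{ch5:cov-prob}. The key substitution is $\varpi = \vartheta m_l/(P_t \zeta_l(r_0))$, so that $e^{-m_l x}$ becomes the Laplace transform evaluated at $\varpi$, and the derivatives yield the $t_i$ entries.

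To finish I would compute $\Omega(\varpi) = \log \Eb[e^{-\varpi I_{\rm out}}]$ via the PGFL of the PPP $\Phi_b$ restricted to $\{u>r_0\}$, conditioning on whether each interferer has a LoS ($v=l$) or NLoS ($v=n$) link with probabilities $\Pb_l(\nu)$ and $\Pb_n(\nu) = 1-\Pb_l(\nu)$. The fading contribution from each interferer, being Gamma with shape $m_v$, contributes a factor $(1+\varpi P_v(\nu)^2/m_v)^{-m_v} = \delta_v$ in the integrand. The PGFL then gives exactly
\begin{align*}
\Omega(\varpi) \;=\; -2\pi\lambda_b \int_{r_0}^{\infty}\!\Big(1 - \delta_l \Pb_l(\nu) - \delta_n \Pb_n(\nu)\Big)\nu\,\dd{\nu},
\end{align*}
as claimed. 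Deconditioning on $R_0$ with its PDF $f_{R_0}$ yields the integral in \eqref{cov-prob-theory0}.

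The easy parts are all of the above; they are a direct specialization of Appendix~\ref{ch5:theorem1} to the $\kappa=1$ case with nearest association replacing in-cluster cooperation. The one subtlety that deserves care is the change in the interference exclusion region: in the cooperative setting interference comes from outside $\mathcal{B}(0,R_c)$, whereas here it comes from outside $\mathcal{B}(0,r_0)$, with $r_0$ a random variable with a different (contact-distance, not BPP-induced) distribution. This changes the lower limit of the $\Omega(\varpi)$ integral and forces the unconditioning to use $f_{R_0}$ rather than the joint BPP PDF, but no new analytic difficulty arises because the PPP outside $\mathcal{B}(0,r_0)$ remains a PPP by the strong Markov/independence property.
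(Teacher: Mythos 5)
Your proposal is correct and follows exactly the route the paper intends: the paper's proof is omitted as ``follows directly from \cite{azari2017coexistence} and Theorem~\ref{ch5:cov-prob},'' and your argument is precisely that specialization — $\kappa=1$ so that $J=Q_1^2\sim\Gamma\big(m_l,\zeta_l(r_0)/m_l\big)$ exactly (no Cauchy--Schwarz or moment matching needed, which is why the corollary is exact rather than a bound), $\varpi=\vartheta m_l/(P_t\zeta_l(r_0))$ matching $\varpi=\vartheta/(\kappa P_t\theta)$, the finite-sum Gamma CCDF converting to Laplace-transform derivatives and hence $\boldsymbol{T}_{m_l}$, and the exclusion ball $\mathcal{B}(0,r_0)$ with deconditioning via the contact-distance PDF. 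You also correctly flag the only genuine change from Appendix~\ref{ch5:theorem1} — the random exclusion radius and the use of $f_{R_0}$ instead of the BPP joint PDF, justified by the PPP's independence property — so nothing is missing.
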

\begin{proof}
The proof follows directly from \cite{azari2017coexistence} and Theorem \ref{ch5:cov-prob}, and hence is omitted for brevity.
\end{proof}
\begin{figure}[!t]
\vspace{-0.9 cm}	
    \centering
    \subfigure[Inter-cluster half distance $R_h = \SI{190}{m}$]
    {
        \includegraphics[width=3.1in]{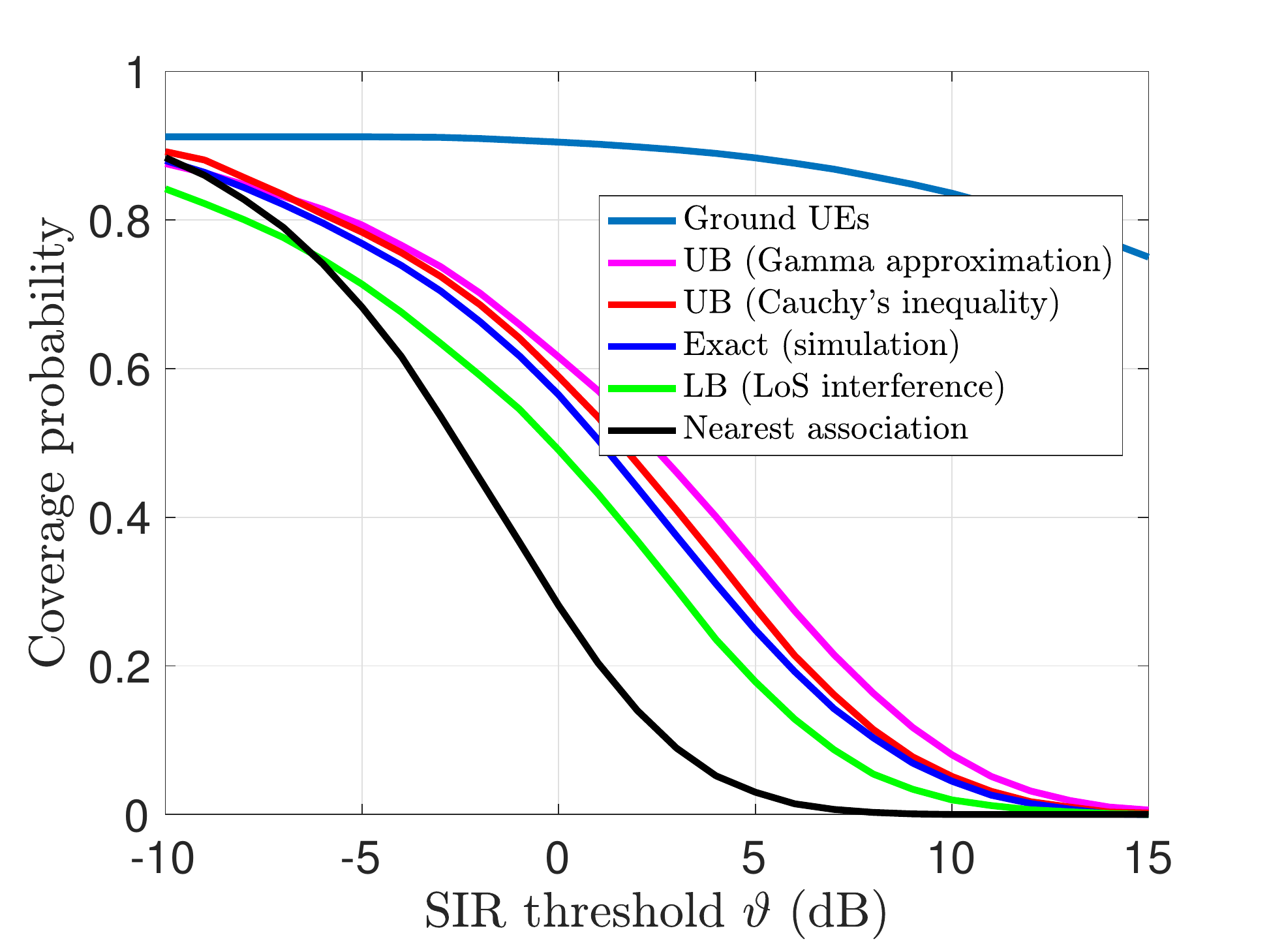}	
        \label{cov_prob_vs_theta11}
    }
    \subfigure[SIR threshold $\vartheta=\SI{0}{dB}$]	
    {
        \includegraphics[width=3.1in]{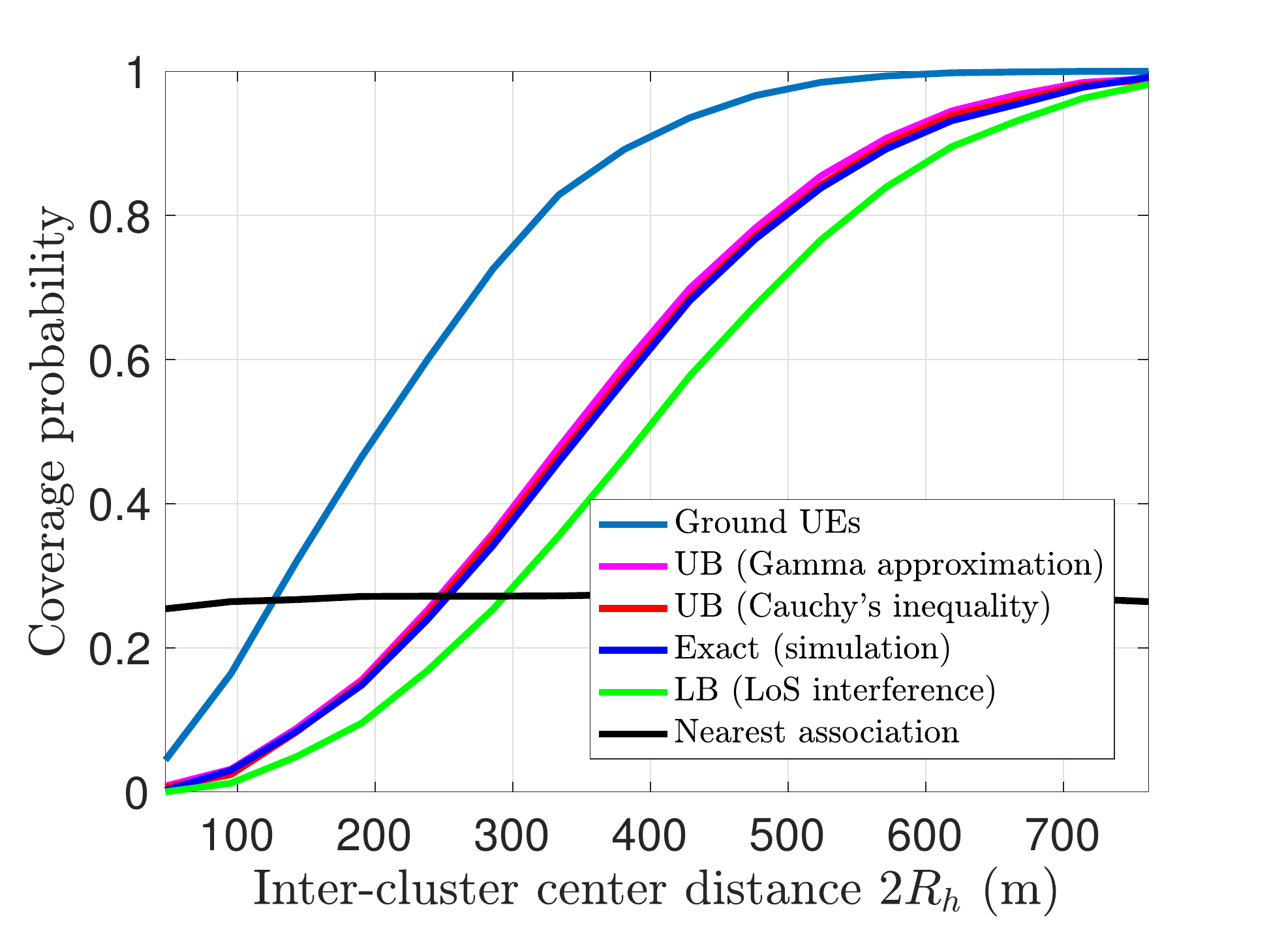}	 
        \label{cov_prob_vs_theta12}
    }
    \caption{The derived upper and lower bounds on the coverage probability of UAV-UEs are plotted versus the \ac{SIR} threshold $\vartheta$ and collaboration distance $R_c$: $\lambda_b = \SI{20}{km^{-2}}$, $R_{\rm sim}= \SI{20}{km^2}$, $\alpha_l = 2.09$, $\alpha_n = 3.75$, $h_{\rm BS} = \SI{30}{m}$, $m_l=3$, $A_L = 0.0088$, $A_N = 0.0226$, $h_d = \SI{120}{m}$, $a = 0.3$, $b = \SI{300}{km^{-2}}$, and $c = \SI{20}{m}$.}		
    \label{cov_prob_vs_theta_hd110}
    \vspace{-0.7 cm}
\end{figure}
To verify the accuracy of our proposed approach, in Fig.~\ref{cov_prob_vs_theta_hd110}, we show the theoretical UB and LB on the coverage probability of the UAV-UEs, and simulation of the \ac{UB} based on (\ref{cauchy}). Fig.~\ref{cov_prob_vs_theta11} shows that the  Cauchy's inequality-based \ac{UB} is remarkably tight. Moreover, although the obtained \ac{UB} expression in (\ref{cov-prob-theory}) is less tight, it still represents a reasonably tractable bound on the exact coverage probability. Hence, (\ref{cov-prob-theory}) can be treated as a proxy of the exact result. Recall that (\ref{cauchy}) is based on an \ac{UB} on a square of a sum of Nakagami-$m_l$ \acp{RV} while the expression in (\ref{cov-prob-theory}) goes further by two more steps. First, we approximate the sum of Gamma \acp{RV} to an equivalent Gamma \ac{RV}. Then, we round the shape parameter of the yielded Gamma \ac{RV} to an integer $m_l\kappa$. Finally, the \ac{LB} based on (\ref{closed-form1}) can be also seen as a relatively looser bound than the \acp{UB}. As evident from Fig.~\ref{cov_prob_vs_theta_hd110}, allowing CoMP transmission significantly enhances the coverage probability, \black{e.g., from $28\%$ for the baseline scenario with nearest serving \acp{BS} to $60\%$ at $\vartheta=$ \SI{-5}{dB} (for an average of 2.5 cooperating \acp{BS}).} In Fig.~\ref{cov_prob_vs_theta_hd110}, the performance  of UAV-UEs is also compared to that of their ground counterparts experiencing Rayleigh fading and \ac{NLoS} communications. We assume that the \acp{BS}' antennas are ideally down-tilted accounting for the \acp{GUE}, i.e., the antenna gains for desired and interfering signals are $G_m$ and $G_s$, respectively. Under such a setup, we observe that the coverage probability of \acp{GUE} substantially outperforms that of  UAV-UEs, especially at high \ac{SIR} thresholds. Fig.~\ref{cov_prob_vs_theta12} shows the prominent effect of the collaboration distance $R_c$ on the coverage probability of ground and aerial \acp{UE}. We can see that for both kind of \acp{UE}, the coverage probability monotonically increases with $R_c$ since more \acp{BS} cooperate to serve the UEs when $R_c$ increases. Moreover, due to the down-tilt of the \acp{BS}' antennas and LoS-dominated interference for UAV-UEs, the coverage probability of \acp{GUE} outperforms that of the UAV-UEs. However, we can see that the rate of coverage probability improvement with $R_c$, i.e., the slope, is higher for the UAV-UE. This can be interpreted by the fact that increasing $R_c$ yields more \ac{LoS} signals within the desired signal side and subtracts them from the interference. Conversely, for \acp{GUE}, the transmission is dominated by NLoS signals and Rayleigh fading. 

Having characterized the performance of static UAV-UEs, next, we turn our attention to applications in which the UAV-UEs can be mobile. It is anticipated that mobile UAV-UEs will span a wide variety of applications, e.g.,  flying taxis and delivery drones. Hence, it is quite  important to ensure reliable connections in the presence of UAV-UE mobility by potentially mitigating the \ac{LoS} interference through \ac{CoMP} transmissions. Moreover, unlike the \acp{GUE} that can only move horizontally, UAV-UEs can fly in \ac{3D} space. Hence, a \ac{3D} mobility model is essential to convey a realistic description of the performance of mobile UAV-UEs. As a first step in this direction, we develop a novel \ac{3D} \ac{RWP} mobility model that effectively captures the vertical displacement of UAV-UEs, along with their typical \ac{2D} spatial mobility. The use of RWP mobility is motivated by its simplicity and tractability that is widely adopted in the mobility analysis in cellular networks  \cite{1233531,bettstetter2004stochastic,1624340,6477064}. Moreover, as we will discuss shortly, it has tunable parameters that can be set to appropriately describe the mobility of different mobile nodes, ranging from walking or driving users to \ac{3D} UAVs, \cite{8671460} and \cite{6477064}.

\vspace{-0.2 cm}
\section{3D Mobility and Handover Analysis}	
\label{3D-RWP}
Next, a novel \ac{3D} \ac{RWP} model is presented to describe the motions of UAV-UEs. We first illustrate the various elements of our proposed model. Then, we characterize the handover rate and handover probability for mobile UAV-UEs. Since we introduce the first study on 3D mobile UAV-UEs, for completeness, we consider two cases: UAV-UEs under CoMP transmissions, and UAV-UEs served by the nearest GBS.

In the classical \ac{2D} mobility model, the spatial motion is considered only through a displacement and an angle. However, for the UAV-UE, due to the mission requirements, and environmental and atmospherical conditions, the UAV-UEs must change their altitude and make vertical motions. For instance, due to variations in the altitudes of buildings, UAV-UEs might have frequent up and down displacements along their trajectories. Indeed, the vertical motion is always associated with the take-off and landing of UAV-UEs. This inherently triggers the concept of \ac{3D} mobility in \ac{3D} space.\footnote{We assume that the UAV-UEs are sparsely deployed such that there are no imposed constraints on the trajectories of different UAV-UEs. The analysis of multiple trajectories with such constraints is interesting but beyond the scope of this paper.}


First, recall that in a classical \ac{RWP} mobility model \cite{1233531,bettstetter2004stochastic,1624340,6477064}, the movement trace of a node (e.g., the UAV-UE) can be formally described by an infinite sequence of tuples: $\{(\boldsymbol{X}_{n-1},\boldsymbol{X}_{n}, V_n)\}$, $\forall \boldsymbol{X}_{n} \in \R^3$, and $n \in \mathbb{N}$, where $n$ is the movement epoch and $\boldsymbol{X}_{n}=(\varrho_n,\phi_n,z_n)$ is the \ac{3D} cylindrical displacement of the UAV-UE at epoch $n$, see Fig.~\ref{shaded-area}. During the $n$-th movement epoch, $\boldsymbol{X}_{n-1}$ denotes the starting waypoint, $\boldsymbol{X}_{n}$ denotes the target waypoint, and $V_n$ is the velocity. For simplicity, we assume that the UAV-UE moves with a constant velocity $\bar{\nu}$. However, further extensions to generalized \acp{PDF} of the velocity directly follow by the same methodology of analysis. Given the current waypoint $\boldsymbol{X}_{n-1}$, the next waypoint $\boldsymbol{X}_n$ is chosen such that the included angle $\phi_n$ between the projection of the vector $\boldsymbol{X}_{n-1}-\boldsymbol{X}_{n}$ on the $x$-$y$ plane and the abscissa is uniformly distributed on $[0, 2\pi]$. We define the transition length as the Euclidean distance between two \ac{3D} successive waypoints, i.e., $u_n=\lVert \boldsymbol{X}_{n}-\boldsymbol{X}_{n-1}\rVert=\sqrt{\varrho_n^2+(z_n-z_{n-1})^2}$. Furthermore, the vertical displacement between two consecutive points, i.e., the change in $z$-axis, is also distributed according to a \ac{RV}. We also let $\varphi_n$ be  the acute angle of  $U_n = \lVert \boldsymbol{X}_{n}-\boldsymbol{X}_{n-1}\rVert$ relative to the horizontal line $\rho_n$.
 

\begin{figure}[!t]	
\vspace{-0.8 cm}
    \centering
    \subfigure[A sample trace of the proposed \ac{3D}  \ac{RWP} mobility model]
    {
        \includegraphics[width=3.5in]{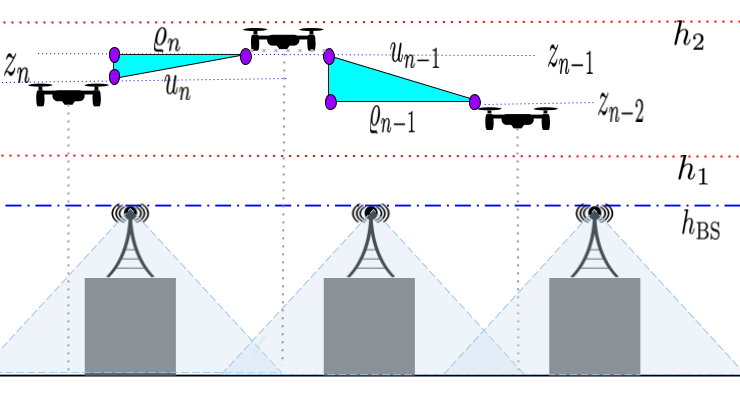}		
        \label{shaded-area-1}
    }
    \subfigure[Illustration of \ac{1D} \ac{RWP} vertical mobility \cite{8671460}]
    {
        \includegraphics[width=2.5in]{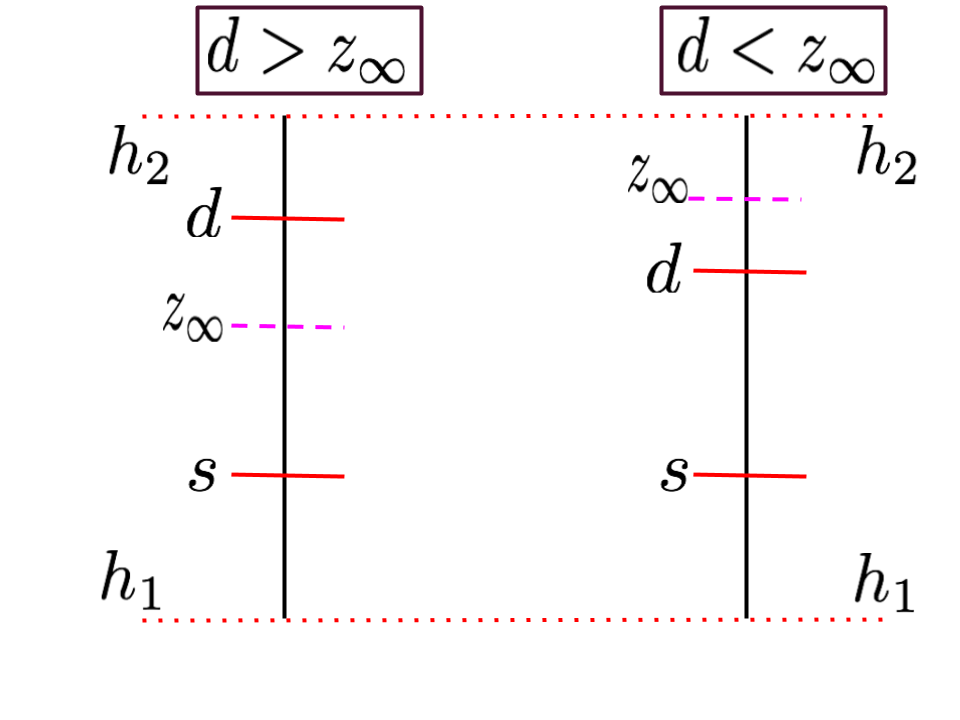}
        \label{shaded-area-2}
    }
    \caption{The proposed \ac{3D} mobility model for UAV-UEs which incorporates the typical \ac{2D} spatial \ac{RWP} and \ac{1D} \ac{RWP} for the vertical displacements.}	 
    \label{shaded-area}
    \vspace{-0.6 cm}
\end{figure}

 For simplicity, the selection of waypoints is assumed to be independent and identical for each movement epoch \cite{bettstetter2004stochastic}. Particularly, similar to \cite{6477064}, the horizontal transition lengths $\{\rho_1, \rho_2, \dots\}$ are chosen to be \ac{i.i.d.} with the \ac{CDF} 
$F_{\rho_n}(\varrho_n) = 1 - {\rm exp}(-\pi \mu \varrho_n^2)$, 
i.e., the spatial transition lengths are Rayleigh distributed in $\R^2$ with mobility parameter $\mu$. 
The corresponding displacement \ac{PDF} is hence $f_{\rho_n}(\varrho_n) = \frac{\partial F_{\rho_n}(\varrho_n)}{\partial  \varrho_n} =2 \pi  \mu  \varrho_n e^{-\pi  \mu \varrho_n^2}$. As also done in \cite{8671460} and \cite{8421028}, we adopt a uniform distribution for the vertical displacement, however, the analysis for generalized \acp{PDF} can readily follow. In particular, we assume that $Z_n$ is uniformly distributed on $[h_1,h_2]$, i.e., $Z_n \sim \U(h_1,h_2)$ and $f_{Z_n}(z_n) = \frac{1}{h_2-h_1}, \forall h_1 \leq z_n \leq h_2$. 
We henceforth refer to $\hbar=h_2-h_1$ as altitude difference. Since the major restrictions of all drones' operations are their flying altitudes, it is reasonable to assume that $Z_n$ is bounded by $h_1$ and $h_2$. For instance, \acp{UAV} cannot fly higher than certain altitudes (\ac{AGL}) that are typically chosen below the cruising altitude of manned aircrafts. The \acp{UAV} also have an inherent minimum altitude of zero \ac{AGL}. However, due to mission requirements as well as environmental and atmospherical conditions, it is reasonable to assume $h_1>0$. We further assume that $h_1 > h_{\rm BS}$ for a high altitude UAV-UEs scenario. Finally, for the \ac{3D} displacement, we have  $u_n= \sqrt{\varrho_n^2 + (z_n-z_{n-1})^2}$. Since $U_n$, $\rho_n$, or $Z_n$ are \ac{i.i.d.} \acp{RV} among different time epochs, we henceforth omit the epoch index. 

Given the independence assumption between $Z$ and $\rho$, we obtain their joint \ac{PDF} from
$f_{\rho,Z}(\varrho,z)  = f_{\rho}(\varrho) f_{Z}(z) = 
 \frac{ 2 \pi  \mu \varrho}{\hbar}     e^{-\pi  \mu \varrho ^2}$,  
 $\forall h_1 \leq z \leq h_2, 0 \leq \varrho \leq \infty$. 
%
Under the proposed mobility model, different mobility patterns can be captured by choosing different mobility parameters $\mu$. For example, larger values of $\mu$ statistically implies that $\rho$ and, consequently, the \ac{3D} transition lengths $U$ are shorter. This means that the movement direction switch rates are higher. These values of the mobility parameter appropriately describe the motion of UAV-UEs frequently travelling between nearby hovering locations such as for the use case of aerial surveillance cameras. In contrast, smaller $\mu$ statistically implies that $\rho$ and, consequently, $U$ are longer and the corresponding movement direction switch rates are lower. These values of $\mu$ would be suitable to describe the motion of UAV-UEs traveling large distances such as  for the use case of flying taxis and delivery drones. Given the \acp{PDF} of spatial and vertical motions, the \ac{PDF} of the \ac{3D} displacement $f_U(u)$ is readily obtained in the next lemma.
\begin{lemma}
\label{ch5:lemma1}
The PDF of the \ac{3D} transition lengths $U$ is given by 
$f_U(u)  = 2 \pi  \mu u  e^{-\pi  \mu u^2}  \Omega(\mu,\hbar)$,
where $\Omega(\mu,\hbar) = \frac{\pi \hbar \sqrt{\mu }  {\rm erfi}\left(\sqrt{\pi \mu } \hbar\right)-e^{\pi  \mu  \hbar^2}+1}{\pi  \mu \hbar^2}$, and ${\rm erfi}(.)= \frac{-2i}{\sqrt{\pi}}\int_{0}^{x}e^{-t^2}dt$. 
\end{lemma}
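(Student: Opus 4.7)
The plan is to derive $f_U(u)$ by first characterising the vertical increment $\Delta := Z_n - Z_{n-1}$, and then exploiting independence from the Rayleigh-distributed horizontal step $\varrho_n$. Since $Z_n$ and $Z_{n-1}$ are i.i.d.\ $\mathcal{U}(h_1, h_2)$, a short convolution shows that $\Delta$ has the symmetric triangular density $f_\Delta(\delta) = (\hbar - |\delta|)/\hbar^2$ on $[-\hbar, \hbar]$. Moreover, $U^2 = \varrho_n^2 + \Delta^2$ depends on $\Delta$ only through $\Delta^2$, so the sign of $\Delta$ is immaterial.

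Next I would compute the CDF $F_U(u) = \Pb(\varrho_n^2 + \Delta^2 \leq u^2)$ by integrating the product density $f_\varrho(\varrho)\,f_\Delta(\delta)$ over the disk $\{\varrho^2 + \delta^2 \leq u^2\}$. Conditioning on $\Delta = \delta$ and using the Rayleigh CDF $\Pb(\varrho_n \leq r) = 1 - e^{-\pi\mu r^2}$, in the regime $u \geq \hbar$ (the physically relevant one, in which $|\Delta| \leq \hbar \leq u$ almost surely), the inner integration yields the compact expression $F_U(u) = 1 - e^{-\pi\mu u^2}\int_{-\hbar}^{\hbar} f_\Delta(\delta)\,e^{\pi\mu\delta^2}\,d\delta$. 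Differentiating in $u$ then produces the announced factorisation $f_U(u) = 2\pi\mu u\,e^{-\pi\mu u^2}\,\Omega(\mu,\hbar)$, with $\Omega(\mu,\hbar)$ identified as the $\delta$-integral above.

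The last step is to reduce that integral to closed form. By symmetry, $\Omega = (2/\hbar^2)\int_0^{\hbar}(\hbar - \delta)\,e^{\pi\mu\delta^2}\,d\delta$. The piece $\int_0^\hbar \delta\,e^{\pi\mu\delta^2}\,d\delta$ is elementary and equals $(e^{\pi\mu\hbar^2}-1)/(2\pi\mu)$; the piece $\int_0^\hbar e^{\pi\mu\delta^2}\,d\delta$ is handled by the substitution $t = \sqrt{\pi\mu}\,\delta$ together with the identity $\int_0^x e^{t^2}\,dt = (\sqrt{\pi}/2)\,\mathrm{erfi}(x)$. Collecting over the common denominator $\pi\mu\hbar^2$ then reproduces the stated formula.

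The only real difficulty here is careful bookkeeping with the erfi identity and the accompanying constants; everything else rests on independence of $\varrho_n$ and the $Z_i$'s together with the Rayleigh CDF. A subsidiary caveat is that the factorisation above is exact only for $u \geq \hbar$: for $u < \hbar$ the $\delta$-integration limits shrink to $[-u,u]$ and the clean $e^{-\pi\mu u^2}$ factor no longer pulls out. I would either restrict the statement to $u \geq \hbar$, or treat the $u < \hbar$ event as negligible, which is appropriate for typical UAV parameter ranges in which the horizontal step dominates $\hbar$.
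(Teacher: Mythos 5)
Your proof is correct and, in substance, is exactly the transformation-of-variables argument the paper invokes: the paper's own proof is a one-sentence sketch (``transforming the RVs $Z_n$, $Z_{n-1}$, and $\rho_n$ to $U_n$, \ldots details omitted''), so your CDF computation supplies precisely the omitted details. Your triangular density $(\hbar-|\delta|)/\hbar^2$ for the vertical increment is the same $f_P(p)$ the paper uses later in Appendices C and D, and your reduction via $\int_0^x e^{t^2}\,\mathrm{d}t = \tfrac{\sqrt{\pi}}{2}\,\mathrm{erfi}(x)$ reproduces $\Omega(\mu,\hbar)$ exactly. Your caveat is genuine and in fact sharper than the paper: for $u<\hbar$ the conditioning domain shrinks to $[-u,u]$ and the clean factor $e^{-\pi\mu u^2}\,\Omega(\mu,\hbar)$ no longer pulls out, so the stated formula is exact only on $u\geq\hbar$. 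One can confirm this directly from the statement itself, since $\int_0^\infty 2\pi\mu u\,e^{-\pi\mu u^2}\,\Omega(\mu,\hbar)\,\mathrm{d}u = \Omega(\mu,\hbar) > 1$ (e.g., $\Omega \approx 1+\pi\mu\hbar^2/6$ for small $\hbar$), so the expression cannot be the exact density on the whole half-line; consistently, your $F_U$ gives $\Pb(U\geq\hbar)=\Omega(\mu,\hbar)\,e^{-\pi\mu\hbar^2}$, which is exactly the mass the tail formula carries. The lemma as printed silently elides this restriction, and your proposed fix --- state the result for $u\geq\hbar$, or note that the $u<\hbar$ regime is negligible when horizontal steps dominate $\hbar$ --- is the right one.
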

\begin{proof}
We can reach this result by transforming the \acp{RV} $Z_{n}$, $Z_{n-1}$, and $\rho_{n}$ to $U_n$, where $u_n=\sqrt{\varrho_{n}^2+(z_n-z_{n-1})^2}$, with the details omitted due to space limitations.
\end{proof}

\begin{remark}
{\rm If $h_1=h_2$, it can be easily verified that ${\rm lim }_{\hbar \to 0}\Omega(\mu,\hbar) \to1$, and $f_U(u) =2 \pi  \mu u  e^{-\pi  \mu u^2}$. This shows that if the UAV-UE moves only along a horizontal plane, the \ac{PDF} of the \ac{3D} displacement distance is reduced to its \ac{2D} counterpart, which verifies the correctness of the obtained \ac{3D} displacement distribution $f_U(u)$.} 
\end{remark} 		
Having described the various elements of our proposed \ac{3D} \ac{RWP} model, our immediate objective is to  characterize the handover rate and handover probability for mobile UAV-UEs under \ac{CoMP} transmissions and nearest association. 
%
\vspace{-0.4 cm}			
\subsection{Handover Rate and Handover Probability for Nearest Association}	
\label{ho-rate-near}
Assume that a mobile UAV-UE is located at $\boldsymbol{X}_{n-1}$ and let $\boldsymbol{X}_{n-1}$ and $\boldsymbol{X}_{n}$ be two arbitrary successive waypoints. The handover rate is defined as the expected number of handovers per unit time. Hence, inspired from \cite{6477064}, we can compute the handover rate as follows. We first condition on an arbitrary position of the mobile UAV-UE $\boldsymbol{X}_{n}=\boldsymbol{x}_{n}$, and a given realization of the Poisson-Voronoi tessellation $\Phi_b$. Subsequently, the number of handovers will be equal to the number of intersections of the UAV-UE trajectory and the boundary of the Poisson-Voronoi tessellation. Then, by averaging over the spatial distribution of $\boldsymbol{X}_{n}$ and the distribution of Poisson-Voronoi tessellation, we derive the expected number of handovers. Alternatively, we notice that the number of handovers is equivalent to the number of intersections of the Poisson-Voronoi tessellation and the \emph{horizontal projection of the segment $\big[\boldsymbol{X}_{n-1},\boldsymbol{X}_{n}\big]$ on the $x$-$y$ plane}. Therefore, following \cite{6477064,8048668,tabassum2018mobility}, the expected number of handovers during one movement epoch will be: $\Eb\big[N\big]=\frac{2}{\pi}\sqrt{\frac{\lambda_b}{\mu}}$. The handover rate is then the ratio of the expected number of handovers during one movement $\Eb\big[N\big]$ to the mean time of one transition movement $\Eb\big[T\big]$. Since we have
$\Eb[T] = \Eb[\frac{U}{V}] = \frac{\Eb[U]}{\bar{\nu}} 
= \frac{\Omega(\mu,\hbar)}{2\sqrt{\mu }\bar{\nu} }$, where $\Eb[U]=\frac{\Omega(\mu,\hbar)}{2\sqrt{\mu }}$, then, the handover rate will be: 
\begin{align}
\label{ho-rate}
H = \frac{\Eb[N]}{\Eb[T]} = \frac{2}{\pi}\sqrt{\frac{\lambda_b}{\mu}} \Bigg/ \frac{\Omega(\mu,\hbar)}{2\sqrt{\mu }\bar{\nu} } = \frac{4 \bar{\nu} \sqrt{\lambda_b}}{\pi \Omega(\mu,\hbar)}.
\end{align}
\begin{remark}
{\rm Unlike the handover rate for \ac{2D} \ac{RWP} \cite{6477064,8048668,tabassum2018mobility}, $H$ in (\ref{ho-rate}) is a function of the mobility parameter $\mu$ through $\Omega(\mu,\hbar)$. This captures the fact that, in the case of an UAV-UE, since each stochastically generated horizontal displacement is accompanied with a vertical one, the handover rate depends on $\mu$ that affects the vertical displacement switch rates.}
\end{remark}

Next, to characterize the coverage probability of mobile UAV-UEs, we use the concept of \emph{handover probability}. Similar to \cite{tabassum2018mobility} and \cite{7006787}, given the current location of a mobile UAV-UE, the handover probability is defined as the probability that there exists a \ac{BS} closer than the serving \ac{BS} after a unit time. From Fig.~\ref{prob-ho-near}, for two arbitrary consecutive waypoints $\boldsymbol{X}_{n-1}=(\varrho_{n-1},\phi_{n-1},z_{n-1})$ and  $\boldsymbol{X}_{n}=(\varrho_{n},\phi_{n},z_{n})$, the horizontal speed of the UAV-UE from waypoint $\boldsymbol{X}_{n-1}$ to waypoint $\boldsymbol{X}_{n}$ is $\nu_h=\bar{\nu}{\rm cos}(\varphi_n)$, where $\varphi_n= {\rm arccos}\Big(\frac{\varrho_n}{\sqrt{\varrho_n^2 + (z_n-z_{n-1})^2}}\Big)$. It is also assumed that the angle $\phi_{n}$ is taken with respect to the direction of connection as shown in Fig.~\ref{prob-ho-near}. Define $\boldsymbol{q}_{n-1}$ and $\boldsymbol{q}_n$ in $\R^2$ as the horizontal projections of $\boldsymbol{X}_{n-1}$ and the location reached by the UAV-UE after a unit time, respectively. Fig.~\ref{prob-ho-near} illustrates that the UAV-UE is first associated with its nearest \ac{BS} located at $\boldsymbol{q}_0$, i.e., there are no \acp{BS} in the ball of radius $r_0=\lVert \boldsymbol{q}_{n-1}-\boldsymbol{q}_0\rVert$ centered at $\boldsymbol{q}_{n-1}$. Using the law of cosines, $\boldsymbol{q}_{n}$ is at distance $R=\sqrt{r_0^2+(\bar{\nu}{\rm cos}(\varphi_n))^2+2r_0(\bar{\nu}{\rm cos}(\varphi_n)) cos(\phi_n)}$ from the \ac{BS} located at $\boldsymbol{q}_0$.\footnote{Since the UAV-UE starts from waypoint $\boldsymbol{X}_{n-1}$, we assume that it does not change its direction in a time shorter than the unit time. Hence, $q_n$ is assumed to be within the segment $[\boldsymbol{X}_{n-1},\boldsymbol{X}_{n}]$ in Fig.~\ref{prob-ho-near-comp}.}

\begin{figure}[!t]	
\vspace{-0.8 cm}
    \centering
    \subfigure[Nearest association handover scenario]
    {
        \includegraphics[width=2.5in]{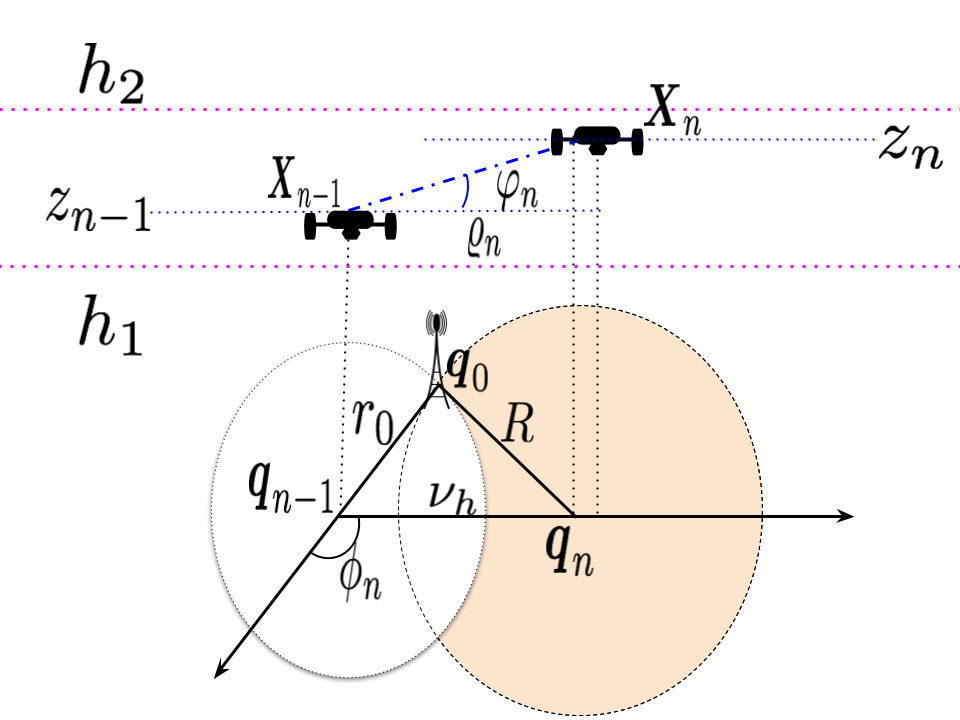}	
        \label{prob-ho-near}
    }
    \subfigure[Inter-CoMP handover scenario]
    {
        \includegraphics[width=2.5in]{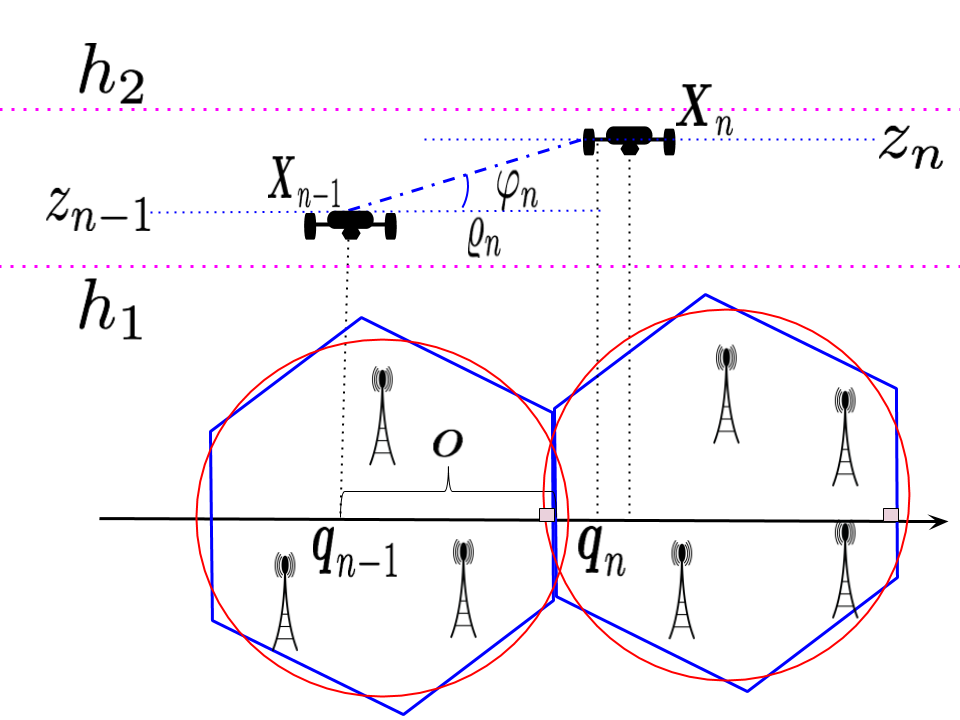}     
        \label{prob-ho-comp}
    }
    \caption{The probability of handover is computed based on the network geometry.}
    \label{prob-ho-near-comp}
    \vspace{-0.6 cm}
\end{figure}

The handover occurs only if another \ac{BS} becomes closer to $\boldsymbol{q}_{n}$ than the serving \ac{BS} located at $\boldsymbol{q}_0$, i.e., when there is at least one \ac{BS} in the shaded area in Fig.~\ref{prob-ho-near}. Therefore, given $\{r_0,z_{n-1}, z_{n}, \varrho_n,\phi_n\}$, the conditional probability of handover is $\Pb(H|r_0,z_{n-1},z_{n},\varrho_n,\phi_n)$
\begin{align}
&= \Pb\Big(\mathcal{B}(\boldsymbol{q}_{n},R)\setminus \mathcal{B}(\boldsymbol{q}_{n-1},r_0)>0|r_0,z_{n-1},z_{n},\varrho_n,\phi_n\Big)
\overset{(a)}{=} 1 - e^{-\lambda_b\big|\mathcal{B}(\boldsymbol{q}_{n},R)\setminus \mathcal{B}(\boldsymbol{q}_{n-1},r_0)\big|}
\nonumber \\
&
 = 1 - e^{-\pi\lambda_b(R^2- r_0^2)}
\overset{}{=} 1 - e^{-\pi\lambda_b\big(r_0^2+(\bar{\nu}{\rm cos}(\varphi_n))^2+2r_0\bar{\nu}{\rm cos}(\varphi_n) cos(\phi_n)- r_0^2\big)},
\end{align}
where (a) follows from the void probability of PPP. $\mathcal{B}(\boldsymbol{q}_{n},R)$ represents the ball with radius $R$ centered at $\boldsymbol{q}_{n}$ and $\mathcal{B}(\boldsymbol{q}_{n-1}, r_0)$ is excluded from $\mathcal{B}(\boldsymbol{q}_{n},R)$ since the \ac{BS} located at $\boldsymbol{q}_0$ is the nearest \ac{BS} to $\boldsymbol{q}_{n-1}$. Finally, averaging over $Z_{n-1},Z_{n},\rho_n$, and $\phi_n$, where $\phi_n\sim\U(0,2\pi)$, we get 
\begin{align}								
\Pb(H|r_0) &=  1 - \Eb_{\rho_n,Z_n,Z_{n-1},\phi_n}		
 \Big[  
 e^{-\pi\lambda_b\big((\bar{\nu}{\rm cos}(\varphi_n))^2+2r_0(\bar{\nu}{\rm cos}(\varphi_n)) cos(\phi_n)\big)} \Big].
\end{align}


For the special case in which the UAV-UE moves radially away from the serving \ac{BS}, i.e., $\phi_n=0$, next, we obtain a tractable yet accurate \ac{UB}  on the handover probability. This assumption is reasonable, particularly, if the UAV-UE follows a horizontally-direct path subject only to vertical fluctuations due to mission, environmental, and atmospheric conditions. 
 \begin{lemma} 
 \label{near-HO}
 An UB on the conditional probability of handover is given by
 \begin{align}
 \label{ub-on-ho-prob}
 \Pb(H|r_0)  &\overset{}{=} 1 - e^{-\frac{2\lambda_b r_0 \bar{\nu}}{\sqrt{\pi } \mu  \hbar^2} \psi(\mu,\hbar)  }									
     e^{-\pi\lambda_b \zeta(\mu,\hbar)  },			
     \end{align} 
where $\psi(\mu,\hbar)= \pi  \hbar^2 \mu  G_{2,3}^{2,2}\Big(\hbar^2 \pi  \mu \Big|
\begin{array}{c} \frac{1}{2},\frac{1}{2} \\ 0,1,-\frac{1}{2} \\ \end{array}
\Big)-G_{2,3}^{2,2}\Big(\hbar^2 \pi  \mu \Big| \begin{array}{c}
 1,\frac{3}{2} \\ 1,2,0 \\ \end{array}\Big)$, $G_{p,q}^{m,n}$ denotes the Meijer $G$ function, defined as
\begin{align}
\label{G-fn}
 G_{p,q}^{m,n}  = \Big(x \Big| \begin{array}{c} a_1,\dots,a_p \\
 b_1,\dots,b_q \\ 
\end{array}
\Big)
= \frac{1}{2\pi i} \int \frac{\prod_{j=1}^{m}\Gamma(b_j+s) \prod_{j=1}^{n}\Gamma(1 - a_j+s)}{\prod_{j=n+1}^{p}\Gamma(a_j+s) \prod_{j=m+1}^{q}\Gamma(1 - b_j +s)} x^s \dd{s},
\end{align}
and 
$\zeta(\mu,\hbar) =  \bar{\nu}^2\Big(1   - \frac{2\pi \mu }{\hbar^2}  \int_{0}^{\hbar}(\hbar-p)  p^2 e^{\pi  \mu  p^2}  \Gamma \left(0,\pi  p^2 \mu \right)    \dd{p}$ \Big). 
  \end{lemma}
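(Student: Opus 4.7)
The plan is to first simplify the expression for $\Pb(H|r_0)$ under the stated special case $\phi_n=0$, then apply Jensen's inequality to convert the expectation of an exponential into a product of exponentials of expectations, and finally evaluate the two resulting moments in closed form. Substituting $\cos(\phi_n)=1$ into the preceding display, the conditional handover probability reduces to $\Pb(H|r_0) = 1 - \Eb\bigl[e^{-\pi\lambda_b(X^2 + 2r_0 X)}\bigr]$, where I introduce the auxiliary variable $X \delequal \bar{\nu}\cos(\varphi_n) = \bar{\nu}\varrho_n/\sqrt{\varrho_n^2 + (Z_n - Z_{n-1})^2}$, and the expectation is over $\rho_n, Z_n, Z_{n-1}$.

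Because $y\mapsto e^{-y}$ is convex, Jensen's inequality gives $\Eb[e^{-\pi\lambda_b(X^2+2r_0X)}] \geq e^{-\pi\lambda_b\Eb[X^2]}\,e^{-2\pi\lambda_b r_0\Eb[X]}$, which immediately yields the desired upper bound on the handover probability in the product form stated in the lemma. It then remains to identify $\Eb[X^2]$ with $\zeta(\mu,\hbar)$ and $2\pi\lambda_b r_0 \Eb[X]$ with $\tfrac{2\lambda_b r_0\bar{\nu}}{\sqrt{\pi}\mu\hbar^2}\psi(\mu,\hbar)$.

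For $\Eb[X^2]$, I let $P = |Z_n - Z_{n-1}|$; since $Z_n, Z_{n-1}$ are i.i.d.\ uniform on $[h_1,h_2]$, the variable $P$ has the triangular density $f_P(p) = 2(\hbar-p)/\hbar^2$ on $[0,\hbar]$. Writing $\Eb[X^2] = \bar{\nu}^2\bigl(1 - \Eb[P^2/(\varrho^2+P^2)]\bigr)$ and evaluating the inner expectation over $\rho$ via the substitution $u=\varrho^2$ together with the standard integral $\int_0^\infty e^{-au}/(u+b)\,du = e^{ab}\Gamma(0,ab)$, I recover exactly the expression for $\zeta(\mu,\hbar)$.

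The main obstacle is the evaluation of $\Eb[X]$, which requires handling $\Eb\bigl[\varrho/\sqrt{\varrho^2+P^2}\bigr]$. My plan is to express both $1/\sqrt{\varrho^2+p^2}$ and $e^{-\pi\mu\varrho^2}$ as Meijer $G$-functions, interchange the order of integration using the Mellin--Barnes representation in \eqref{G-fn}, and collapse the resulting product of Gamma factors into a single $G_{2,3}^{2,2}$ object; after then integrating against the triangular density in $p$, the two $G$-terms that constitute $\psi(\mu,\hbar)$ emerge. This bookkeeping of the Meijer-$G$ parameters is the delicate step, since the bounds of integration in $p$ are finite and produce the two terms of $\psi$ rather than a single closed form. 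Combining this with the expression for $\zeta$ and substituting back into the Jensen bound completes the proof.
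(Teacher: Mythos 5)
Your proposal matches the paper's proof essentially step for step: the paper likewise specializes to $\phi_n=0$, applies Jensen's inequality to upper-bound $\Pb(H|r_0)$ by $1-e^{-\pi\lambda_b\Eb[X^2+2r_0X]}$ (the exponent being linear in the expectation, the bound factors exactly as in your product form), identifies $\Eb[X^2]$ with $\zeta(\mu,\hbar)$ via the same exponential-integral identity and the same folded triangular density of $Z_n-Z_{n-1}$, and evaluates the first-moment term against that density to produce the two Meijer-$G$ terms of $\psi(\mu,\hbar)$. The only difference is mechanical and immaterial: where you propose a direct Mellin--Barnes convolution for $\Eb\bigl[\varrho/\sqrt{\varrho^2+P^2}\bigr]$, the paper first averages over $\rho_n$ to obtain a ${}_1 F_1\bigl(\tfrac{1}{2};0;\pi\mu p^2\bigr)$ kernel and then invokes a tabulated integral (Gradshteyn--Ryzhik, Section 7.8) for the finite $p$-integration, which is precisely the table entry your Mellin--Barnes bookkeeping would rederive.
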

\begin{proof}
Please see Appendix \ref{proof:near-HO}.
\end{proof}
From (\ref{ub-on-ho-prob}), it is intuitive  to see that $\Pb(H|r_0) $ increases with $\bar{\nu}$ and $\lambda_b$ because there will be a higher probability of handover when the UAV-UE velocity is higher, and the network is denser. Moreover, $\Pb(H|r_0)$ decreases as the term $\mu\hbar^2$ increases. This reveals important insights on the effect of the altitude difference $\hbar$ and the density $\mu$. Particularly, the handover probability decreases when  the UAV-UE jointly has higher direction switch rates (higher $\mu$) and larger altitude difference $\hbar$. Next, we  obtain the handover rate for UAV-UEs under CoMP transmissions.


 \vspace{-0.5 cm}
\subsection{Inter-CoMP Handover Rate and Handover Probability}
We define the number of handovers $\Eb[N]$ as the number of intersections of the horizontal projection of the segment $\big[\boldsymbol{X}_{n-1},\boldsymbol{X}_{n}\big]$ and the boundaries of disjoint clusters whose inter-cluster center distance is $2R_h$, as discussed in Section \ref{system-model}. The hexagonal cell has six sides of length $l=\frac{2R_h}{\sqrt{3}}$. Following the Buffon's needle approach for hexagonal cells \cite{tabassum2018mobility}, we next obtain the inter-CoMP handover rate.
\begin{proposition}
\label{inter-CoMP-HO}
The inter-CoMP handover rate for a network of disjoint clusters whose inter-cluster center distance is $2R_h$ is given by 
\begin{align}
H= \frac{\Eb[N]}{\Eb[T]} 
= 2
\frac{\pi  \hbar^2 \mu  G_{2,3}^{2,2}\Big(\hbar^2 \pi  \mu \big|
\begin{array}{c}
 \frac{1}{2},\frac{1}{2} \\
 0,1,-\frac{1}{2} \\
\end{array}
\Big)-G_{2,3}^{2,2}\Big(\hbar^2 \pi  \mu \big|
\begin{array}{c}
 1,\frac{3}{2} \\
 1,2,0 \\ 
\end{array} \Big)}{\pi \sqrt{\pi} R_h\hbar^2 \mu } \bar{\nu}.
\end{align}
\end{proposition}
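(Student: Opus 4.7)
The plan is to compute $\Eb[N]$ and $\Eb[T]$ separately, following the template established in Section~\ref{ho-rate-near} for the nearest-association case, and then form their ratio. The only two changes from that derivation are (i) replacing the Poisson--Voronoi tessellation by the fixed hexagonal cluster tessellation of inter-center distance $2R_h$, and (ii) accounting for the fact that, for a \ac{3D} trajectory, only the horizontal projection $\rho_n$ (rather than the \ac{3D} length $U_n$) generates potential boundary crossings, while $U_n$ governs the epoch duration.

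For $\Eb[N]$, I would condition on the horizontal displacement $\rho_n$ of one movement epoch, whose PDF is $f_{\rho}(\varrho)=2\pi\mu\varrho e^{-\pi\mu\varrho^2}$, and invoke the Buffon-type formula for a regular hexagonal tessellation of side length $l=2R_h/\sqrt{3}$, as developed in \cite{tabassum2018mobility}. Since $\phi_n$ is uniform on $[0,2\pi]$ and independent of $\rho_n$, the conditional expectation given $(\rho_n, Z_{n-1}, Z_n)$ reduces to a constant multiple of $\rho_n / R_h$. Averaging the resulting expression over $\rho_n$ together with the vertical-increment distribution induced by $Z_{n-1},Z_n\sim\U(h_1,h_2)$ produces the combination $\pi\hbar^2\mu\,G_{2,3}^{2,2}(\hbar^2\pi\mu|\cdot)-G_{2,3}^{2,2}(\hbar^2\pi\mu|\cdot)$. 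This is exactly the function $\psi(\mu,\hbar)$ already encountered in Lemma~\ref{near-HO}, where it arose from an analogous expectation involving the elevation angle $\varphi_n$; the required Mellin--Barnes manipulations are therefore identical to those appearing in Appendix~\ref{proof:near-HO}, which I would reuse verbatim.

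For $\Eb[T]$, I would use Lemma~\ref{ch5:lemma1} directly, which yields $\Eb[T]=\Eb[U]/\bar{\nu}=\Omega(\mu,\hbar)/(2\sqrt{\mu}\,\bar{\nu})$ with $\Omega$ as defined there, mirroring the derivation already written out in Section~\ref{ho-rate-near}. Forming the ratio $H=\Eb[N]/\Eb[T]$ and simplifying, the factor $\Omega(\mu,\hbar)$ cancels against the matching factor emerging from the Meijer-$G$ reduction in the previous step, leaving the claimed normalization $1/(\pi\sqrt{\pi}R_h\hbar^2\mu)$ together with the overall scaling $\bar{\nu}$ and the prefactor $2$.

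The main obstacle I anticipate is the Meijer-$G$ reduction that assembles the expectation over $(\rho_n, Z_{n-1}, Z_n)$ into the compact $\psi(\mu,\hbar)$ form: careful tracking of the parameter tables of $G_{2,3}^{2,2}$ is needed so that the two $G$-function contributions combine with the correct signs and coefficients. A secondary subtlety is justifying the hex-tessellation Buffon formula for a random-length needle with isotropic orientation, which I would handle by invoking the rotational symmetry of $\phi_n$ and its independence from $(\rho_n, Z_{n-1}, Z_n)$, so that the averaging over orientation decouples cleanly from the averaging over lengths.
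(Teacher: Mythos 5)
There is a genuine gap, and it sits precisely at the junction of your two steps. In your decomposition, the conditional expected number of crossings given $(\rho_n, Z_{n-1}, Z_n)$ is a constant multiple of $\rho_n/R_h$ — but then the vertical increments play no role at all in $\Eb[N]$, and averaging gives the elementary quantity $\Eb[N]=\frac{2}{\pi R_h}\Eb[\rho_n]=\frac{1}{\pi R_h \sqrt{\mu}}$, a plain Rayleigh mean. No Meijer-$G$ combination can "emerge" from this average, so your claim that it produces $\psi(\mu,\hbar)$ is unfounded, and the subsequent cancellation of $\Omega(\mu,\hbar)$ never happens: there is no matching factor in your $\Eb[N]$ for it to cancel against. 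Carried through honestly, your route yields $H=\frac{2\bar{\nu}}{\pi R_h\,\Omega(\mu,\hbar)}$, which is a genuinely different function of $(\mu,\hbar)$ from the proposition for every $\hbar>0$, because $\Eb\big[\rho_n/\sqrt{\rho_n^2+(z_n-z_{n-1})^2}\,\big]\neq \Eb[\rho_n]\big/\Eb\big[\sqrt{\rho_n^2+(z_n-z_{n-1})^2}\,\big]$ — an expectation of a ratio is not a ratio of expectations.

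The paper's proof takes a structurally different path: it writes the Buffon formula in velocity form, $\Eb[N]=\frac{4\sqrt{3}}{3\pi l}\Eb[V_{\rho}]\Eb[T]=\frac{2}{\pi R_h}\Eb[V_{\rho}]\Eb[T]$ as in (\ref{E-of_N}), so that $\Eb[T]$ cancels in $H=\Eb[N]/\Eb[T]$ \emph{before} it is ever evaluated — Lemma \ref{ch5:lemma1} and $\Omega(\mu,\hbar)$ never enter the argument. The Meijer-$G$ functions then come entirely from the mean horizontal velocity $\Eb[V_{\rho}]=\bar{\nu}\,\Eb[\cos(\varphi_n)]=\bar{\nu}\,\Eb\big[\rho_n/\sqrt{\rho_n^2+(z_n-z_{n-1})^2}\,\big]$: averaging over the Rayleigh-distributed $\rho_n$ produces the kernel ${}_1 F_1\big(\tfrac{1}{2};0;\pi p^2\mu\big)$, and averaging over the triangular density of $p=z_n-z_{n-1}$ reuses exactly the integral evaluated in Appendix \ref{proof:near-HO}, yielding the combination $\pi\hbar^2\mu\,G_{2,3}^{2,2}(\cdot)-G_{2,3}^{2,2}(\cdot)$. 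So the Appendix-\ref{proof:near-HO} machinery you invoke is indeed the right machinery, but it attaches to $\Eb[\cos(\varphi_n)]$ inside $\Eb[N]$, not to a crossing count proportional to $\rho_n$. To repair your proof you must replace your Buffon step with the velocity-based form — noting, as a final caveat, that this step implicitly replaces $\Eb[V_{\rho}T]=\Eb[\rho_n]$ by the product $\Eb[V_{\rho}]\Eb[T]$, i.e., the proposition itself is an independence-style approximation of the exact renewal ratio that your (uncorrected) route would compute.
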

\begin{proof}
Please see Appendix \ref{proof:inter-CoMP-HO}.
\end{proof}

\begin{figure}[!t]	
\vspace{-1.0 cm}
    \centering
    \subfigure[Nearest association scenario]
    {
        \includegraphics[width=2.7in]{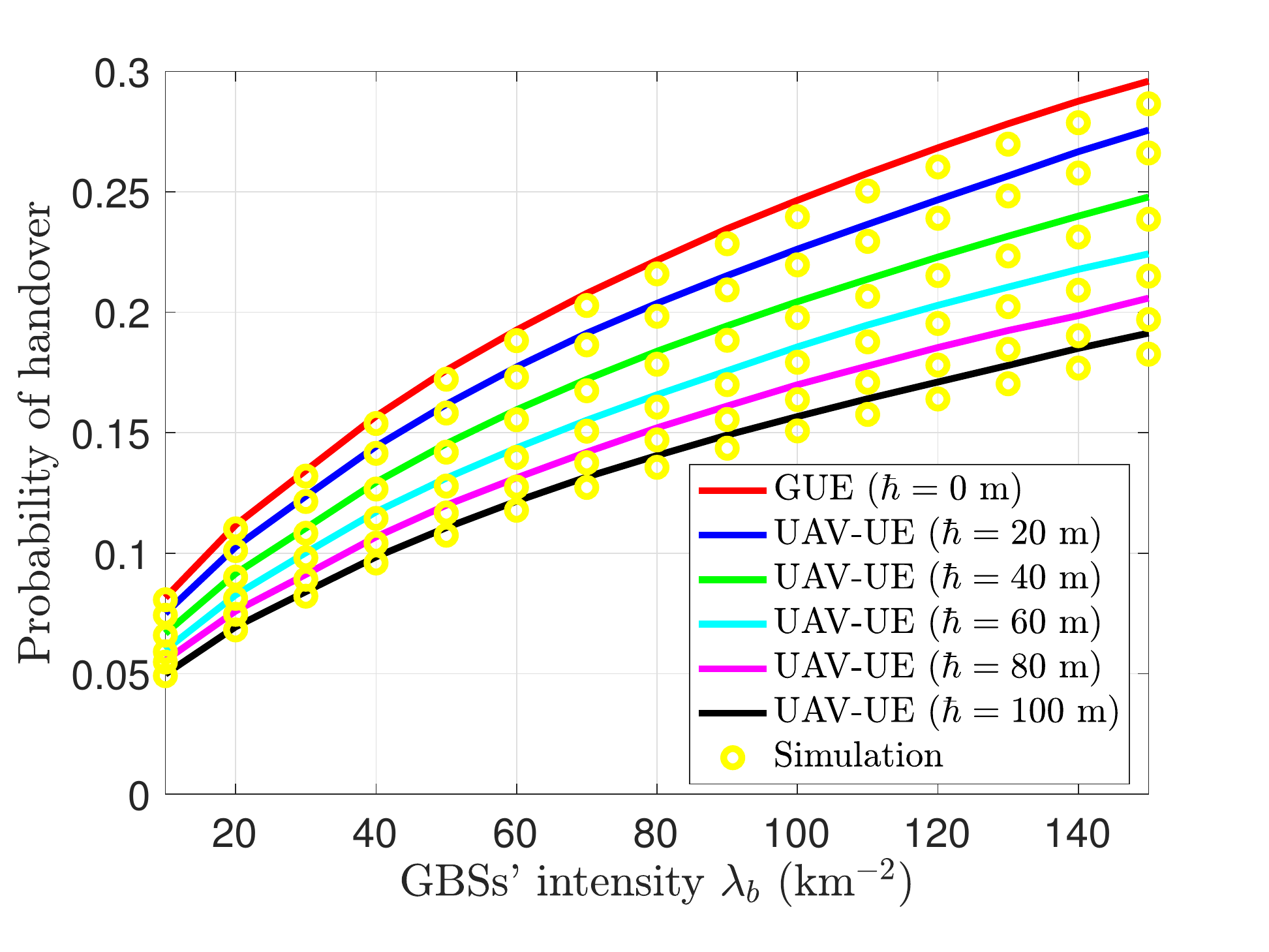}	
        \label{prob-of-ho-near}
    }
    \subfigure[CoMP transmission scenario]
    {
        \includegraphics[width=2.7in]{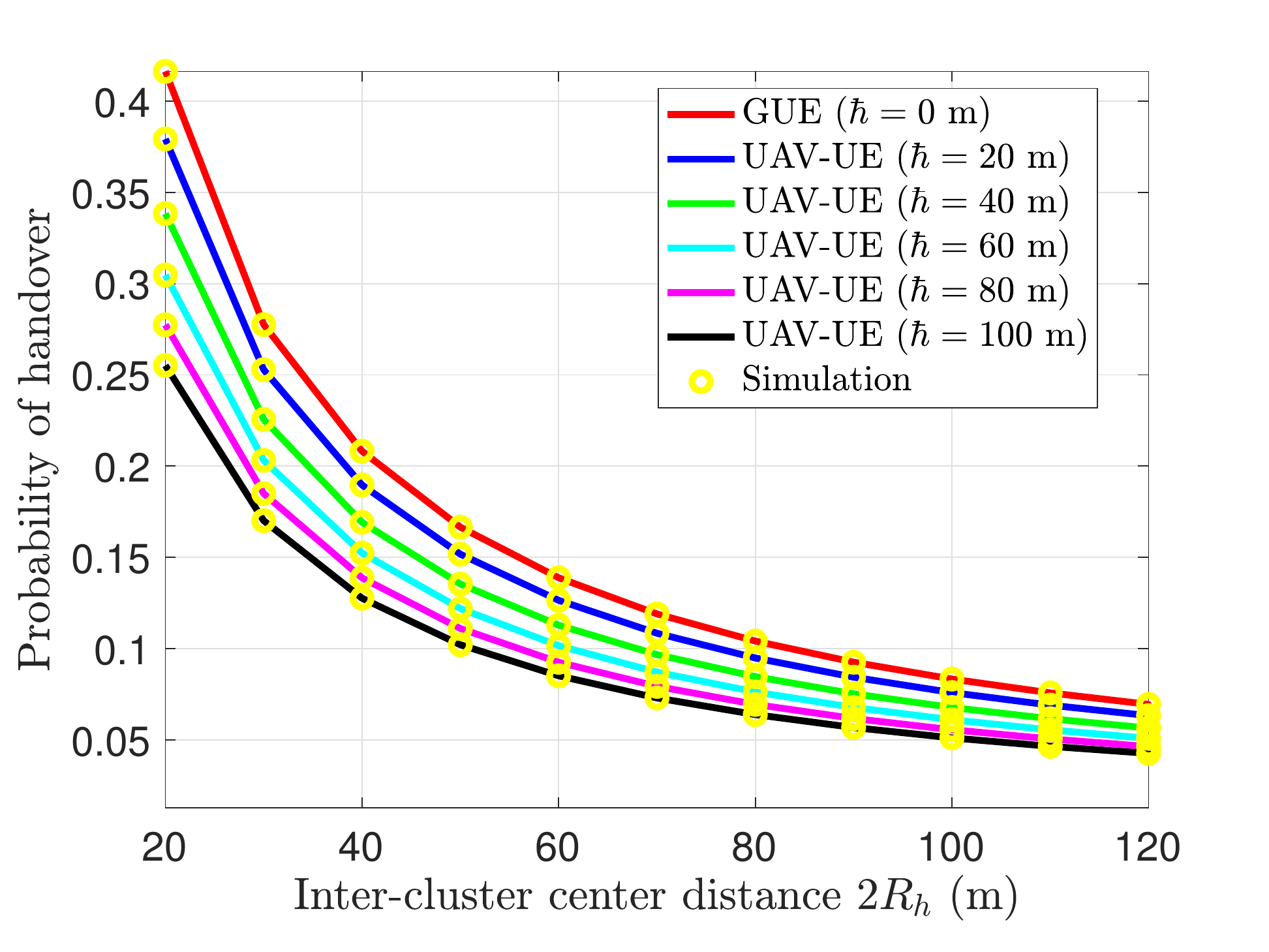}     
        \label{prob-of-ho-comp}
    }
    \caption{The probability of handover is plotted versus network parameters for  nearest association and \ac{CoMP} transmission schemes ($\bar{\nu}=\SI{30}{kmh}$, $\mu=\SI{300}{km^{-2}}$, $h_1=\SI{100}{m}$).}
    \label{prob-of-ho-near-comp}
    \vspace{-0.8 cm}
\end{figure}

We now characterize the UAV-UE inter-CoMP handover probability. To keep the analysis simple, we consider a special case in which the UAV-UE moves perpendicularly to the inter-cluster boundaries, as shown in Fig.~\ref{prob-ho-comp}. As discussed in Section \ref{ho-rate-near}, this is a practical assumption for a UAV-UE that follows a horizontally straight path subject only to vertical fluctuations. Moreover, since these boundaries represent virtual borders between disjoint clusters, the assumption that such boundaries are in a direction perpendicular to the UAV-UE trajectory is quite reasonable. Hence, the UAV-UE moves by a horizontal distance $\bar{\nu}{\rm cos}(\varphi_n)$ in a unit of time in a direction perpendicular to the inter-cluster boundaries. A handover occurs if this travelled horizontal distance is larger than the distance $o$ to the cluster side, formally stated as
 \begin{align}
\Pb(H|o)&=  \Pb\big(\frac{\bar{\nu}\varrho_n}{\sqrt{\varrho_n^2 + (z_n-z_{n-1})^2}} >o\big) 
=  \Pb\big( \varrho_n >\frac{o (z_n-z_{n-1})}{\sqrt{\bar{\nu}^2 - o^2}}\big) 
 \nonumber \\
 \label{sol-intt}
 & = \Eb_{Z_n,Z_{n-1}} \int_{\varrho_n=\frac{o (z_n-z_{n-1})}{\sqrt{\bar{\nu}^2 - o^2}}}^{\infty} 2\pi\mu\varrho_n e^{-\pi\mu\varrho_n^2}  \dd{\varrho_n}
\end{align}
\begin{align}
 &\overset{(a)}{=} \Eb_{Z_n,Z_{n-1}}  e^{-\pi \mu \big(\frac{o^2 (z_n-z_{n-1})^2}{\bar{\nu}^2 - o^2}\big) }        
\overset{(b)}{=} \Eb_{p}  e^{-\pi\mu p^2 \frac{o^2}{\bar{\nu}^2 - o^2} }
   \nonumber \\
 &\overset{(c)}{=} \frac{1}{\hbar^2} 
\frac{ \hbar \sqrt{\frac{\mu o^2}{\bar{\nu}^2 - o^2}} {\rm erf}\left(\sqrt{\pi } \hbar \sqrt{\frac{o^2}{\bar{\nu}^2 - o^2}} \sqrt{\mu }\right)+\frac{e^{-\pi\mu\hbar^2 \frac{o^2}{\bar{\nu}^2 - o^2} } -1}{\pi }}{\frac{\mu o^2}{\bar{\nu}^2 - o^2}  }
 \\
 &=  \frac{1}{\hbar}
\frac{ {\rm erf}\left(\sqrt{\pi } \hbar \sqrt{\frac{\mu o^2}{\bar{\nu}^2 - o^2}}\right)}{ \sqrt{\frac{\mu o^2}{\bar{\nu}^2 - o^2}} } + 
 \frac{1}{\pi \hbar^2}
\frac{ e^{-\pi\mu\hbar^2 \frac{o^2}{\bar{\nu}^2 - o^2} }-1}{\frac{\mu o^2}{\bar{\nu}^2 - o^2} },
\end{align}
where (a) follows from solving the integral of (\ref{sol-intt}), (b) follows from change of variables $p=z_n-z_{n-1}$, with $f_P(p)=\frac{\hbar-|p|}{\hbar^2}, -\hbar\leq p\leq\hbar$, and (c) follows from taking the expectation \ac{w.r.t.} $p$. Recall that $O$ is a \ac{RV} that models the distance between the UAV-UE and the inter-cluster boundaries. Averaging over $O$ given that $f_O(o)=\frac{1}{2R_h}, 0<o<2R_h$, we get $\Pb(H)$. 
\black{However, we observe that if $o>\bar{\nu}$, the handover probability will be zero since the UAV-UE can not travel the distance $o$ in a unit of time, hence, we have}
\begin{align}
\label{ho-prob-comp}
\Pb(H)&\overset{}{=} 
\frac{1}{2R_h\hbar} \int_{o=0}^{\bar{\nu}}
\frac{ {\rm erf}\left(\sqrt{\pi } \hbar \sqrt{\frac{\mu o^2}{\bar{\nu}^2 - o^2}}\right)}{ \sqrt{\frac{\mu o^2}{\bar{\nu}^2 - o^2}} } + 
 \frac{1}{2\pi R_h \hbar^2} \int_{o=0}^{\bar{\nu}}
\frac{ e^{-\pi\mu\hbar^2 \frac{o^2}{\bar{\nu}^2 - o^2} } -1}{\frac{\mu o^2}{\bar{\nu}^2 - o^2} }.
\end{align}

Fig.~\ref{prob-of-ho-near-comp} verifies the accuracy  of the obtained handover probabilities. Fig.~\ref{prob-of-ho-near} presents the handover probability versus BSs' intensity $\lambda_b$ under the nearest association scheme. The figure shows that the obtained \ac{UB} in (\ref{ub-on-ho-prob}) is quite tight. It is also noted  that as long as the UAV-UE has frequent vertical movements, i.e., larger $\hbar$, the handover probability is lower since the effective horizontal travelled distance becomes shorter. The handover probability also monotonically increases with $\lambda_b$ since a higher rate of handover occurs for denser networks. Fig.~\ref{prob-of-ho-comp} shows the inter-CoMP handover probability versus the inter-cluster center distance $2R_h$. The handover probability monotonically decreases with $R_h$ since a lower rate of handover is anticipated when the cluster size increases. 
Next, we will use our proposed RWP model to obtain the coverage probability of \ac{3D} mobile UAV-UEs under the nearest association and CoMP transmission schemes.

\vspace{-0.2 cm}
\section{Coverage Probability of Mobile UAV-UEs} 		
\label{3D-RWP-Cov}
\vspace{-0.1 cm}
Next, we will use the obtained handover probabilities in (\ref{ub-on-ho-prob}) and (\ref{ho-prob-comp}) to quantify the coverage probability of mobile UAV-UEs under the nearest association and CoMP transmissions, respectively. 
It is worth highlighting that (\ref{cov-prob-theory}) represents the probability that a static UAV-UE is in coverage with neither mobility nor handover considered. However, mobile UAV-UEs are susceptible to frequent handovers that would negatively impact their performance. For instance, handover typically results in dropped connections    and causes longer service delays. In fact, higher handover rates lead to a higher risk of \ac{QoS} degradation. 

To account for the user mobility, similar to \cite{tabassum2018mobility,7006787,7827020}, we consider a linear function that reflects the cost of handovers. Under this model, the UAV-UE coverage probability can be  defined as:
\begin{align}
\label{mob-cov0}
P_c(\bar{\nu},\mu,\beta)  = \Pb(\Upsilon \geq \vartheta, \bar{H})+(1-\beta) \Pb(\Upsilon \geq \vartheta,H), 
\end{align}
where the first term represents the probability that the UAV-UE is in coverage and no handover occurring. Besides, the second term is the probability that the UAV-UE is in coverage and handover occurs penalized by a handover cost, where $\beta\in[0, 1]$ represents the probability of connection failure due to handover. The coefficient $\beta$, in effect, measures the system sensitivity to handovers, which highly depends on the hysteresis margin and ping-pong rate \cite{8048668,tabassum2018mobility,7006787,7827020}. Our goal is to obtain the coverage probability of a mobile UAV-UE for a given handover penalty $\beta$ \cite{tabassum2018mobility}. After some manipulations, we can rewrite (\ref{mob-cov0}) as		
\begin{align}
\label{mob-cov1}
P_c(\bar{\nu},\mu,\beta) = (1-\beta)\Pb(\Upsilon \geq \vartheta|r_0) +\beta \Pb(\Upsilon \geq \vartheta,\bar{H}|r_0).
\end{align}
To obtain $P_c(\bar{\nu},\mu,\beta)$, we first need to calculate the statistical distribution of the UAV-UE altitude for our proposed \ac{3D} mobility model. As shown in Fig.~\ref{shaded-area-2}, the \ac{3D}   mobility model defines the vertical movement of the UAV-UE in a finite region $[h_1,h_2]$, \emph{referred to as vertical \ac{1D} RWP} as in \cite{8671460}. Initially, at time instant $t_0$, the UAV-UE is at an arbitrary altitude $h_0$ selected uniformly from the interval  $[h_1,h_2]$. Then, at next time epoch $t_1$, this UAV-UE at $h_0$ selects a new random waypoint $h_1$ uniformly in $[h_1,h_2]$, and moves towards it (along with the spatial movement characterized by $f_{\rho_n}(\varrho_n)$). 
 Once the UAV-UE reaches $h_1$, it repeats the same procedure to find the next destination altitude and so on. After a long running time, the steady-state  altitude distribution converges to a \emph{nonuniform distribution} $F_{Z_{\infty}}(z_{\infty})$ \cite{1233531}, where $Z_\infty$ is a \ac{RV} representing the steady state vertical location of the UAV-UE. Note that random waypoints refer to the altitude of a UAV-UE at each time epoch, which is uniformly-distributed in $[h_1,h_2]$, while vertical transitions are the differences in the UAV-UE altitude throughout its trajectory. While the random waypoints are independent and uniformly-distributed by definition, the  random lengths of vertical transitions are not statistically independent. This is because the endpoint of one movement epoch is the starting point of the next epoch. 
 %
In \cite{1233531}, it is shown that $F_{Z_{\infty}}(z_{\infty}) = \frac{\Eb[L_{z_\infty}]}{\Eb[L]}$, where $L_{z_\infty}$ and $L$ denote the length $\lVert z_\infty -h_1 \rVert$, and the entire movement length at any given epoch, respectively. From \cite{1233531}, we have $\Eb[L] = \frac{\hbar}{3}$ and $\Eb[L_{z_\infty}]$ can be similarly derived from:  
\begin{align}
\Eb[L_{z_\infty}] = \int_{s=h_1}^{h_2} \int_{d=h_1}^{h_2} l_z(s,d) f_S(s) f_D(d) \dd{d} \dd{s},
\end{align}
where $s$ and $d$ refer to the source and destination of a movement, respectively; $f_S(s) = f_D(d)=\frac{1}{\hbar}$, $h_1\leq s,d\leq h_2$, see Fig.~\ref{shaded-area-2}. Because of the symmetry of $s$ and $d$, it is sufficient to restrict the calculation to epochs with $s<d$, and then multiply the result by a factor of 2. A necessary condition for $l_{z_\infty}(s,d)\neq 0$ is that $s\leq z_\infty$. From Fig.~\ref{shaded-area-2}, if $d\leq z_\infty$, we have $l_{z_\infty}(s,d)=d-s$, however, if $d > z_\infty$, we get $l_{z_\infty}(s,d)=z_\infty-s$, which yields 
\begin{align}
\Eb[L_{z_\infty}] &=  \frac{2}{\hbar^2} \int_{s=h_1}^{z_\infty}  \int_{d=s}^{z_\infty} (d-s)  \dd{d} \dd{s} + 
 \frac{2}{\hbar^2} \int_{s=h_1}^{z_\infty}  \int_{d=z_\infty}^{h_2} (z_\infty-s)  \dd{d} \dd{s} 
\nonumber \\
& = \frac{2}{\hbar^2} \big(-\frac{h_1^3}{6} + \frac{h_1^2 h_2}{2} - h_1 h_2 z_\infty + \frac{h_1 z_\infty^2}{2} + \frac{h_2 z_\infty^2}{2} - \frac{z_\infty^3}{3} \big).
\end{align}
Therefore, the \ac{PDF} of $Z_\infty$ is given by
\begin{align}
\label{f-of-z-new}
f_{Z_\infty}(z_\infty) &= \frac{\partial F_{Z_\infty}(z_\infty)}{\partial z_\infty} = \frac{\partial}{\partial z_{\infty}}\frac{\Eb[L_{z_\infty}]}{\Eb[L]} =  \frac{h_1 z_\infty+h_2 z_\infty -h_1 h_2 - z_\infty^2}{\hbar^3/6}
\quad 
\forall h_1<z_\infty<h_2,
\end{align}
and the corresponding mean is given by $L_{z_{\infty}}=\Eb[Z_{\infty}]= \frac{1}{2\hbar^3}\big(h_2^4 -h_1^4+ 2h_1^3 h_2-2h_1 h_2^3\big)$. Next, we will use the derived \ac{PDF} $f_{Z_\infty}(z_\infty)$, along with the probability of handover from the previous section, to fully characterize $P_c(\bar{\nu},\mu,\beta)$ under the nearest association and \ac{CoMP} transmissions.
\subsection{\black{Coverage Probability for Nearest Association}}
\label{cov-near-mob}
Next, we derive the coverage probability of a mobile UAV-UE under the nearest association scheme. Observing (\ref{mob-cov1}), for a given $\beta$, we must compute $\Pb(\Upsilon \geq \vartheta,\bar{H}|r_0)$ to obtain $P_c(\bar{\nu},\mu,\beta)$. The former probability is basically the joint event of being in coverage and no handover occurs. We adopt the tight \ac{UB} on the handover probability obtained in (\ref{ub-on-ho-prob}), where $\Pb(\bar{H},r_0) = 1 - \Pb(H,r_0)$ is the conditional probability of no handover. Unlike static UAV-UEs, under the \ac{3D} \ac{RWP} model, both the altitude of the UAV-UE and the horizontal distance $R_0$ to the nearest BS are \acp{RV}. Since $R_0$ and $Z_\infty$ are two independent \acp{RV}, we have $f_{R_0,Z_\infty}(r_0,z_\infty) = f_{R_0}(r_0)f_{Z_\infty}(z_\infty)$. 

We assume that the UAV-UE has an arbitrary long trajectory that passes through nearly all \ac{SIR} states. Therefore, the average \ac{SIR} through a randomly selected UAV-UE trajectory is inferred from a stationary PPP analysis. This assumption, which is adopted in \cite{tabassum2018mobility,7827020,7006787} for GUEs, is practically reasonable for mobile UAV-UEs such as flying taxis and delivery drones that typically have sufficiently long trajectories. Given the handover probability in (\ref{ub-on-ho-prob}) and the linear function in (\ref{mob-cov1}), the UAV-UE coverage probability under the nearest association scheme is given below.

\begin{theorem}
\label{the-2-mob-near}
The coverage probability of a \ac{3D} mobile UAV-UE associated with its nearest \ac{BS} is
\begin{align}
\label{cov-prob-theory01}		
P_c(\bar{\nu},\mu,\beta) &= 2(1-\beta)\pi\lambda_b \times \int_{h_1}^{h_2}  \int_{0}^{\infty} r_0e^{-\pi\lambda_br_0^2} \Pb_{{\rm c}|r_0,z_{\infty}}^l  f_{Z_\infty}(z_\infty)				
\dd{r_0} \dd{z_\infty} +
 \nonumber \\
 &  2\beta \pi\lambda_b  e^{-\pi\lambda_b \zeta(\mu,\hbar)  }  \times \int_{h_1}^{h_2}  \int_{0}^{\infty} r_0e^{-\pi\lambda_br_0^2} 
 e^{-\frac{2\lambda_b r_0 \bar{\nu}}{\sqrt{\pi } \mu  \hbar^2} \psi(\mu,\hbar)  }     
 \Pb_{{\rm c}|r_0,z_{\infty}}^l f_{Z_\infty}(z_\infty) \dd{r_0} \dd{z_\infty}, 
\end{align}			
where $\Pb_{{\rm c}|r_0,z_{\infty}}^l = \lVert e^{\boldsymbol{T}_{m_l}}\rVert_1$,  $\boldsymbol{T}_{m_l}$ is defined as $\boldsymbol{T}_{K}$ in (\ref{cov-prob-theory}), with $\Omega(\varpi)_{|r_0,z_{\infty}} 
= -2\pi \lambda_b\int_{\nu=r_0}^{\infty}\big(1 - \delta_l\Pb_{l}(\nu) - \delta_n\Pb_{n}(\nu)  \big)\nu\dd{\nu}$, 
$\delta_l= \Big(1 + \frac{\varpi P_t A_l G_s (\nu^2 + z_\infty^2)^{-\alpha_l/2} }{m_l} \Big)^{-m_l}$, 
$\delta_n= \Big(1 + \frac{\varpi P_t A_n G_s (\nu^2 + z_\infty^2)^{-\alpha_n/2} }{m_n} \Big)^{-m_n}$, and 
 $\varpi =\frac{\vartheta m_l}{P_t A_l G_s (r_0^2 + z_\infty^2)^{-\alpha_l/2}}$; $\psi(\mu,\hbar)$ and $\zeta(\mu,\hbar)$ are given in Lemma \ref{near-HO}.
\end{theorem}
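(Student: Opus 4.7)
The plan is to combine the decomposition in equation (\ref{mob-cov1}) with three ingredients already established: the conditional coverage probability under nearest association from Corollary \ref{corr-near}, the upper bound on the handover probability from Lemma \ref{near-HO}, and the steady-state altitude PDF $f_{Z_\infty}(z_\infty)$ derived in equation (\ref{f-of-z-new}). The overall strategy is to treat the mobile UAV-UE, at a generic instant along a long trajectory, as a static user whose horizontal distance $R_0$ to the nearest BS and whose altitude $Z_\infty$ are independent random variables, and then to deconditionalize term by term.

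First, I would rewrite (\ref{mob-cov1}) as
\begin{equation*}
P_c(\bar{\nu},\mu,\beta) = (1-\beta)\,\Pb(\Upsilon \geq \vartheta \mid r_0) + \beta\,\Pb(\Upsilon \geq \vartheta \mid r_0)\,\Pb(\bar{H}\mid r_0),
\end{equation*}
invoking the conditional independence of the SIR event and the handover event given $r_0$, which is justified because handover depends only on the relative geometry to the PPP of BSs while the SIR conditioned on $r_0$ depends on independent small-scale fading and on the interfering BSs outside the ball of radius $r_0$. Next, I would substitute $\Pb(\Upsilon \geq \vartheta \mid r_0) = \Pb_{c\mid r_0,z_\infty}^l$ from Corollary \ref{corr-near}, which itself follows from Theorem \ref{ch5:cov-prob} with $\kappa=1$ and the serving distance replaced by $r_0$, and with $h=z_\infty-h_{\rm BS}$ now reflecting the random altitude. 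This yields the Toeplitz-matrix representation $\lVert e^{\boldsymbol{T}_{m_l}}\rVert_1$ with the expressions for $\Omega(\varpi)_{\mid r_0,z_\infty}$, $\delta_l$, $\delta_n$, and $\varpi$ appearing in the statement.

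Then I would deconditionalize. Using the stationarity argument stated in the paragraph preceding the theorem (the trajectory is long enough that averaging along it is equivalent to averaging over the stationary distribution of positions), the horizontal serving distance $R_0$ follows the standard 2D nearest-neighbor distribution $f_{R_0}(r_0)=2\pi\lambda_b r_0 e^{-\pi\lambda_b r_0^2}$ from Corollary \ref{corr-near}, and $Z_\infty$ follows the steady-state PDF $f_{Z_\infty}(z_\infty)$ in (\ref{f-of-z-new}). Independence of $R_0$ and $Z_\infty$ gives $f_{R_0,Z_\infty}(r_0,z_\infty)=f_{R_0}(r_0)f_{Z_\infty}(z_\infty)$, so the first term of $P_c$ becomes the double integral
\begin{equation*}
2(1-\beta)\pi\lambda_b\int_{h_1}^{h_2}\!\!\int_0^{\infty} r_0 e^{-\pi\lambda_b r_0^2}\,\Pb_{c\mid r_0,z_\infty}^l\,f_{Z_\infty}(z_\infty)\,\dd r_0\,\dd z_\infty.
\end{equation*}
For the second term I would substitute $\Pb(\bar{H}\mid r_0)=1-\Pb(H\mid r_0)$ using the upper bound (\ref{ub-on-ho-prob}) from Lemma \ref{near-HO}, which factors as $\exp\!\big(-\tfrac{2\lambda_b r_0 \bar{\nu}}{\sqrt{\pi}\mu\hbar^2}\psi(\mu,\hbar)\big)\exp(-\pi\lambda_b\zeta(\mu,\hbar))$. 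The second exponential pulls out of the integral because it is independent of $r_0$ and $z_\infty$, while the first exponential merges with $e^{-\pi\lambda_b r_0^2}$ inside the $r_0$-integral, giving exactly the second term of (\ref{cov-prob-theory01}).

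The main obstacle, and the place where I would spend most of the care, is justifying the conditional independence of handover and SIR given $r_0$ together with the ergodic/stationary argument that lets us replace time averages along the trajectory with the stationary distributions of $R_0$ and $Z_\infty$. Once these modeling assumptions are in place, the rest of the derivation is a clean substitution of Corollary \ref{corr-near}, Lemma \ref{near-HO}, and the altitude PDF (\ref{f-of-z-new}) into the two-term decomposition of (\ref{mob-cov1}), with no additional analytic machinery required.
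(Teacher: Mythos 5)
Your proposal is correct and follows essentially the same route as the paper: decompose via (\ref{mob-cov1}), plug in the conditional coverage from Corollary \ref{corr-near} with the altitude $h_d$ replaced by the random $Z_\infty$ of (\ref{f-of-z-new}), factor the joint coverage/no-handover event using the bound of Lemma \ref{near-HO}, and decondition over the independent pair $(R_0,Z_\infty)$. In fact you are more explicit than the paper's two-sentence proof, since you name the conditional-independence factorization of the SIR and handover events given $r_0$ and the stationarity argument that the paper only states informally in the preceding paragraph.
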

\begin{proof}
%
The first term in (\ref{cov-prob-theory01}) is obtained directly from (\ref{mob-cov1}) and Corollary \ref{corr-near},  where the UAV-UE altitude $h_d$ is replaced with the \ac{RV} $z_{\infty}$ whose \ac{PDF} is given in (\ref{f-of-z-new}). Additionally, the second term in (\ref{cov-prob-theory01}) represents the joint event of no handover and being in coverage, which is computed based on $\Pb(H|r_0, \phi)$ in (\ref{ub-on-ho-prob}). 
\end{proof}

It is clear from (\ref{cov-prob-theory01}) that, if $\beta=1$, the first term vanishes and the UAV-UE will be in coverage only if there is no handover associated with its mobility. This is because the handover will always cause connection failure. Moreover, since it is hard to directly obtain insights from (\ref{cov-prob-theory01}) on the effect of the altitude $z_{\infty}$ and the altitude difference $\hbar$, several numerical results based on (\ref{cov-prob-theory01}) will be shown in Section \ref{num-result} to provide key practical insights. Next, we similarly derive the coverage probability of a mobile UAV-UE under \ac{CoMP} transmission.

\vspace{-0.3 cm}
\subsection{Coverage Probability for CoMP Transmission}
Similar to Section \ref{cov-near-mob}, we employ the handover probability in (\ref{ho-prob-comp}) and the linear function in (\ref{mob-cov1}) to obtain the coverage probability under CoMP transmission. The probability of inter-cluster handover $\Pb(H)$ is derived in (\ref{ho-prob-comp}) assuming that the UAV-UE moves perpendicular to the cluster boundaries. To compute $\Pb(\Upsilon \geq \vartheta,\bar{H})$ in (\ref{mob-cov1}), the joint \ac{PDF} of the serving distances needs to be characterized given the random location of the UAV-UE along its trajectory. However, for tractability, we consider the joint serving distances when the UAV-UE horizontal projection is at the cluster center. Therefore, the obtained performance can be seen as an \ac{UB} on the performance of a randomly located UAV-UE. This assumption is in line with prior work \cite{7880694} and the analysis for static UAV-UEs, where we sought an \ac{UB} on the coverage probability. 

Since $\boldsymbol{R}_{\kappa}= [R_1, \dots, R_{\kappa}]$ and $Z_\infty$ are independent \acp{RV}, their joint \ac{PDF} is $f_{\boldsymbol{R}_{\kappa},Z_\infty}(\boldsymbol{r}_{\kappa},r_0,z_\infty) =f_{\boldsymbol{R}_{\kappa}}(\boldsymbol{r}_{\kappa})f_{Z_\infty}(z_\infty)$. Given (\ref{ho-prob-comp}) and (\ref{mob-cov1}), an \ac{UB} on the coverage probability of a mobile UAV-UE under \ac{CoMP} transmissions is obtained in the next theorem.
\begin{theorem}
\label{ub-cov-comp-aeue}
An \ac{UB} on the coverage probability of a \ac{3D} mobile UAV-UE cooperatively served via CoMP transmission from \acp{BS} within a collaboration distance $R_c$ is given by:
\begin{align}
\label{cov-prob-theory-mob}		
\Pb_{{\rm c}} = \big(1-\beta + \beta\times\Pb(\bar{H}) \big)\sum_{\kappa=1}^{\infty} \Pb(n=\kappa) \int_{h_1}^{h_2}  \int_{\boldsymbol{r_\kappa}=\boldsymbol{R_c}}^{\boldsymbol{\infty}} \Pb_{{\rm c}|\boldsymbol{r}_{\kappa},z_{\infty}}^l  f_{Z_\infty}(z_\infty)
\prod_{i=0}^{\kappa} \frac{2r_i}{R_c^2} \dd{ \boldsymbol{r}_{\kappa}} \dd{z_\infty},
\end{align}
where $\Pb(\bar{H}) = 1-\Pb(H)$ from (\ref{ho-prob-comp}), $\Pb_{{\rm c}|\boldsymbol{r}_{\kappa},z_{\infty}}^l = \lVert e^{\boldsymbol{T}_{K}}\rVert_1$, and $\boldsymbol{T}_K$ is as defined in (\ref{cov-prob-theory}), with \black{$ \Omega(\varpi)_{|\boldsymbol{r}_{\kappa},z_{\infty}} =-2\pi \lambda_b\int_{\nu=R_c}^{\infty}\Big(1 -\delta_l\Pb_{l}(\nu) - \delta_n\Pb_{n}(\nu) \Big)\nu\dd{\nu}$}, $\delta_l= \Big(1 + \frac{\varpi P_t A_l G_s (\nu^2 + z_\infty^2)^{-\alpha_l/2} }{m_l} \Big)^{-m_l}$, $\delta_n= \Big(1 + \frac{\varpi P_t A_n G_s (\nu^2 + z_\infty^2)^{-\alpha_n/2} }{m_n} \Big)^{-m_n}$, $\varpi = \frac{\vartheta}{\kappa P_t\theta}$,  $\theta =\frac{\sum_{i}^{\kappa}{\zeta_{l}(r_i)^2}}{m_l \sum_{i}^{\kappa}  \zeta_{l}(r_i)}$, and $\zeta_{l}(r_i) = A_l G_s \big(r_i^2 + z_{\infty}^2\big)^{-\alpha_l/2}$.
\end{theorem}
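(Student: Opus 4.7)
The plan is to reduce Theorem \ref{ub-cov-comp-aeue} to the static CoMP coverage analysis of Theorem \ref{ch5:cov-prob}, composed with the handover cost model of (\ref{mob-cov1}) and the steady-state altitude distribution $f_{Z_\infty}(z_\infty)$ derived in (\ref{f-of-z-new}). Starting from (\ref{mob-cov1}), I first rewrite
\begin{align*}
P_c(\bar{\nu},\mu,\beta) &= (1-\beta)\Pb(\Upsilon \geq \vartheta) + \beta\Pb(\Upsilon \geq \vartheta,\bar{H}).
\end{align*}
The key structural observation is that, conditioned on the network geometry and the UAV-UE altitude, the event $\{\Upsilon\geq\vartheta\}$ is driven by the small-scale fading and interferer locations outside the cluster, whereas the handover event $\bar{H}$ is determined by the UAV-UE horizontal trajectory relative to the hexagonal cluster boundaries. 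These two sources of randomness are independent, so $\Pb(\Upsilon \geq \vartheta,\bar{H}) = \Pb(\Upsilon \geq \vartheta)\Pb(\bar{H})$, which yields the prefactor $(1-\beta+\beta\,\Pb(\bar{H}))$. Substituting $\Pb(\bar{H}) = 1-\Pb(H)$ with $\Pb(H)$ from (\ref{ho-prob-comp}) accounts for the CoMP handover cost.

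Next, I would handle the static-equivalent factor $\Pb(\Upsilon \geq \vartheta)$ by conditioning on the number of serving BSs $\kappa$, on the serving-distance vector $\boldsymbol{R}_\kappa=\boldsymbol{r}_\kappa$, and on the steady-state altitude $Z_\infty = z_\infty$. Under the cluster-centric assumption (UAV-UE horizontal projection at the cluster center, which yields the claimed upper bound as in Theorem \ref{ch5:cov-prob} and \cite{7880694}), the conditional $\sir$ expression (\ref{nakagami}) applies verbatim, with the only change being that the vertical separation $h$ is replaced by $z_\infty$ throughout $\Pb_l(\nu)$, $\Pb_n(\nu)$, and $\zeta_l(\cdot)$, $\zeta_n(\cdot)$. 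I can then invoke the Cauchy--Schwarz step (\ref{cauchy}), the Gamma moment-match (\ref{equiv-gamma}), and the Toeplitz-matrix exponential representation developed in the proof of Theorem \ref{ch5:cov-prob} to express $\Pb_{{\rm c}|\boldsymbol{r}_{\kappa},z_{\infty}}^l = \lVert e^{\boldsymbol{T}_{K}}\rVert_1$, where the entries of $\boldsymbol{T}_K$ are populated by derivatives of the conditional log-Laplace transform $\Omega(\varpi)_{|\boldsymbol{r}_{\kappa},z_{\infty}}$ evaluated with $z_\infty$ in the role of $h$.

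Finally, I would deconditioning by averaging over the joint law of $(\boldsymbol{R}_\kappa, Z_\infty)$. Because the UAV-UE is assumed to move on a trajectory that is long enough to visit essentially all SIR states (as invoked in Theorem \ref{the-2-mob-near} following \cite{tabassum2018mobility,7827020,7006787}) and because the BS PPP is independent of the UAV-UE's vertical motion, the serving distances and the steady-state altitude are independent, giving $f_{\boldsymbol{R}_\kappa,Z_\infty}(\boldsymbol{r}_\kappa,z_\infty) = f_{\boldsymbol{R}_\kappa}(\boldsymbol{r}_\kappa)\,f_{Z_\infty}(z_\infty)$ with $f_{\boldsymbol{R}_\kappa}(\boldsymbol{r}_\kappa)=\prod_{i=0}^\kappa \tfrac{2r_i}{R_c^2}$ from the BPP analysis of Section \ref{cov-prob-static}. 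Summing over $\kappa$ weighted by $\Pb(n=\kappa)$ and multiplying by the handover prefactor produces (\ref{cov-prob-theory-mob}).

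The main obstacle I anticipate is justifying cleanly that one can swap the static-coverage derivation into the mobile setting with only $h\leftarrow z_\infty$: both the LoS probability (\ref{prob-los}) and the path loss (\ref{zeta-fading}) depend on $h$, and one must verify that the ergodic/time-average interpretation of $\Pb(\Upsilon \geq \vartheta)$ under $f_{Z_\infty}$ is consistent with the snapshot PPP computation. This is handled by the long-trajectory argument cited above, combined with the independence of fading realizations across interferers at any instant. The upper-bound nature of the result comes from (i) the cluster-center approximation and (ii) the Cauchy--Schwarz and moment-match bounds inherited from Theorem \ref{ch5:cov-prob}.
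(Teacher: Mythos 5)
Your proposal is correct and follows essentially the same route as the paper: the authors prove Theorem \ref{ub-cov-comp-aeue} in one line by combining the handover cost model (\ref{mob-cov1}) with the static CoMP analysis of Theorem \ref{ch5:cov-prob} (with $h$ replaced by $z_\infty$ and averaged against $f_{Z_\infty}$), analogously to Theorem \ref{the-2-mob-near}, which is exactly your decomposition. Your explicit justification of the prefactor $(1-\beta+\beta\,\Pb(\bar{H}))$ via independence of the handover event from the coverage event (an independence the paper uses implicitly, and which shares the same mild approximation of decoupling the vertical-displacement process from the steady-state altitude) simply fleshes out what the paper leaves tacit.
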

\begin{proof}
The proof follows from (\ref{mob-cov1}) and Theorem \ref{ch5:cov-prob}, and is analogous to Theorem \ref{the-2-mob-near}. 
\end{proof}

The effect of $\beta$ on the coverage probability in (\ref{cov-prob-theory-mob}) can be interpreted in a similar way to the nearest association scheme in (\ref{cov-prob-theory01}). Moreover, conditioning on $Z_{\infty}=z_{\infty}$, and for a given $\beta$ in (\ref{cov-prob-theory-mob}), the yielded expression holds the same insights as for static UAV-UEs in Section \ref{static-uav-comp}. In particular, what the Nakagami fading parameter $m_l$, antenna down-tilting angle,  and the collaboration distance $R_c$ entail for the performance of mobile UAV-UEs is similar to the that of static UAV-UEs. Finally, a simple lower bound on the mobile UAV-UE coverage probability can be obtained similar to Corollary \ref{ch5:cov-prob-lb}, with the detailed omitted due to space limitation.

%

\vspace{-0.4 cm}
\section{Simulation Results and Analysis}
\label{num-result}
\begin{table}[!t]		
\vspace{-0.3cm}
\caption{Simulation Parameters} 
\centering 
\scalebox{0.7}{
\begin{tabular}{|c| c| c| c| c | c|} 
\hline\hline 
Description & Parameter & Value&Description & Parameter & Value  \\ [0.5ex] 
\hline 
LoS path-loss exponent& $\alpha_{l}$& 2.09 &$\sir$ threshold&$\vartheta$&\SI{0}{\deci\bel}\\ 
NLoS path-loss exponent& $\alpha_{n}$& 3.75 &\acp{BS}' intensity &$\lambda_{b}$&20 \acp{BS}/\SI{}{km}$^2$\\ 
LoS path-loss constant& $A_{l}$& \SI{-41.1}{dB} &Inter-cluster center distance&$2R_h$&\SI{380}{m}\\
NLoS path-loss constant& $A_{n}$ & \SI{-32.9}{dB} &Antenna main-lobe gain&$G_m$&\SI{10}{dB}\\	
Nakagami fading parameter (LoS)&$m_l$&3&Antenna side-lobe gain&$G_s$&\SI{-3.01}{dB}\\
Nakagami fading factor (NLoS)&$m_n$&1&\ac{BS} antenna height&$h_{\textrm{BS}}$ &\SI{30}{m}\\
Area fraction occupied by buildings &$a$& 0.3&UAV-UE altitude&$h_d$&\SI{120}{m}\\
Density of buildings&$\eta$ &$\SI{300}{km^{-2}}$&Simulation area&$R_{\rm sim}$&$\SI{20}{km^2}$\\
Buildings height Rayleigh parameter&$c$ &$\SI{20}{m}$&Mean altitude of mobile UAV-UEs&$L_{z_{\infty}}$&$\SI{150}{m}$\\
\hline 
\end{tabular}}
\label{ch4:table:sim-parameter} 
\vspace{-0.5cm}
\end{table}

\begin{figure}[!t]
\vspace{-0.0 cm}	
    \centering
    \subfigure[UAV-UE coverage probability versus its altitude $h_d$]
    {
        \includegraphics[width=3.1in]{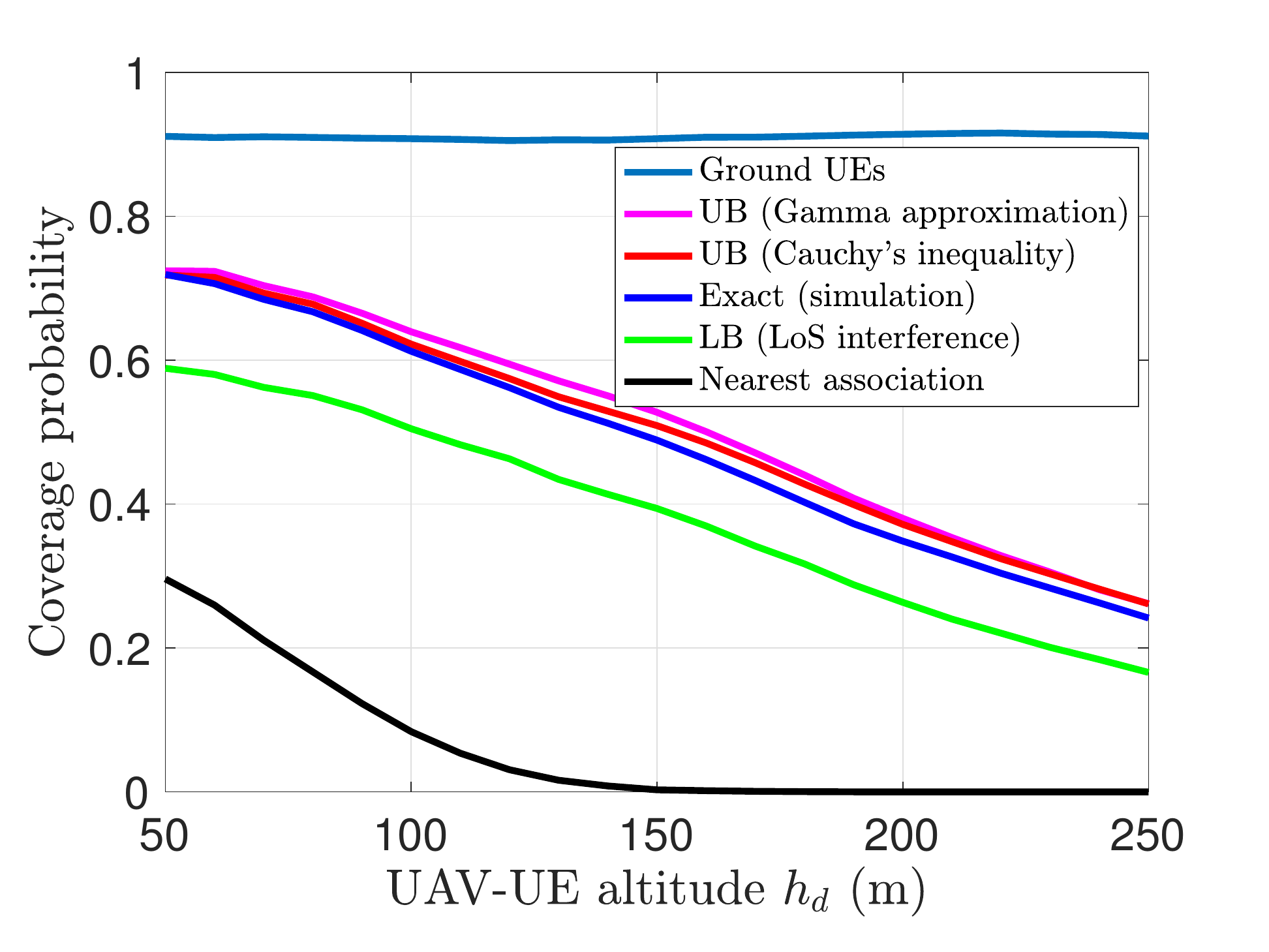}		
        \label{cov_prob_vs_alt}
    }
    \subfigure[UAV-UE coverage probability versus \ac{BS}'s intensity  $\lambda_b$]
    {
        \includegraphics[width=3.1in]{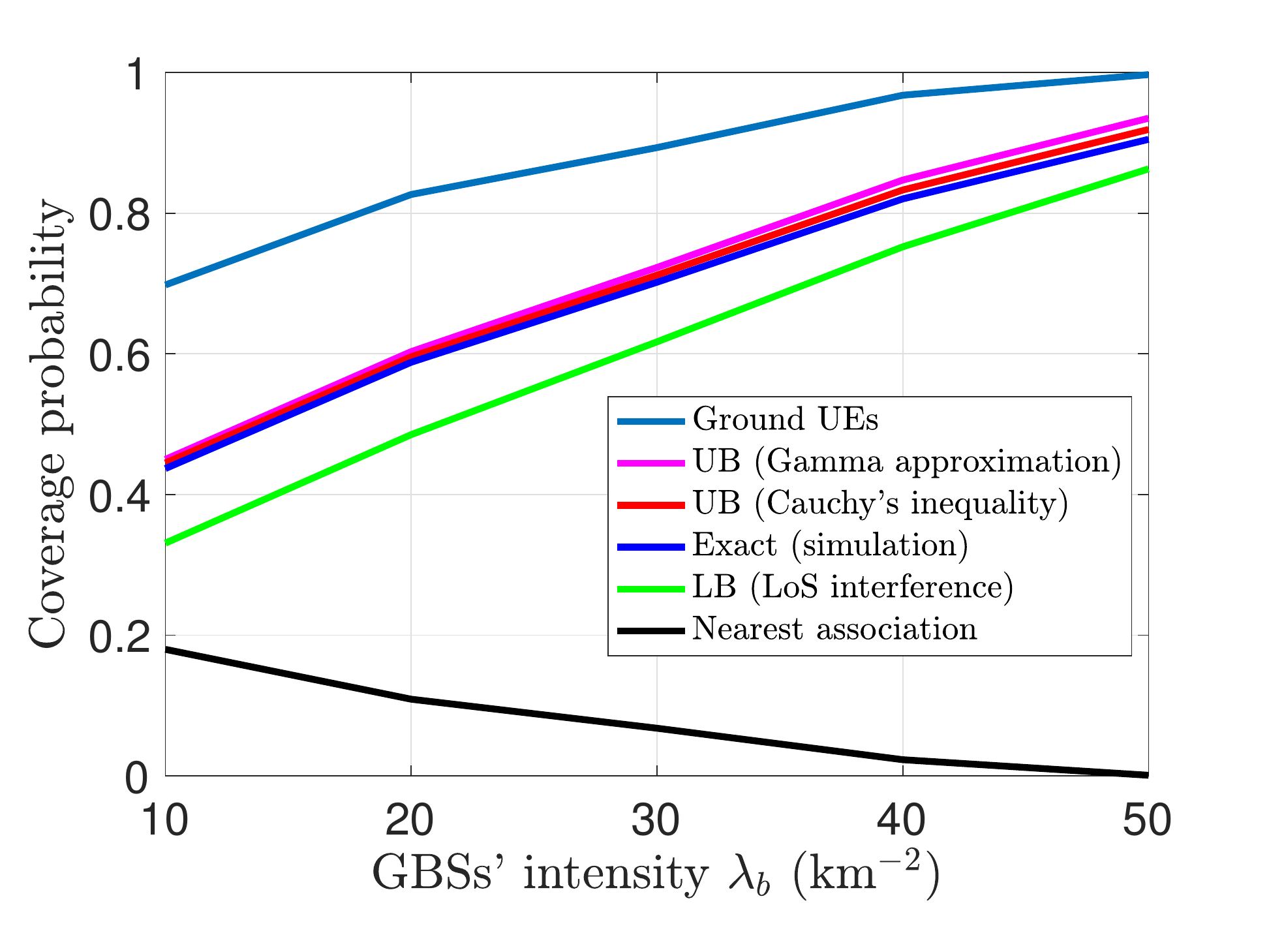}		
        \label{cov-prob-vs_lamda}
    }
    \caption{The derived upper and lower bounds on the static UAV-UE coverage probability are plotted versus the  UAV-UE altitude $h_d$ and \acp{BS}' intensity $\lambda_b$.} 
    \label{cov_prob_vs_alt_lamda}
    \vspace{-0.7 cm}
\end{figure}		
For our simulations, we consider a network having the  parameter values indicated in Table \ref{ch4:table:sim-parameter}. In Fig.~\ref{cov_prob_vs_alt_lamda}, we show the effect of the UAV-UE altitude and \acp{BS}' intensity on the coverage probability of static UAV-UEs, with that of \acp{GUE} plotted for comparison. Fig.~\ref{cov_prob_vs_alt} shows that the coverage probability of UAV-UEs monotonically decreases as $h_d$ increases. This is because, as the UAV-UEs altitude increases, the signal power decreases while the \ac{LoS} interference becomes dominant. Fig.~\ref{cov_prob_vs_alt} also shows that the derived \ac{UB} on the coverage probability in (\ref{cov-prob-theory}) is considerably tight. 
Meanwhile, Fig.~\ref{cov-prob-vs_lamda} illustrates the effect of \acp{BS}' intensity $\lambda_b$ on the performance of UAV-UEs. Except for the nearest association scheme, the coverage probability improves with $\lambda_b$ since more \acp{BS} cooperate to serve the aerial (and ground) \acp{UE}. However, when the UAV-UE associates to its nearest \ac{BS}, the effect of interference increases as the network becomes denser. 

\begin{figure}[!t]	
\vspace{-0.0 cm}
    \centering
    \subfigure[$\bar{\nu}=\SI{50}{kmh}$, $\mu=\SI{300}{km^{-2}}$]
    { \hspace*{-0.5 in}
        \includegraphics[width=2.4in]{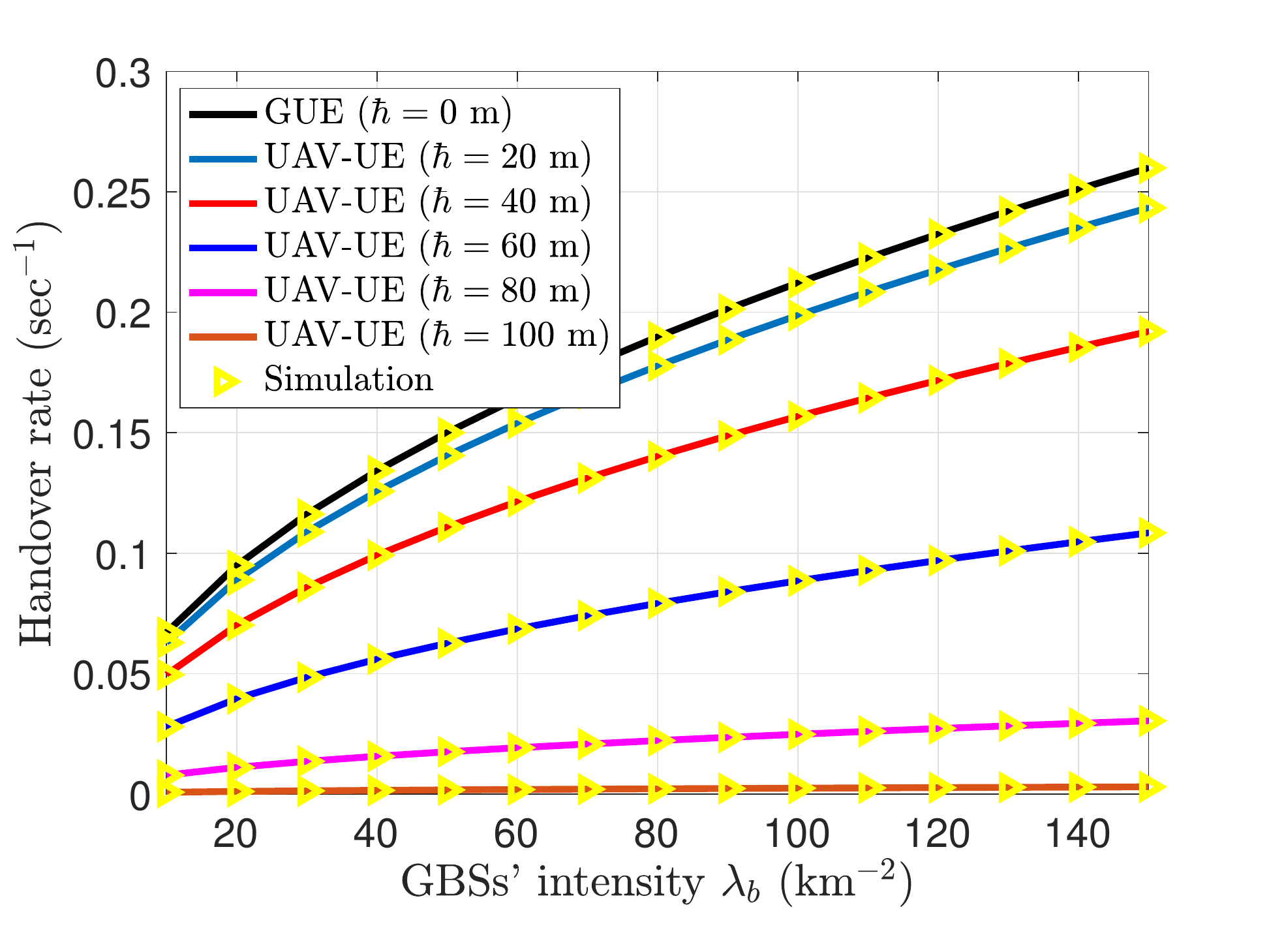}		
        \label{HO-rate-1}
    }
    \subfigure[$\hbar=\SI{30}{m}$, $\mu=\SI{300}{km^{-2}}$]
    {
        \includegraphics[width=2.4in]{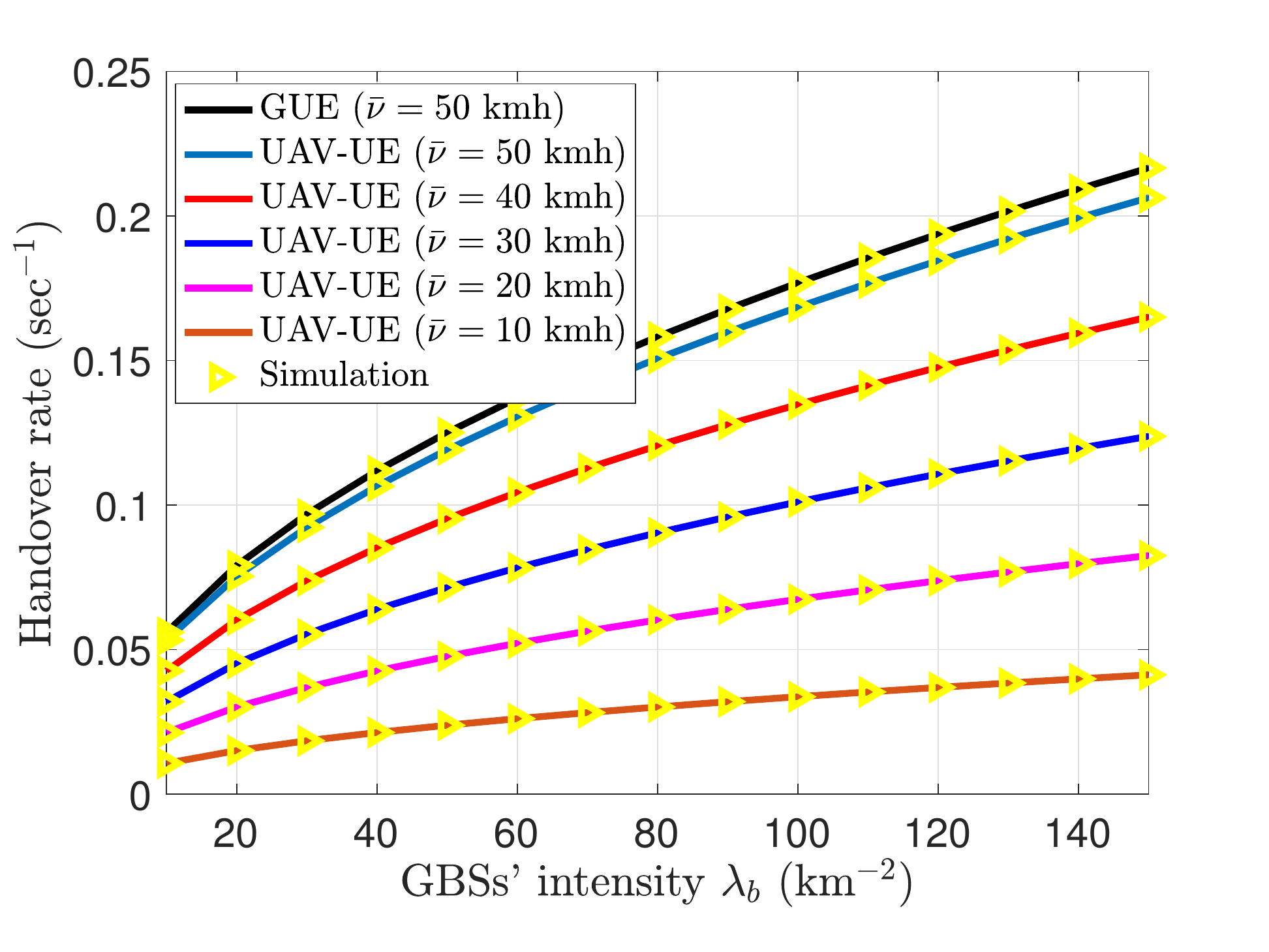}
        \label{HO-rate-2}
    }
    \subfigure[$\beta=0.5$, $\mu=\SI{100}{km^{-2}}$, $\vartheta=\SI{-10}{dB}$]
    {
        \includegraphics[width=2.4in]{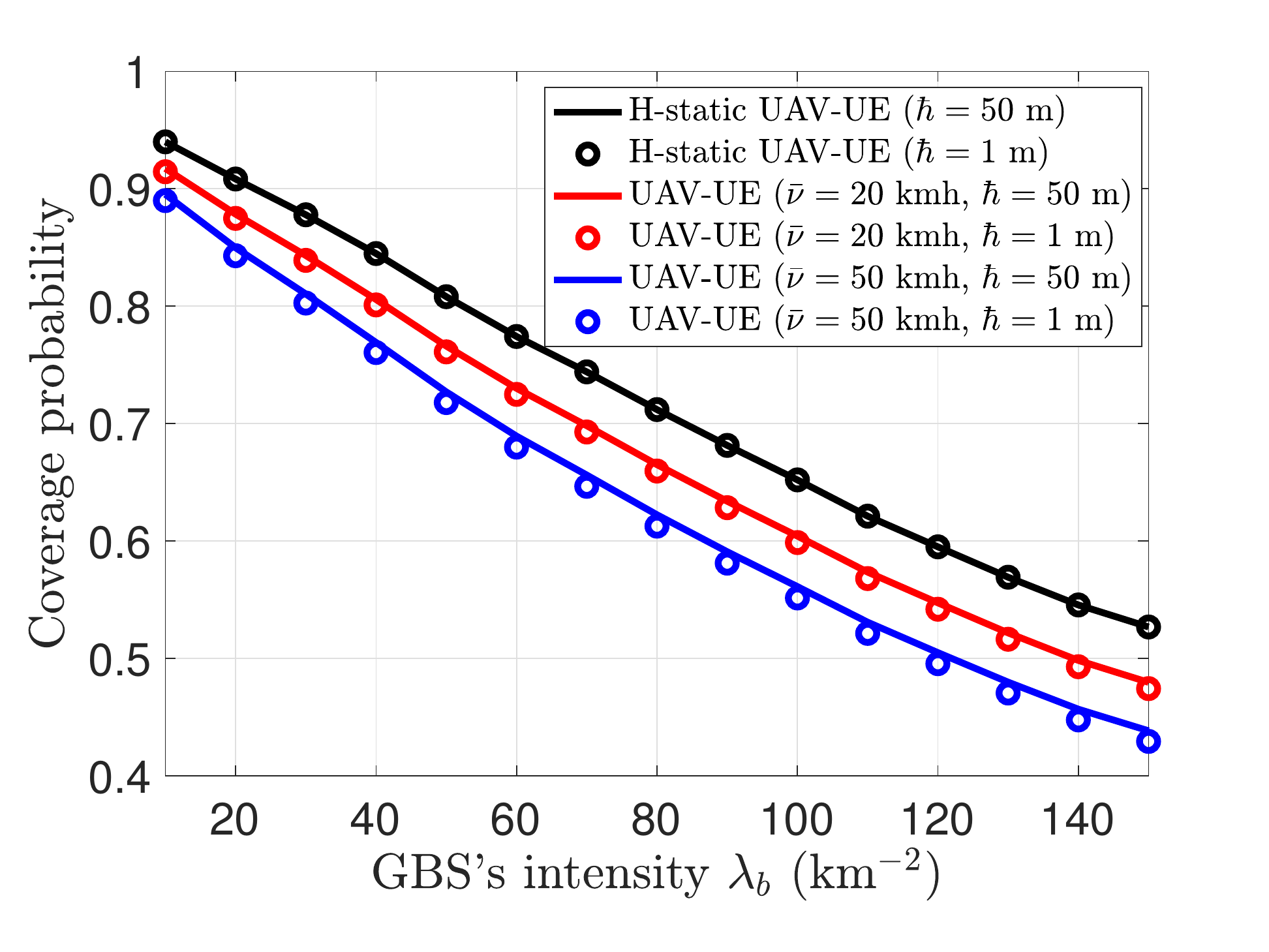}
        \label{HO-rate-3}
    }
    \caption{Effect of the \ac{3D} mobility on the performance of aerial and \acp{UE} when they are associated with their nearest \acp{BS}. In (c), H-static refers to a UAV-UE that only moves in the vertical direction within an altitude difference $\hbar$.}		
    \label{HO-rate-123}					
 \vspace{-0.6 cm}
\end{figure}
Next, we study the impact of \ac{3D} mobility on the performance of UAV-UEs. We further compare the performance of UAV-UEs with their ground counterparts moving horizontally with the same velocity $\bar{\nu}$. In Fig.~\ref{HO-rate-123}, the handover rate and coverage probability of mobile aerial and ground \acp{UE} associated with their nearest \acp{BS}  are investigated. Fig.~\ref{HO-rate-1} plots the handover rate versus $\lambda_b$  at different values of the altitude difference $\hbar$. Fig.~\ref{HO-rate-1} shows that the analytical result in (\ref{ho-rate}) matches the simulation result quite well. As is the case for typical Poisson-Voronoi models, the handover rate grows linearly with the square root of the \ac{BS}'s intensity $\sqrt{\lambda_b}$. Moreover, the handover rate decreases as $\hbar$ increases, which implies that a UAV-UE having frequent up and down motions along its trajectory is susceptible to lower rates of handover. We also note that the handover rate of UAV-UEs is upper bounded by that of \acp{GUE} at $\hbar=0$. 
Fig.~\ref{HO-rate-2} shows the effect of the UAV-UE speed $\bar{\nu}$ on its handover rate.\footnote{Low values of the velocity $\bar{\nu}$ suits the motion of UAV-UEs such as surveillance cameras while higher velocities would be suitable for UAV-UEs such as flying taxis.} Intuitively, the handover rate increases as $\bar{\nu}$ increases since the UAV-UE stays shorter time in the area covered by each \ac{BS}, i.e., shorter sojourn time. 
Finally, Fig.~\ref{HO-rate-3} investigates the effect of mobility on the UAV-UE coverage probability given an arbitrary  handover penalty  $\beta$. Notice that the coverage probability decreases as $\bar{\nu}$ increases since this leads to higher handover probability (penalized by $\beta$). Moreover, the altitude difference $\hbar$ has a marginal  effect on the coverage probability of UAV-UEs. This is attributed to the fact that the increase of the altitude difference $\hbar$ for mobile UAV-UEs while keeping the same average flying altitude $L_{z_{\infty}}$ relatively yields the same average coverage probability. 

In Fig.~\ref{HO-rate-456}, we evaluate the effect of the \ac{3D} mobility on the UAV-UE performance under CoMP transmissions. Fig.~\ref{HO-rate-4} shows that the inter-CoMP handover rate monotonically decreases as $R_c$ increases since the UAV-UE would have a longer sojourn time in each cluster. Moreover, the handover rate is shown to decrease as $\hbar$ increases, i.e., when the UAV-UE has frequent up and down motions along its trajectory. We also note that the handover rate of UAV-UEs is upper bounded by that of \acp{GUE}, with $\hbar=0$. 
Fig.~\ref{HO-rate-5} shows the effect of the UAV-UE velocity $\bar{\nu}$ on the inter-CoMP handover. We note that this handover rate also increases as $\bar{\nu}$ increases since the UAV-UE will have a shorter sojourn time in each cluster. 
Finally, Fig.~\ref{HO-rate-6} shows the effects of the UAV-UE velocity and altitude difference on the UAV-UE coverage probability.  Fig.~\ref{HO-rate-6} shows that the \ac{UB} on the coverage probability, characterized in Theorem \ref{ub-cov-comp-aeue}, slightly decreases as $\bar{\nu}$ increases, which corresponds to a higher handover rate (penalized by $\beta$). This slight decrease is essentially because as the inter-cluster distance becomes larger, the probability of handover decreases and  its effect gradually vanishes. Similar to the nearest association scheme, the altitude difference $\hbar$ has a minor effect on the coverage probability of UAV-UEs. In addition to its impact on the coverage probability, the mobility of UAV-UEs can decrease their throughput, particularly, when accounting for the handover execution time \cite{7827020}. 

\begin{figure}[!t]	
\vspace{-0.8 cm}
    \centering
    \subfigure[$\bar{\nu}=\SI{30}{kmh}$, $\mu=\SI{300}{km^{-2}}$]			
    {\hspace*{-0.5 in}
        \includegraphics[width=2.4in]{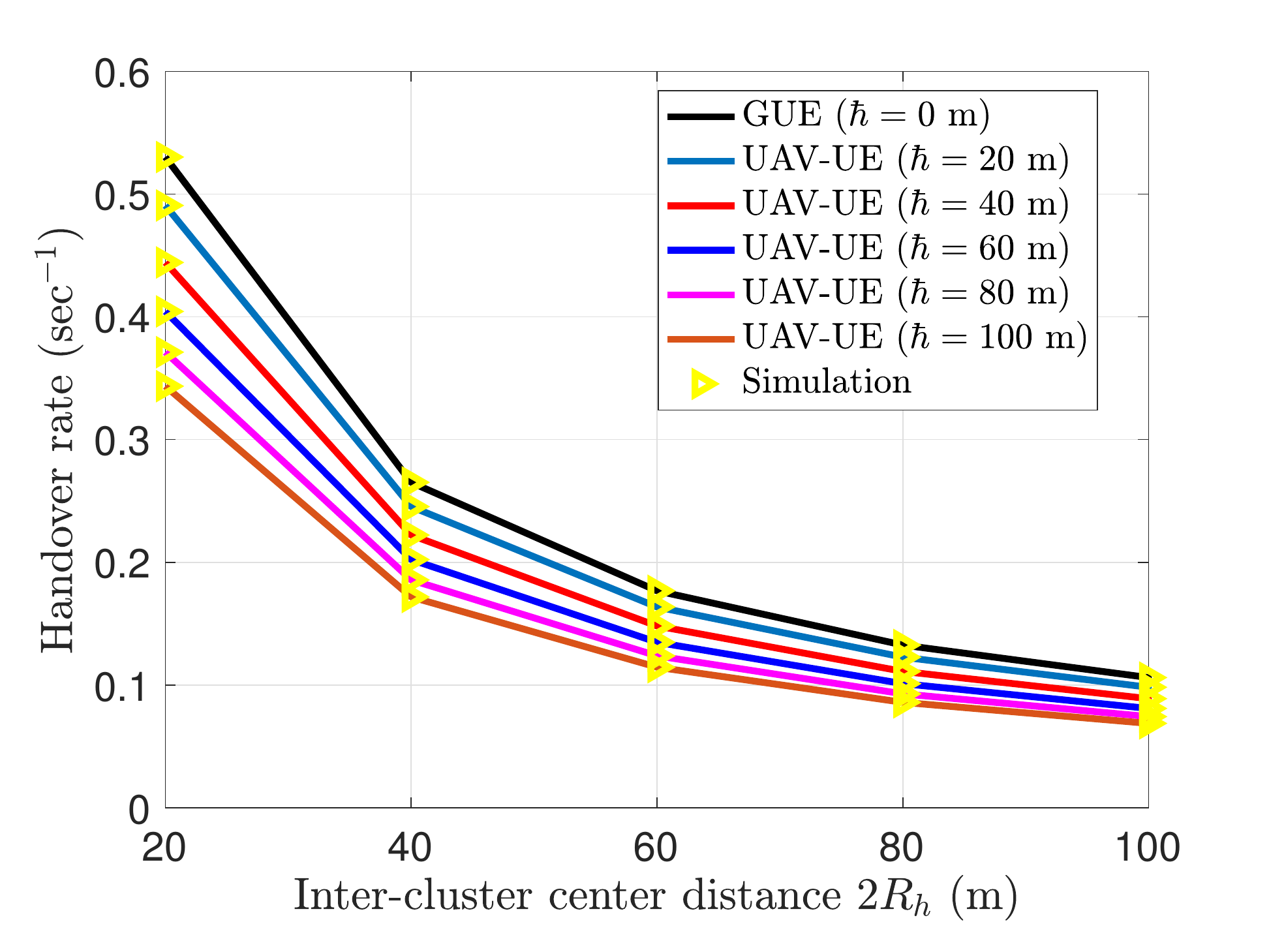}		
        \label{HO-rate-4}
    }
    \subfigure[$\mu=\SI{100}{km^{-2}}$, $\hbar=\SI{50}{m}$]
    {
        \includegraphics[width=2.4in]{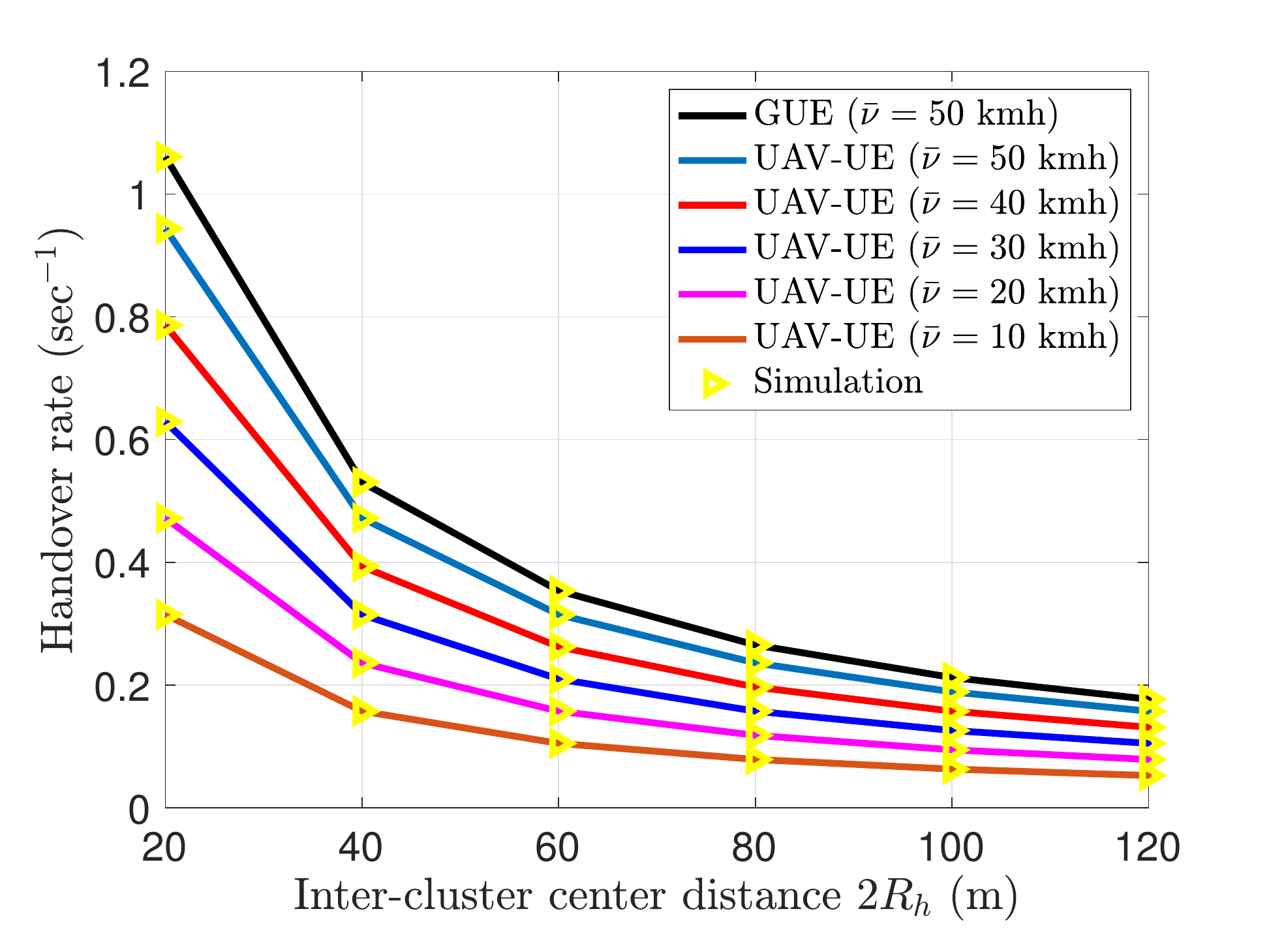}
        \label{HO-rate-5}
    }
    \subfigure[$\mu=\SI{100}{km^{-2}}$, $\beta=1$]
    {
        \includegraphics[width=2.4in]{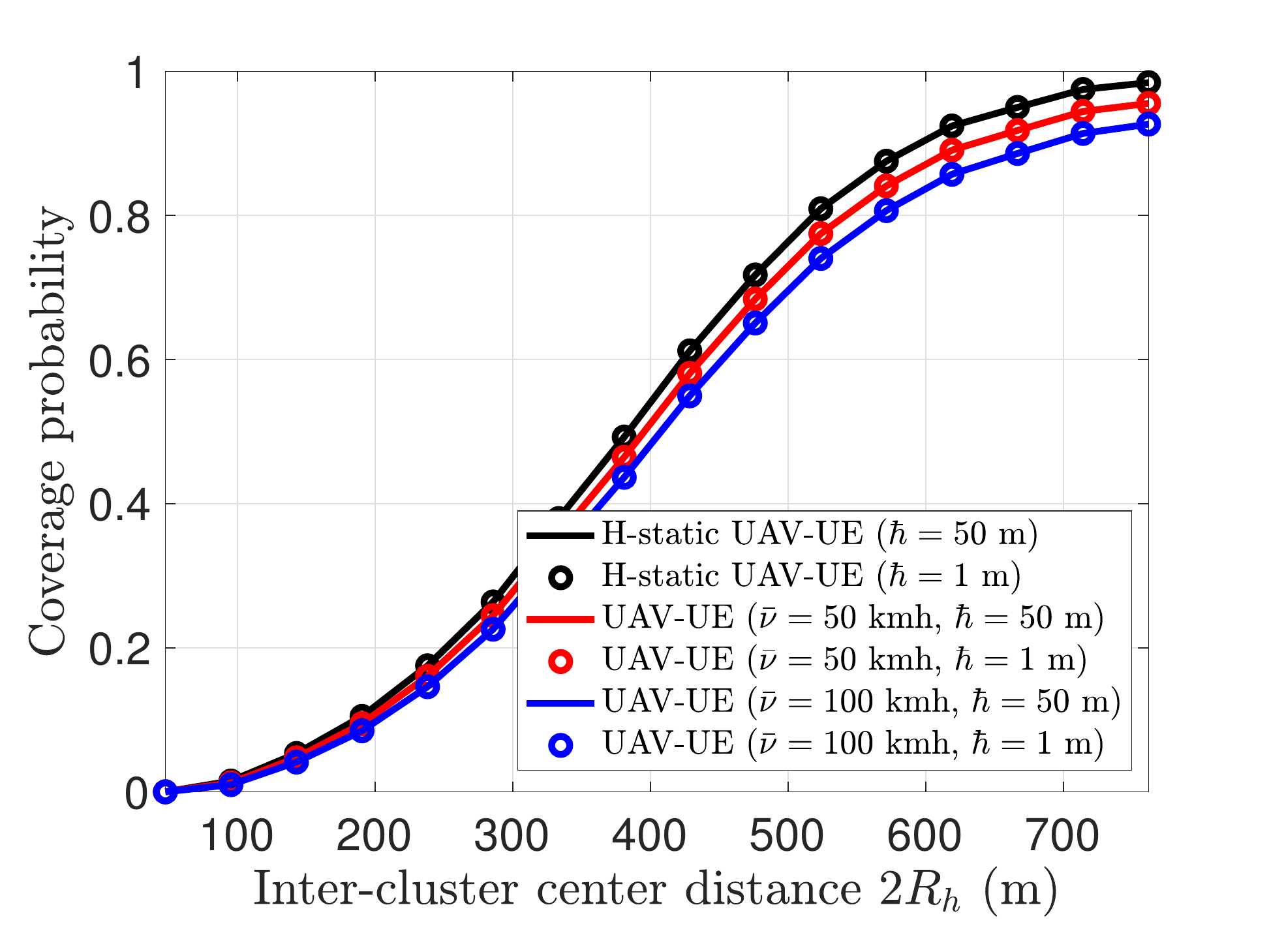}
        \label{HO-rate-6}
    }
    \caption{Effect of the \ac{3D} mobility on the performance of aerial and ground \acp{UE} when they are served via \ac{CoMP} transmission with the inter-cluster center distance set equal to $2R_h$. In (c), H-static refers to an UAV-UE that only moves in the vertical direction within an altitude difference $\hbar$.}		
    \label{HO-rate-456}
     \vspace{-0.6 cm}
\end{figure} 		
\vspace{-0.4 cm}
\section{Conclusion} 	
\vspace{-0.2 cm}
In this paper, we have proposed a novel framework for cooperative transmission that can be leveraged to provide reliable connectivity and omnipresent mobility support for UAV-UEs. In order to analytically characterize the performance of UAV-UEs, we have employed Cauchy's inequality and moment approximation of Gamma \acp{RV} to derive upper and lower bounds on the UAV-UE coverage probability. Moreover, we have developed a novel 3D \ac{RWP} model that allowed us to explore the role of UAV-UEs' mobility in cellular networks, particularly, to quantify the handover rate   and the impact of their mobility on the achievable performance. For both static and mobile UAV-UEs, we have shown allowing \ac{CoMP} transmission significantly improves the achievable coverage probability, e.g., from $28\%$ for the baseline scenario with nearest serving \acp{BS}, to $60\%$ for static UAV-UEs. Furthermore, comparing the performance of UAV-UEs to \acp{GUE}, it is shown that the coverage probability of a UAV-UE is always upper bounded by that of a \ac{GUE} owing to the down-tilted antenna pattern and LoS-dominated interference for UAV-UEs. Our results for the case of mobile UAV-UEs have also revealed that their handover rate and handover probability decrease as the altitude difference increases, i.e., in the case of frequent up and down motions of the UAV-UEs along their trajectory. Moreover, while the altitude difference has a minor effect on the coverage probability of mobile UAV-UEs, their velocity noticeably degrades their coverage probability.

\begin{appendices}
\vspace{-0.6 cm}
\section{Proof of Theorem \ref{ch5:cov-prob}}
\label{ch5:theorem1}
We proceed to obtain an \ac{UB} on the coverage probability as follows:
\begin{align}
&\Pb\Big(\frac{\kappa P_tJ}{I_{\rm out}}>\vartheta\Big) =  \Pb\Big(\kappa P_tJ>\vartheta 
I_{\rm out}\Big) 
= \Eb_{I_{\rm out}} \Bigg[  \Pb\Big(\kappa P_tJ>\vartheta I_{\rm out}\Big) \Bigg]
 \nonumber \\
  \label{prob-y2} %
 &\overset{(a)}{\approx} \Eb_{I_{\rm out}} \Big[ \sum_{i=0}^{K-1}  \frac{(\vartheta/\kappa P_t\theta)^i}{i!}I_{\rm out}^i
{\rm exp}\Big(-\frac{\vartheta}{\kappa P_t\theta}I_{\rm out}\Big) \Big]
\overset{(b)}{=} \Eb_{I_{\rm out}} \Big[ \sum_{i=0}^{K-1}  \frac{(-\varpi)^i}{i!}
\frac{d^i}{d\varpi^i} \Lc_{I_{\rm out}|\boldsymbol{r}_{\kappa}}(\varpi)  \Big],
\end{align}
where (a) follows from the \ac{PDF} of Gamma \ac{RV} $J$ with parameters $\theta$ given in (\ref{equiv-gamma}), and  $K=m_l\kappa$; (b) follows from $\varpi = \frac{\vartheta}{\kappa P_t\theta}$, along with the Laplace transform of interference, i.e., the \ac{RV} $I_{\rm out}$. Next, we derive the Laplace transform of interference:
\begin{align}
\Lc_{I_{\rm out}|\boldsymbol{r}_{\kappa}}(\varpi)  &= \Eb_{I_{\rm out}} 
\Big[ e^{-\varpi I_{\rm out}}\Big]
= \mathbb{E} \Bigg[
 e^{-\varpi\sum_{j \in \Phi_{b} \setminus \mathcal{B}(0, R_c) } \varpi \chi_j  P(u_j)^2}  
  \Bigg] 
= \mathbb{E}_{\Phi_b,\chi_j} 			
  \Bigg[  
  \prod_{j \in \Phi_{b} \setminus \mathcal{B}(0, R_c)}  e^{-\varpi \chi_j  P(u_j)^2}
  \Bigg] 
  \nonumber \\ 
  \label{before-chi}
        &\overset{(a)}{=} 
 {\rm exp}\Bigg(-2\pi \lambda_b\int_{\nu=R_c}^{\infty}\Big(1 - \mathbb{E}_{\chi} 
       e^{-\varpi \chi  P(\nu)^2}
        \Big)\nu\dd{\nu}\Bigg) 
\\ 
\label{LT_c1}      
  &\overset{(b)}{=} {\rm exp}\Bigg(-2\pi \lambda_b\int_{\nu=R_c}^{\infty}\Big(1 -
        \delta_l\Pb_{l}(\nu) - \delta_n\Pb_{n}(\nu) 
        \Big)\nu\dd{\nu}\Bigg)
 \overset{(c)}{=}e^{\Omega(\varpi)_{|\boldsymbol{r}_{\kappa}}} ,
\end{align}
where $\delta_l=\Big(1 + \frac{\varpi P_l(\nu)^2 }{m_l} \Big)^{-m_l}$, and $\delta_n=\Big(1 + \frac{\varpi P_n(\nu)^2 }{m_n} \Big)^{-m_n}$; (a) follows from the probability generating functional (PGFL) of \ac{PPP} along with Cartesian to polar coordinates conversion \cite{haenggi2012stochastic}, (b) follows from the moments of the Gamma \ac{RV} $\chi\sim\Gamma(m_v,1/m_v)$ modeling the interfering channel gain, and (c) follows from 
$\Omega(\varpi)_{|\boldsymbol{r}_{\kappa}} = -2\pi \lambda_b\int_{\nu=R_c}^{\infty}\Big(1 -\delta_l\Pb_{l}(\nu) - \delta_n\Pb_{n}(\nu) \Big)\nu\dd{\nu}$. 
In \cite{8490204}, it is proved that  $  \sum_{i=0}^{K-1}  \frac{(-\varpi)^i}{i!} \Lc_{I|\boldsymbol{r}_{\kappa}}^{(i)}(\varpi)  = \sum_{i=0}^{K-1}  p_i$, where $p_i =\frac{(-\varpi)^i}{i!} \Lc_{I|\boldsymbol{r}_{\kappa}}^{(i)}(\varpi)$ can be computed from the recursive relation:
$p_i = \sum_{l=0}^{i-1}\frac{i-l}{i} p_l t_{i-l}$, with $t_{i-1} =\frac{(-\varpi)^{i-1}}{(i-1)!} \Omega^{(i-1)}(\varpi)$, and $\Omega^{(i-1)}(\varpi)= \frac{d^{i-1}}{d \varpi^{i-1}} \Omega(\varpi)_{|\boldsymbol{r}_{\kappa}}$. After some algebraic manipulation as in \cite{8490204}, $\Pb_{{\rm c}|\boldsymbol{r}}^l$ can be expressed in a compact form $\Pb_{{\rm c}|\boldsymbol{r}}^l = \lVert e^{\boldsymbol{T}_{k}}\rVert_1$, where  $\lVert.\rVert_1$ represents the induced $\ell_1$ norm, and $\boldsymbol{T}_{K}$ is the lower triangular Toeplitz matrix whose entries are $t_i$, $i=|\{1,\dots,K\}$. This completes the proof.

\vspace{-0.3 cm}
\section{Proof of Corollary \ref{ch5:cov-prob-lb}}
\vspace{-0.3 cm}
\label{ch5:theorem2}
We first write the exponent power of (\ref{before-chi}) as
\begin{align}
\Omega(\varpi)_{|\boldsymbol{r}_{\kappa}} &= 
        -2\pi  \lambda_b \mathbb{E}_{\chi}  \int_{\nu=R_c}^{\infty}
    \big(1 -       e^{-\varpi \chi P(\nu)^2} \big)\nu\dd{\nu}
 \overset{(a)}{=}  -2\pi  \lambda_b \mathbb{E}_{\chi}  \int_{\nu=R_c}^{\infty}
    \big(1 -       e^{-\varpi L \chi (\nu^2 + h^2)^{-\alpha_l/2}} \big)\nu\dd{\nu},
    \nonumber 
\end{align}   
where (a) follows from $P(\nu)^2= P_t A_l G_s \big(\nu^2 + h^2\big)^{-\alpha_l/2}$ and substituting $L=P_tA_l G_s$. Let $z = \nu^2+h^2$, and $\dd{z} = 2\nu\dd{\nu}$, we hence get 
\begin{align}
\label{T_1}
\Omega(\varpi)_{|\boldsymbol{r}_{\kappa}} &= -\pi  \lambda_b \mathbb{E}_{\chi}  \int_{z=R_c^2+h^2}^{\infty}
    \big(1 -       e^{-\varpi L \chi z^{-\alpha_l/2}} \big)\dd{z}.
\end{align}
By changing the variables $y=z^{-\alpha_l/2}$, $z = y^{-2/\alpha_l}$, and $\dd{z} = \frac{-2}{\alpha_l}y^{\frac{-2}{\alpha_l} - 1} \dd{y}$, and solving the reproduced integrals as in \cite{6042301}, we get
 \begin{align}
\Omega(\varpi)_{|\boldsymbol{r}_{\kappa}} & \overset{}{=} 
\pi  \lambda_b R_{ch}^2 - 
\delta_l \pi  \lambda_b  (\varpi L)^{\delta_l} \mathbb{E}_{\chi} \Big[ \chi^{\delta_l} \gamma(-\delta_l,\varpi L \chi R_{ch}^{-\alpha_l/2}) \Big]
\nonumber \\
\label{gamma-incomp}
& \overset{(a)}{=}
\pi  \lambda_b R_{ch}^2 - 
\delta_l \pi  \lambda_b  (\varpi L)^{\delta_l} \mathbb{E}_{\chi} \Big[ \chi^{\delta_l} 
\epsilon  {}_1 F_1(-\delta_l;1-\delta_l;-\varpi L \chi R_{ch}^{-\alpha_l/2}) 
\Big],
\end{align}
 where $R_{ch}^2 = R_c^2+h^2$, $\delta_l = \frac{2}{\alpha_l}$, $\epsilon= \frac{(\varpi L \chi)^{-\delta_l} R_{ch}^2}{\delta_l}$, and $\gamma(s,x) = \int_{0}^{x} t^{s-1} e^{-t}$ is the lower incomplete Gamma function; (a) follows from ${}_1 F_1(s;s+1;-x) = \frac{s}{x^s} \gamma(s,x)$, where ${}_1 F_1(\cdot;\cdot;\cdot)$ is  the confluent hypergeometric function of the first kind. By rearranging (\ref{gamma-incomp}), we can obtain     
 \begin{align}
\Omega(\varpi)_{|\boldsymbol{r}_{\kappa}} & = \pi  \lambda_b R_{ch}^2 \Bigg(1 - \mathbb{E}_{\chi} \Big[{}_1 F_1(-\delta_l;1-\delta_l;-\varpi L \chi R_{ch}^{-\alpha_l/2}) \Big] \Bigg). 
\end{align} 				
 
 The non-zero terms in $\boldsymbol{T}_{k}$ can be then determined from: 
\begin{align}
 t_{k} &=\frac{(-\varpi)^{k}}{k!} \Omega(\varpi)_{|\boldsymbol{r}_{\kappa}}^{(k)}=
\pi  \lambda_b R_{ch}^2
\frac{(-\varpi)^{k}}{k!}  \frac{d^k}{d \varpi^k}
 \Bigg[ 1 -
  \mathbb{E}_{\chi} \Big[ {}_1 F_1(-\delta_l;1-\delta_l;-\varpi L \chi R_{ch}^{-\alpha_l/2}) \Big] 
 \Bigg]
 \nonumber \\
 &= 
\pi  \lambda_b R_{ch}^2
 \mathbb{E}_{\chi} \Bigg[ \frac{(-\varpi)^{k}}{k!} (-L \chi R_{ch}^{-\alpha_l/2})^{k}  \frac{d^k}{d (-\varpi L \chi R_{ch}^{-\alpha_l/2})^k} 
 \Big[ 1 -  {}_1 F_1(-\delta_l;1-\delta_l;-\varpi L \chi R_{ch}^{-\alpha_l/2}) \Big] 
 \Bigg]
  \nonumber \\
  &= 
\pi  \lambda_b R_{ch}^2
 \mathbb{E}_{\chi} \Bigg[
\frac{(\varpi L \chi R_{ch}^{-\alpha_l/2} )^{k}}{k!} \frac{d^k}{d (-\varpi L \chi R_{ch}^{-\alpha_l/2})^k} 
 \Big[ 1 -
  {}_1 F_1(-\delta_l;1-\delta_l;-\varpi L \chi R_{ch}^{-\alpha_l/2}) \Big] 
 \Bigg]
  \nonumber \\
  &\overset{(b)}{=} 
\pi  \lambda_b R_{ch}^2
\Bigg(\textbf{1}\{k=0\} -  \frac{(\varpi L R_{ch}^{-\alpha_l/2})^{k}}{\Gamma(k+1)} 
\frac{\delta_l}{(\delta_l -k)} 
\mathbb{E}_{\chi} \Big[ \chi^k
 {}_1 F_1(k-\delta_l;k+1-\delta_l;-\varpi L \chi R_{ch}^{-\alpha_l/2}) \Big] \Bigg),
\nonumber
 \end{align}
where (b) follows from the derivatives for hypergeometric functions: $\frac{d^k}{d z^k}  {}_1 F_1(a;b;z) = \frac{\prod_{p=0}^{k-1}(a+p)}{\prod_{p=0}^{k-1}(b+p)} \times  {}_1 F_1(a+k;b+k;z)$. By letting $a_k = (\varpi L R_{ch}^{-\alpha_l/2})^{k}$, we get 
 \begin{align}
 t_{k} &= \pi  \lambda_b R_{ch}^2
\Bigg(\textbf{1}\{k=0\} -  \frac{\delta_l a_k}{(\delta_l -k)\Gamma(k+1)} 
\mathbb{E}_{\chi} \Big[ \chi^k
 {}_1 F_1(k-\delta_l;k+1-\delta_l;-\varpi L \chi R_{ch}^{-\alpha_l/2}) \Big] \Bigg). 
 \nonumber
\end{align}
Lastly, to get a closed-form expression for $t_{k}$, we average over $\chi\sim \Gamma(m_l,1/m_l)$ as follows: 
\begin{align}
\label{closed-form0}
   t_{k} &= \pi  \lambda_b R_{ch}^2  \Bigg( \textbf{1}\{k=0\} - 
b_k 		
\int_{\chi=0}^{\infty}   \chi^{k+m_l-1} e^{-m_l\chi}  {}_1 F_1(k-\delta_l;k+1-\delta_l;-\varpi L R_{ch}^{-\alpha_l/2} \chi) \dd{\chi} \Bigg)
   \nonumber \\
 &= \pi  \lambda_b R_{ch}^2  \Bigg( \textbf{1}\{k=0\} - 
b_k 	\Gamma(k+m_l) m_l^{-(k+m_l)}  {}_2 F_1(k+m_l,k-\delta_l;k+1-\delta_l;-\varpi L R_{ch}^{-\alpha_l/2} m_l)	 \Bigg)
   \nonumber \\
 & \overset{(c)}{=} \pi  \lambda_b R_{ch}^2 \Big(  \textbf{1}\{k=0\} - c_k 
  {}_2 F_1(k+m_l,k-\delta_l;k+1-\delta_l;-\varpi L R_{ch}^{-\alpha_l/2} m_l)  \Big),
 \end{align}
where $b_k = \frac{\delta_l a_k m_l^{m_l} }{(\delta_l -k)\Gamma(k+1) \Gamma(m_l)}$, $c_k =  \frac{m_l^{m_l}}{\Gamma(m_l)} b_k=\frac{\delta_l a_k \Gamma(k+m_l) m_l^{-k} }{(\delta_l -k)\Gamma(k+1) \Gamma(m_l)}$, and (c) follows from solving the integral in (\ref{closed-form0}) \cite[Eq. 7.525]{gradshteyn2014table} and rearranging the right hand side. This completes the proof.
\vspace{-0.3 cm}
\section{Proof of Lemma \ref{near-HO}}
\label{proof:near-HO}
When $\varphi_n=0$, the conditional probability of handover can be expressed as
\begin{align}
\Pb(H|r_0) &=  1 -  \Eb_{\rho_n,Z_n,Z_{n-1}} 
 \Big[  e^{-\pi\lambda_b\big((\bar{\nu}{\rm cos}(\varphi_n))^2 + 2r_0\bar{\nu}{\rm cos}(\varphi_n)\big)}  
\Big]
  \\
 \label{Jensen-LB}
 &\overset{(a)}{\leq}  1 - 
\underbrace{
 e^{-\pi\lambda_b \Eb_{\rho_n,Z_n,Z_{n-1}} \big(\frac{2r_0\bar{\nu} \varrho_n}{\sqrt{\varrho_n^2 + (z_n-z_{n-1})^2}} 
 + 
 (\frac{\bar{\nu}\varrho_n}{\sqrt{\varrho_n^2 + (z_n-z_{n-1})^2}} )^2\big) }}_{\Pb(\bar{H}|r_0)} ,
 \end{align}
where (a) follows from Jensen's inequality, with $\Pb(\bar{H}|r_0)$ being an \ac{LB} on the probability of no handover conditioned on $r_0$. We obtain $\Pb(\bar{H}|r_0)$ in (\ref{Jensen-LB}) as follows:
\begin{align}
 &\Pb(\bar{H}|r_0)= e^{-\pi\lambda_b 
\Eb_{\rho_n,Z_n,Z_{n-1}} \big(
 \frac{2r_0\bar{\nu} \varrho_n}{\sqrt{\varrho_n^2 + (z_n-z_{n-1})^2}} +
 (\frac{\bar{\nu}\varrho_n}{\sqrt{\varrho_n^2 + (z_n-z_{n-1})^2}} )^2  \big) }     
   \nonumber \\
   &\overset{(b)}{=} e^{-\pi\lambda_b \Eb_{Z_n,Z_{n-1}}
 \big[ r_0\sqrt{\pi } \bar{\nu} {}_1 F_1 \left(\frac{1}{2};0;\pi  (z_n-z_{n-1})^2 \mu \right)  			
+ \pi \mu \bar{\nu}^2 \big(   
 \frac{1}{\pi  \mu }  -(z_{n}-z_{n-1})^2 e^{\pi  \mu  (z_{n}-z_{n-1})^2} 
 \Gamma \left(0,\pi  (z_{n}-z_{n-1})^2 \mu \right)  
   \big) \big]} ,
   \nonumber 
\end{align}
where (b) follows from averaging over $\rho_n$ whose \ac{PDF} is $f_{\rho_n}(\varrho_n)$. By changing the variables: $p=z_n-z_{n-1}$, with $f_P(p) = \frac{\hbar-|p|}{\hbar^2}, \forall -\hbar\leq p\leq\hbar$, we get
\begin{align}
\label{first-int}
 \Pb(\bar{H}|r_0)  &\overset{}{=} e^{-\frac{\pi\lambda_b r_0\sqrt{\pi } \bar{\nu}}{\hbar^2}
 \int_{-\hbar}^{\hbar}  (\hbar-|p|){}_1 F_1 \left(\frac{1}{2};0;\pi  p^2 \mu \right) \dd{p} }
 e^{-\frac{\pi\lambda_b \pi \mu \bar{\nu}^2}{\hbar^2} \int_{-\hbar}^{\hbar} (\hbar-|p|) \big(   
 \frac{1}{\pi  \mu }  -p^2 e^{\pi  \mu  p^2} 
 \Gamma \left(0,\pi  p^2 \mu \right)  
   \big)\dd{p}} 
 \end{align}
 \begin{align}
 \label{mejer-exp}
&\overset{(c)}{=} e^{-\frac{\pi\lambda_b r_0 \sqrt{\pi }  \bar{\nu}}{\hbar^2}
 \Bigg(
    2
\frac{\pi  \hbar^2 \mu  G_{2,3}^{2,2}\Big(\hbar^2 \pi  \mu \big|
\begin{array}{c}
 \frac{1}{2},\frac{1}{2} \\
 0,1,-\frac{1}{2} \\
\end{array}
\Big)-G_{2,3}^{2,2}\Big(\hbar^2 \pi  \mu \big|
\begin{array}{c}
 1,\frac{3}{2} \\
 1,2,0 \\ 
\end{array}
\Big)}{\pi^2 \mu } 
     \Bigg) }									
     e^{-\pi\lambda_b \zeta(\mu,\hbar)  },			
     \end{align}
where $(c)$ follows from solving the left integral of (\ref{first-int}) \cite[Section 7.8]{gradshteyn2014table}, and the substitution  
\begin{align}
\zeta(\mu,\hbar) &=  \frac{\pi \mu \bar{\nu}^2}{\hbar^2} \int_{-\hbar}^{\hbar} (\hbar-|p|) \big(   
 \frac{1}{\pi  \mu }  -p^2 e^{\pi  \mu  p^2} 
 \Gamma \left(0,\pi  p^2 \mu \right)  
   \big)\dd{p}
\nonumber \\
&=  \bar{\nu}^2   - \frac{\pi \mu  \bar{\nu}^2}{\hbar^2}  \int_{-\hbar}^{\hbar}(\hbar-|p|)  p^2 e^{\pi  \mu  p^2}  \Gamma \left(0,\pi  p^2 \mu \right)  \dd{p} 
\nonumber \\
\label{zeta-int}
&\overset{(d)}{=}  \bar{\nu}^2   - \frac{2\pi \mu  \bar{\nu}^2}{\hbar^2}  \int_{0}^{\hbar}(\hbar-p)  p^2 e^{\pi  \mu  p^2}  \Gamma \left(0,\pi  p^2 \mu \right)    \dd{p}. 
\end{align} 
where (d) follows from the symmetry of the integrand. From (\ref{mejer-exp}) and (\ref{zeta-int}), with the fact that $\Pb(H|r_0) = 1 -\Pb(\bar{H}|r_0)$, the proof is completed. 
\vspace{-0.5 cm}
\section{Proof of Proposition \ref{inter-CoMP-HO}}
\label{proof:inter-CoMP-HO}
Following the Buffon's needle approach for hexagonal cells \cite{tabassum2018mobility}, we have
\begin{align}
\label{E-of_N}
\Eb[N]=\frac{4\sqrt{3}}{3\pi l} \Eb[V_{\rho}] \Eb[T]  
=\frac{2}{\pi R_h} \Eb[V_{\rho}] \Eb[T], 
\end{align}			
where $\Eb[V_{\rho}]$ represents the average the horizontal velocity of the UAV-UE. Given the constant velocity assumption, $\Eb[V_{\rho}] = \bar{\nu} \Eb[{\rm cos}(\varphi_n)]$, where $\varphi_n= {\rm arccos}\big(\frac{\varrho_n}{\sqrt{\varrho_n^2 + (z_n-z_{n-1})^2}}\big)$. We hence have 
 \begin{align}							
 \Eb[V_{\rho}] &=  \Eb_{\rho_n,Z_n,Z_{n-1}}\Big[\frac{\bar{\nu} \varrho_n}{\sqrt{\varrho_n^2 + (z_n-z_{n-1})^2}}\Big] 
\overset{(a)}{=} \frac{\sqrt{\pi } \bar{\nu} }{2} 
\Eb_{Z_n,Z_{n-1}}
\Big[{}_1 F_1\big(\frac{1}{2};0;\pi  (z_n-z_{n-1})^2 \mu \big)\Big] 
\end{align}
where (a) follows from averaging over the RV $\rho_n$. By proceeding similar to Appendix \ref{proof:near-HO} to obtain $\Eb[V_{\rho}]$, the handover rate can be obtained from $H= \frac{\Eb[N]}{\Eb[T]}$. This  completes the proof.
\end{appendices}

\vspace{-0.6 cm}
\bibliographystyle{IEEEtran}
\bibliography{bibliography}
\end{document}